\documentclass[reqno, 11pt]{amsart}

\usepackage[foot]{amsaddr}
\synctex=1
\usepackage{fullpage}  
\usepackage{calc,graphicx,amsfonts,amsthm,amscd,epsfig,psfrag,amsmath,amssymb,enumerate,dsfont}
\usepackage{subfig} 
\usepackage{pdfsync}
\usepackage{stmaryrd} 
\usepackage[numeric,initials,nobysame]{amsrefs}

\usepackage[colorlinks=true, pdfstartview=FitV, linkcolor=blue, citecolor=blue, urlcolor=blue,pagebackref=false]{hyperref}
\usepackage{appendix}

\usepackage[showlabels,sections,floats,textmath,displaymath]{}
  \setlength{\oddsidemargin}{6mm}
  \setlength{\evensidemargin}{6mm}
  \setlength{\textwidth}{145mm}
  \setlength{\headheight}{5mm}
  \setlength{\headsep}{12mm}
  \setlength{\topmargin}{5mm}
  \setlength{\textheight}{210mm}
\setcounter{secnumdepth}{3}

\setcounter{tocdepth}{1}
\pagestyle{headings}
\newcommand{\ie}{\hbox{\it i.e.\ }}

\reversemarginpar
\newlength\fullwidth
\setlength\fullwidth{\textwidth+2\marginparsep}

\numberwithin{equation}{section}

\DeclareMathSymbol{\leqslant}{\mathalpha}{AMSa}{"36} 
\DeclareMathSymbol{\geqslant}{\mathalpha}{AMSa}{"3E} 
\DeclareMathSymbol{\eset}{\mathalpha}{AMSb}{"3F}     
\renewcommand{\leq}{\;\leqslant\;}                   
\renewcommand{\geq}{\;\geqslant\;}                   

\renewcommand{\b}{\beta}

\newcommand{\1}{\mathds{1}}

\newcommand{\var}{\operatorname{Var}}



\newcommand{\D}{\Delta}

\renewcommand{\b}{\beta}
\renewcommand{\l}{\lambda}
\renewcommand{\L}{\Lambda}

\renewcommand{\l}{\lambda}
\renewcommand{\a}{\alpha}
\renewcommand{\d}{\delta}
\renewcommand{\t}{\tau}

\newcommand{\g}{\gamma}
\newcommand{\G}{\Gamma}

\newcommand{\e}{\varepsilon}

\renewcommand{\O}{\Omega}

\newcommand{\gap}{{\rm gap}}
\newcommand{\tc}{\thinspace |\thinspace}


\newtheorem{theorem}{Theorem}[section]
\newtheorem{theoremA}{Theorem}
\newtheorem{lemma}[theorem]{Lemma}
\newtheorem{proposition}[theorem]{Proposition}
\newtheorem{corollary}[theorem]{Corollary}
\newtheorem{remark}[theorem]{Remark}

\newtheorem{claim}[theorem]{Claim}
\newtheorem{definition}[theorem]{Definition}


\newcommand{\cA}{\ensuremath{\mathcal A}}

\newcommand{\cD}{\ensuremath{\mathcal D}}
\newcommand{\cE}{\ensuremath{\mathcal E}}

\newcommand{\cG}{\ensuremath{\mathcal G}}

\newcommand{\cL}{\ensuremath{\mathcal L}}

\newcommand{\cR}{\ensuremath{\mathcal R}}


\newcommand{\bbD}{{\ensuremath{\mathbb D}} }
\newcommand{\bbE}{{\ensuremath{\mathbb E}} }

\newcommand{\bbL}{{\ensuremath{\mathbb L}} }

\newcommand{\bbN}{{\ensuremath{\mathbb N}} }

\newcommand{\bbP}{{\ensuremath{\mathbb P}} }

\newcommand{\bbR}{{\ensuremath{\mathbb R}} }

\newcommand{\bbZ}{{\ensuremath{\mathbb Z}} }


%
%
\let\a=\alpha \let\b=\beta   \let\d=\delta  \let\e=\varepsilon
 \let\g=\gamma \let\h=\eta      \let\l=\lambda
          \let\p=\pi  
  \let\s=\sigma \let\t=\tau   
  
\let\D=\Delta   \let\G=\Gamma  \let\L=\Lambda 
\let\O=\Omega      

%


\newcommand{\sst}{{\,:\,}} 
\newcommand{\dyn}[3]{\Phi_{#2,#3}(#1)}
\renewcommand{\div}{{\rm{div}}\,}

\renewcommand{\le}{\leq}

\usepackage[normalem]{ulem}

\begin{document}

\author[P. Chleboun]{P. Chleboun$^1$}
 \address{$^1$Dip. Matematica,  Universit\`a  Roma Tre, Largo S.L.Murialdo 00146, Roma, Italy.}
\email{paul@chleboun.co.uk}

\author[A. Faggionato]{A. Faggionato$^2$}
\address{$^2$Dip. Matematica ``G. Castelnuovo", Universit\`a ``La
  Sapienza''. P.le Aldo Moro  2, 00185  Roma, Italy. }
\email{faggiona@mat.uniroma1.it}

 \author[F. Martinelli]{F. Martinelli$^1$}
\email{martin@mat.uniroma3.it}

\thanks{\sl Work supported by the European Research Council through the ``Advanced
Grant'' PTRELSS 228032.}

\title{Time Scale separation and dynamic heterogeneity  in the low
  temperature East model }

\begin{abstract}
We consider the non-equilibrium
dynamics of the East model, a linear chain of $0$-$1$ spins evolving under a
simple Glauber dynamics in the
presence of a kinetic
constraint which forbids flips of those spins whose left neighbor is $1$. We focus on the glassy
effects caused by the kinetic constraint as
$q\downarrow 0$, where $q$ is the 
equilibrium density of the $0$'s. In the physical literature this limit is equivalent to the zero
temperature limit. We first prove that, for any given $L=O(1/q)$, the divergence as $q\downarrow 0$ of three basic
characteristic time scales of the East process of length $L$ is the
same. Then we examine the problem of dynamic
heterogeneity, i.e. non-trivial spatio-temporal fluctuations of the
local relaxation to equilibrium, one of the central aspects of glassy
dynamics. For any mesoscopic length scale $L=O(q^{-\g})$, $\g<1$, we
show that the characteristic time scale of two East processes of length $L$
and $\l L$ respectively are
indeed separated by a factor $q^{-\a}$, $\a=\alpha(\g)>0$, provided
that $\l\ge 2$ is large enough (independent
of $q$, $\l=2$ for $\g<1/2$). In particular, the
evolution of 
mesoscopic \emph{domains}, {\it i.e.} maximal blocks of the form $111..10$,
occurs on a time scale which depends sharply on the size of the domain, a
clear signature of dynamic heterogeneity. 
A key result for this part is
a very precise computation of the relaxation time of the chain as a
function of $(q,L)$, well
beyond the current knowledge, which uses induction on length scales on one
hand and a novel \emph{algorithmic} lower bound on the other.      
Finally we show that no form
of time scale separation occurs for $\g=1$, {\it
  i.e.} at the equilibrium scale
$L=1/q$, contrary to what was assumed in the physical
literature based on numerical simulations. 
\end{abstract}

\maketitle
\setcounter{tocdepth}{1}
\tableofcontents

\section{Introduction}
Kinetically constrained spin models (KCMs) are interacting
$0$-$1$ particle systems on general graphs evolving with a simple Glauber dynamics which can be
described as follows. At every vertex $x$ the system tries to update the occupancy variable (or spin) at
$x$ to the value $1$ or $0$ with probability $1-q$ and $q$
respectively. However the
update at $x$ is accepted only if a certain local contraint  is verified by
the \emph{current} spin configuration. That explains the wording
``kinetically constrained''. The constraint at site $x$ is always assumed
not to depend on the spin at $x$ and therefore the product
Bernoulli($1-q$) measure $\pi$ is the reversible measure.  Typical
constraints require, for example, that a certain number of the
neighboring spins are in state $0$, or more restrictive, that certain
\emph{preassigned} neighboring spins are in state $0$ (e.g. the
children of $x$ when the underlying graph is a rooted tree). 

The main interest in the physical literature for KCMs (see e.g. 
\cites{Berthier,Ritort,SE2,GarrahanSollichToninelli,SE1,Cancrini:2006uu}) stems from the
fact  that they display many key dynamical
features of real glassy materials: ergodicity breaking
transition at some critical value $q_c$, huge relaxation time for $q$
close to $q_c$, dynamic heterogeneity (non-trivial spatio-temporal fluctuations of the
local relaxation to equilibrium) and aging just to mention a
few. Mathematically, despite of their simple definition, KCMs pose
very challenging and interesting problems because of the hardness of
the constraint, with ramifications towards bootstrap percolation
problems \cite{Spiral}, combinatorics \cites{CDG,Valiant:2004cb},
coalescence processes \cites{FMRT-cmp,FMRT} and  random walks on
triangular matrices \cite{Peres-Sly}. Some of the mathematical tools
developed for the analysis of the relaxation process of KCMs proved to
be quite powerful also in other contexts such as card shuffling problems
\cite{Bhatnagar:2007tr} and random evolution of surfaces \cite{PietroCaputo:2012vl}.

In this paper we analyze one of the most popular KCMs namely, the East model (see e.g. \cites{JACKLE,SE1,SE2} and
\cites{Aldous,CMRT,East-Rassegna}),  which consists of a linear chain (finite or infinite) of $0$-$1$
spins evolving under a simple Glauber dynamics in the  presence of the
kinetic constraint which forbids flips of those spins whose left
neighbor is in state $1$. To avoid trivial irreducibility issues when the chain is either finite or semi-infinite, one
always assumes that the leftmost spin is unconstrained.

It is known that the relaxation time (cf.
\eqref{rilasso}) of the East model is uniformly bounded in the length $L$
of the chain
\cites{Aldous,CMRT} and that, because of the constraints, it
diverges very rapidly as $q\downarrow
0$   (cf. Prop. \ref{prem} below). The mixing time (cf. \eqref{mescolo})
instead diverges linearly in $L$. It is also known \cite{CMST} that,
starting from a large class of initial laws (e.g. a non-trivial Bernoulli$(1-q')$
product measure, $q'\neq q$), the expected value at time $t$ of
a local function $f$ converges exponentially fast to $\pi(f)$, the mean of $f$
w.r.t. the reversible measure $\pi$.  These results prove, in a broad
sense, exponential relaxation to equilibrium for time scales \emph{larger}
than the relaxation time for the infinite chain. 

However the most interesting and
challenging dynamical behavior, featuring aging and dynamic
heterogeneity, occurs for $q\ll1$ on time scales \emph{shorter} than the
relaxation time. Building upon the non-rigorous picture in the
physics literature
\cite{SE1} but going well beyond it, it was
proved in \cite{FMRT-cmp} that, for all $N$ independent of $q$,  the dynamics of the infinite East chain
in a space window $[1,2^N]$ and up to time scales $O(1/q^N)$ is
well approximated, as $q\downarrow 0$, by a certain \emph{hierarchical
  coalescence process} (HCP) \cite{FMRT}. In this HCP vacancies are
isolated and domains (maximal blocks of the form $111..10$) with cardinality
between $2^{n-1}$ and $2^n$, $n\le N$, coalesce with the
domain at their right only at time
scale $\sim (1/q)^n$. As a result, aging and dynamic
heterogeneity in the above regime emerge in a natural way, with a scaling limit for the
relevant quantities in the same universality class as
several other mean field coalescence models of statistical physics
\cite{Derrida}. 

However the  above result says nothing about the
dynamics and its characteristic time scales at
intermediate (mesoscopic) length scales $L=1/q^\g$, $0<\g<1$, or at the typical
inter-vacancy equilibrium scale $L_c=1/q$. As clarified later on (cf.
Section \ref{heuristic}), at these length scales the low temperature dynamics
of the East model is no longer predominantly driven by an effective energy landscape
as in \cite{FMRT-cmp}, but entropic effects become crucial and
a subtle entropy/energy competition comes into play. In the
physical literature these effects have been neglected and the
characteristic time scale $t_n\approx 1/q^n$, appropriate for domains of
length $L_n\approx 2^n$, $n=O(1)$, has been extrapolated up to the
equilibrium scale $L_c=2^{\log_2(1/q)}$
leading for example to the wrong prediction of a relaxation time $\sim
(1/q)^{\log_2(1/q)}$ (to be compared with the bounds given in  Theorem \ref{paletti}).

In this paper we analyse the low temperature dynamics in the above
setting. We first show (cf. Theorem \ref{equo}) that three natural characteristic time scales of
the East model of length $L=O(1/q)$ have a similar scaling as
$q\downarrow 0$. This equivalence is important because of various
notions of ``relaxation time'' which appear in the physical
literature. Secondly we
prove a sharp separation of time scales and dynamic heterogeneity (cf. Theorems \ref{bubu} and \ref{hetero})  
at mesoscopic length scale $L=1/q^\g$, $0<\g<1$.
A key ingredient
for the above results is a novel and detailed computation of
the relaxation time of finite East chains as $q\downarrow 0$,  in which the entropic
contributions in the upper and lower bounds are pinned down very precisely (cf. Theorem
\ref{paletti}). The upper bound is obtained via a substantial
refinement of the inductive technique first introduced in
\cite{CMRT}. Instead the lower bound has been inspired by capacity
methods and is obtained
via a novel construction in the configuration space of a
bottleneck. The equilibrium weight of the bottleneck is
computed in an algorithmic fashion.

In Theorem \ref{sorpresa} we also prove that no time scale separation occurs for $\g=1$, {\it i.e.} at the equilibrium scale
$L_c=1/q$. This  precludes the time scale separation hypothesis in
\cites{SE1,SE2}, put forward on the basis of numerical simulations, which was a keystone of the
\emph{super-domain dynamics} formulation. This
is an example of a case in which numerical simulations can be
misleading because of the extremely long time scales involved,
emphasizing the need for rigorous work.

The above results combined with the hierarchical coalescence picture
of \cite{FMRT-cmp} complete somehow the picture of the non-equilibrium dynamics of
the East model up to the equilibrium scale $L_c=1/q$.  The question of a mathematically rigorous description
of the stationary dynamics, for which the typical domain has length
proportional to $L_c=1/q$, remains open. In
their seminal paper \cite{Aldous} Aldous and Diaconis proposed the
following very appealing conjecture.
As $q\downarrow 0$ the vacancies of the stationary East process in $[0,+\infty)$, after
rescaling space by $q$ and speeding up the process by the relaxation time,
converge to a limit point process $X_t$ on $[0,+\infty)$ which can be
described as follows:\\
(i) At fixed time $t$, $X_t$ is a Poisson point process of rate $1$.\\
(ii) For each $\ell>0$, with a positive rate depending on
  $\ell$ each particle deletes all particles to its
  right up to a distance $\ell$ and replaces them by new sample of the
  Poisson process.

Proving the conjecture is certainly one of the challenging open
problems for the model.  

\subsection{Model and notation}
\noindent 
We will consider a reversible interacting particle system on finite
intervals $\L$ of $\bbN:=\{1,2,\dots\}$ of the form
$\L=\{a,a+1,\dots,a+L-1\}$ (for shortness
$\L=[a,a+L-1]$ if clear from the context)
with  Glauber type dynamics
on the configuration space $\Omega_\L:=\{0,1\}^\L$, reversible with respect to  the product
probability measure $\pi_\L:=\prod_{x\in \L}\pi_x$,  where $\pi_x$ is the
Bernoulli$(1-q)$ measure. Since we are interested in the small $q$
regime throughout the paper we will assume $q < 1/2$.
\begin{remark}\label{temp}
Sometimes in the physical literature the parameter
$q$ is written as $q=\frac{e^{-\beta}}{1+e^{-\beta}}$, $\beta$ being
proportional to the inverse temperature,
so that the limit $q\downarrow 0$ corresponds to the zero temperature limit.
\end{remark}
Elements of $\O_\L$ will usually be denoted by the Greek letters
$\sigma,\eta,\dots$ and $\s_x$ will denote the occupancy variable at
the site $x$. The configuration after flipping the spin on site $x$ will
be denoted by $\s^x$,
\begin{align}
  \label{eq:flip}
  \s^x_y = 
  \begin{cases}
    \s_y & \textrm{if } y \neq x\,, \\
    1-\s_y & \textrm{if } y = x\,. 
  \end{cases}
\end{align}
The restriction of a configuration $\s$ to a subset $V$
of $\L$ will be denoted by $\s_V$.  In the sequel it will be useful to
use the convention that $\s_{a-1}\equiv 0$, \ie there is a fixed
vacancy on the left of the interval. 

The East process (a continuous time Markov chain) can be informally described as follows. Each vertex
$x\neq a$ waits an independent mean one exponential time and then, provided
that the current configuration $\sigma$ satisfies the constraint
$\sigma_{x-1}=0$, the value of $\sigma_x$ at $x$ is refreshed  and set
equal to $1$ with probability $1-q$ and to $0$ with probability
$q$.  The leftmost vertex $x=a$ is unconstrained.
Two configurations $\s,\s'$ are said to be neighbors under the East dynamics if
  there is a non-zero probability rate $K(\s,\s')$ of making a transition directly between them.
  Therefore two configurations $\s,\s'$ are neighbors if they differ only in a
  single coordinate $x$ and for this coordinate $\s_{x-1}=\s_{x-1}'=0$ .
\begin{remark}
Sometimes in the literature one refers to the East process as the
above process but with the constraint at $x$ 
satisfied iff the vertex immediately to \emph{right} of $x$ is empty. Of
course the two processes are equivalent under the mapping $x\mapsto
-x$. We refer to \cite{East-Rassegna} and \cite{Ritort}
for mathematical and physical background. 
\end{remark}
The associated infinitesimal Markov generator $\cL_\L$
is given by
\begin{align}
\label{thegenerator}
\cL_\L f(\sigma)
=
\sum_{x\in \L}c^\L_{x}(\sigma)\bigl[\pi_x(f)-f\bigr](\sigma) 
\end{align}
where 
\begin{equation}
\label{fin-constr}
c^\L_x(\sigma):=
\begin{cases}
1-\sigma_{x-1} & \text{if $x\neq a$}\\
1 & \text{if $x=a$} 
\end{cases}
\end{equation}
 encodes the constraint and
$\pi_x(f)$ denotes the conditional mean $\pi_\L(f\tc \{\s_y\}_{y\neq
  x})$. 

The quadratic form or \emph{Dirichlet} form associated to $-\cL_\L$
will be denoted by $\cD_\L$ and takes the form
\begin{align}
  \label{eq:DirForm}
  \cD_\L(f)=  \pi_\L\left( f \bigl( -\cL_\L f\bigr)\right)= \sum_{x\in \L}\pi_\L\left(c^\L_x \var_x(f)\right)
\end{align}
where $\var_x(f)$ denotes the conditional variance
$\pi_x(f^2)-\pi_x(f)^2$ given $\{\s_y\}_{y\neq x}$.

When the initial distribution at time $t=0$ is $\nu$ the law and
expectation of the
process will be denoted by
$\bbP^\L_\nu$ and $\bbE^\L_\nu$ respectively. If $\nu=\d_{\s}$ we write
$\bbP^\L_\s, \bbE^\L_\s$.  

In the sequel it will  be quite useful to isolate some special
configurations in $\O_\L$.
We will denote by $\mathds{1}{0} $ the configuration with a single
vacancy located at the right end of $\L$ and by $\mathds{1}$ the configuration
  with no vacancies.
Also we define $Z_n(\L) \subset \O_\L$ to be the set of configurations in
  $\O_\L$ with at most $n$ vacancies. 

\subsection{Graphical construction and basic coupling}
\label{graphical}
Here we recall a standard graphical construction which 
defines on the same probability space the finite volume East process for \emph{all} initial
conditions.
Using a standard percolation argument \cites{Durrett,Liggett2} together with the fact
that the constraints $c_x^\L$ are uniformly bounded and of finite range,
it is not difficult to see that the graphical construction can be extended without problems also to
the infinite volume case.

To each $x\in \L$ we associate a mean one Poisson process and, independently, a family of
independent Bernoulli$(1-q)$  random variables $\{s_{x,k}:k\in
\bbN\}$. The occurrences of the Poisson process associated to $x$ will
be denoted by $\{t_{x,k}: k\in \bbN\}$. We assume independence as $x$
varies in $\L$. Notice that with probability one all the
occurrences $\{t_{x,k}\}_{k\in \bbN,\, x\in \L}$ are
different. This defines the probability space. The corresponding
probability measure will be denoted by $\bbP^\L$. 

Given $\h\in \O_\L$ we construct a continuous time Markov chain
$\{\eta(s)\}_{s\ge 0}$
on the above probability space, starting at $t=0$ from $\eta$, according to the
following rules.
At each time $t=t_{x,n}$ the site $x$ queries the state of its own
constraint $c_x^\L$. If and only if the constraint is satisfied then $t_{x,n}$ is called a \emph{legal ring} and
at time $t$ the configuration resets its value at site $x$  to the value of the corresponding Bernoulli variable
$s_{x,n}$. It is easy to check that the above construction actually gives a
continuous time Markov chain with generator \eqref{thegenerator}. We will refer in the sequel to the above construction as
the \emph{basic coupling} for the process. 

\begin{remark}
Notice that
the rings and coin tosses at $x$ for $s\le t$ have no influence whatsoever on the
evolution of the configuration at the site $x-1$ which determines the
constraint $c^\L_x$ and thus they have no influence of whether a ring at $x$ for $s > t$ is legal
or not.   
\end{remark}
A first immediate consequence of the construction is the following
characterization of the \emph{coupling} time of the chain.
Starting from $\xi\equiv \mathds 1$
define $\t^{(x)}$ as the first legal ring in $x\in \L$.
Then elementary induction gives that, for any $x\in \L$, any $\eta$ and any $t\ge
\t^{(x)}$,
\begin{equation} 
\label{basic-coupling}
\eta_y(t)=\xi_y(t)\quad \forall y \leq x . 
  \end{equation}
In particular 
\[
\bbP^\L\left(\exists \eta,\eta': \ \eta(t)\neq \eta'(t)\right)\le
\bbP^\L_{\mathds 1}(\t^{(L)}> t).
\]
The second consequence of the basic coupling is 
the following property (see \cite {FMRT-cmp}*{Lemma 2.2}).
Fix  $1< b<c\leq L$ 
in $\bbN$ and let 
\[
\L=\{1,2,\dots, L\},\quad \L_1=\{1,2,\dots,b\},\quad
\L_2=\{b+1,b+2,\dots ,c\}.
\]  
For any $\eta\in\Omega_\L$ take two events ${\mathcal{A}}$
and ${\mathcal{B}}$, belonging  respectively to the $\s$--algebras
generated by $\{\eta_x(s) \}_{s\leq t,\, x\in \L_1}$ and
$\{\eta_x(s)\}_{s\leq t,\, x\in \L_2}$. Then,
\begin{align*}
\bbP_{\eta}^{\L}({\mathcal{\cA}})&=\bbP_{\eta_{\L_1} }^{\L_1}({\mathcal{A}})\\
  \bbP_{\eta  }^{\L}({\mathcal{A}}\cap{\mathcal{B}}\cap\{
\eta_{b}(s)=0 ~\forall s\leq t\})
 &=\bbP_{\eta_{\L_1} }^{\L_1}({\mathcal{A}}\cap\{\eta_{b}(s)=0 ~\forall s\leq t\})\ \bbP_{\eta_{\L_2} }^{\L_2}({\mathcal{B}})\,.
\end{align*}
The last, simple but quite important consequence of the graphical
construction is the following one. Assume
that the zeros of the starting configuration $\s$ are labeled in
increasing order as $x_0,x_1,\dots,x_n$ and define $\t$ as the first time
at which one of the $x_i$'s is killed, \ie the occupation variable there flips to
one. Then, up to time $\t$ the East dynamics factorizes over the East
process in each interval $[x_i, x_{i+1})$.

\subsection{Ergodicity and some background}

The finite volume East process is trivially exponentially ergodic because the
variable $\eta_{a}$ at the beginning of the interval $\L$ is
unconstrained ($c_a^\L(\s)\equiv 1 $). The infinite
volume process in $\bbZ$, which can be constructed by standard
methods \cite{Liggett1}, is also ergodic in the sense that $0$
is a simple eigenvalue of the corresponding generator $\cL$ thought of as a
self adjoint operator on $L^2(\O,\pi)$ \cite{CMRT}. As far as more
quantitative results are concerned we recall the following (see  \cite{CMRT} for part (i) and \cite{CMST} for part (ii)).
\begin{proposition}\label{prem}
  \begin{enumerate}[(i)]
  \item The generator $\cL$ has a
    positive spectral gap $\l=\l(q)$. Moreover 
    \begin{equation*}
      \lim_{q\downarrow 0}\log(\l^{-1})/\left(\log(1/q)\right)^2=(2\log 2)^{-1}.
    \end{equation*}
and for any interval $\L$ the spectral gap of the finite volume
generator \eqref{thegenerator} is not smaller than $\l$.
\item Assume that the initial distribution $\nu$ is a product
  Bernoulli($\a$) measure, $\a\in (0,1)$. Then there exists $m\in (0,\l]$ and for any function
  $f$ depending on finitely many variables there exists a constant $C_f$ such that
  \begin{equation*}
    |\bbE_\nu[f(\s(t))]-\pi(f)|\le C_f e^{-mt}
  \end{equation*}
 \end{enumerate}
\end{proposition}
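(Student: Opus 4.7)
For part (i), the goal is to establish matching upper and lower bounds on $\l^{-1}$ with leading order $\log\l^{-1}\sim (\log(1/q))^2/(2\log 2)$. For the \emph{upper} bound (i.e.\ positivity of the gap and its scaling), I would work by induction on dyadic length scales $L_n=2^n$. Letting $\l_n$ denote the spectral gap on $[1,L_n]$, I would bisect the interval into two halves and exploit the fact that the right half can only evolve when the interface site $L_{n-1}$ is empty, an event of $\pi$-probability $q$. A block Poincar\'e inequality then produces a recursion of the form $\l_n^{-1}\le c\,q^{-1}\l_{n-1}^{-1}$: the variance of $f$ decomposes into the conditional variance on the right (controlled by $\l_{n-1}^{-1}$ times a Dirichlet form, with the cost $q^{-1}$ coming from conditioning on an empty interface) plus the variance of the conditional average, which depends only on the left. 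Iterating and optimising $n\approx\log_2(1/q)$ gives $\log\l^{-1}\lesssim (\log(1/q))^2/(2\log 2)$, provided one is careful enough with the entropy of intermediate configurations to capture the correct constant. The finite-volume statement $\l_\L\ge\l$ follows from monotonicity of the Dirichlet forms in $L$: lifting a function on $[1,L]$ to $[1,L']$ with $L'>L$ leaves both variance and Dirichlet form unchanged, so $\l_L$ is non-increasing in $L$ and $\l=\lim_L\l_L$.

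For the \emph{lower} bound on $\l^{-1}$ I would exhibit a bottleneck via the variational characterisation of the gap. By the Sollich--Evans combinatorial identity, any legal path from $\mathds{1}$ to $\mathds{1}{0}$ on $[1,2^n]$ must visit a configuration with at least $n$ simultaneous vacancies. Taking $f$ to be the indicator of configurations having strictly fewer than $n$ vacancies in a right sub-interval of size $\sim 1/q$, one has $\var(f)\asymp 1$ while the flow $\cD(f)$ is concentrated on the boundary of this set, giving $\cD(f)\lesssim q^n$ via the $\pi$-measure of the bottleneck. The Poincar\'e inequality then yields $\l^{-1}\gtrsim q^{-n}$ and, at $n\approx\log_2(1/q)$, matches the upper bound asymptotically.

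For part (ii), since $\nu$ does not have bounded density w.r.t.\ $\pi$ on the full line, one cannot directly invoke $L^2$-contraction. The plan is to combine a \emph{spatial truncation} with the uniform finite-volume gap from (i). Using the basic coupling of Section \ref{graphical} and the left-to-right nature of the constraint, $\bbE_\nu[f(\s(t))]$ depends essentially on the initial data only in a window $V_t$ of size $O(t)$ around the support of $f$, with an error exponentially small in $t$ controlled by the tails of the number of legal rings explored backward in the graphical construction. On $V_t$ the density $d\nu_{V_t}/d\pi_{V_t}$ lies in $L^2(\pi_{V_t})$ with norm at most $\exp(O(|V_t|))$; applying the Poincar\'e inequality with the gap $\l$ from (i) (uniform in $V_t$) gives $L^2$-decay at rate $\l$, and any $m<\l$ small enough absorbs the subexponential growth of the initial density into the prefactor $C_f$.

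\emph{The main obstacle} is the exact constant $(2\log 2)^{-1}$ in (i): a crude induction produces only $\log\l^{-1}=\Theta((\log(1/q))^2)$, and the sharp constant emerges from a delicate balance between the \emph{energetic} cost of creating $n$ simultaneous vacancies (of order $q^n$) and the \emph{entropic} freedom in the choice of intermediate legal configurations (exponential in $n$). Matching this balance in both the upper bound (refining the recursion) and the lower bound (constructing a sharp bottleneck) is the technical core. Once (i) is in place, part (ii) is more routine, the delicate point being precisely the uniformity-in-volume of the spectral gap that justifies the spatial truncation.
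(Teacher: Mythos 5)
First, note that the paper does not actually prove Proposition \ref{prem}: it is quoted background, with part (i) cited from \cite{CMRT} and part (ii) from \cite{CMST}; the refined versions of the relevant arguments appear later in the paper (Lemma \ref{claim:1}, Lemma \ref{sale}, Lemma \ref{imbottigliato}, Theorem \ref{paletti}), and comparing your sketch against those reveals genuine gaps.

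For part (i), both halves of your plan miss the mechanism that produces the constant $(2\log 2)^{-1}$. Upper bound: bisecting with a \emph{single-site} interface, whose emptiness has probability $q$, gives the recursion $\l_n^{-1}\le c\,q^{-1}\l_{n-1}^{-1}$ and hence only $\log \l^{-1}\le (1+o(1))(\log (1/q))^2/\log 2$ — off by a factor $2$. The sharp constant comes from bisecting into \emph{overlapping} blocks whose overlap is a positive fraction of the block length, so that the probability of a vacancy in the overlap is $\approx q\,\d\ell$ and the cost per scale is $\sim (q\ell)^{-1}$; the product over scales then generates the entropy factor $2^{\binom{n}{2}}$ which halves the exponent (this is exactly \cite{CMRT} and the paper's Lemma \ref{claim:1}--Lemma \ref{sale}). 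You flag the problem but propose no mechanism to fix it. Lower bound: your test function (indicator of fewer than $n$ vacancies in a window of length $\sim 1/q$, $n\approx\log_2(1/q)$) does not work. Under $\pi$ the number of vacancies in such a window is essentially Poisson$(1)$, so $\var(f)\approx 1/n!$, not of order one, and the boundary measure is not $q^n$ because of the $\binom{1/q}{n}$ placement entropy; the resulting bound is only $\l^{-1}\gtrsim (q n^2)^{-1}$, which is useless. Moreover the conclusion you assert, $\l^{-1}\gtrsim q^{-n}$ at $n\approx\log_2(1/q)$, would give exponent constant $1/\log 2$ and thus \emph{contradict} the matching upper bound, so no repair of this computation can work. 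The correct bottleneck is the set of configurations reachable from $\mathds{1}0$ by paths with at most $n+1$ vacancies, whose boundary carries the Chung--Diaconis--Graham entropy $2^{\binom{n}{2}}n!$ \cite{CDG}; it is precisely this entropic enlargement of the bottleneck that brings the lower bound down to $(2\log 2)^{-1}$, cf.\ \eqref{stima_alto} and Lemma \ref{imbottigliato}. (Your monotonicity argument for $\l_\L\ge \l$ is essentially fine, using in addition that the free left boundary only increases the constraint rates.)

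For part (ii), the truncation-plus-$L^2$ plan fails quantitatively. Finite speed of propagation does reduce matters to a window $V_t$ of size $O(t)$, but there $\|d\nu_{V_t}/d\pi_{V_t}\|_{L^2(\pi_{V_t})}\le e^{c|V_t|}=e^{ct}$ with $c=c(\a,q)>0$ of order one (in fact growing like $\log(1/q)$): this is \emph{exponential} in $t$, not subexponential, while the gap $\l\approx e^{-(\log(1/q))^2/(2\log 2)}$ is tiny, so $e^{ct}e^{-\l t}$ does not decay and no choice $m<\l$ can absorb it into $C_f$. This is exactly why part (ii) is a nontrivial theorem: the proof in \cite{CMST} does not run through bare $L^2$ contraction but exploits the orientation of the constraint (conditioning on initial vacancies to the left of the support of $f$, distinguished-zero–type arguments and an iteration over time/length scales). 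As it stands, your argument for (ii) proves nothing beyond trivial bounds.
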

The above results show that relaxation to equilibrium is indeed taking
place at an exponential rate on a time scale $T_{\rm rel}=\l^{-1}$
which is very large and of the order of
$ e^{c \log(1/q)^2}$, $c=(2\log 2)^{-1}$, for small values of $q$.

\section{Main results}
In order to state our main findings we first need to define some appropriate
characteristic time scales associated to the East process on the interval
$\L=[a,a+L-1]$. Without loss of generality we take $a=1$. As is
apparent from their definition, they are all non-decreasing in $L$ (see Lemma
\ref{monotone}).

The first one will be the relaxation time.
\begin{definition}[Relaxation time]
The spectral gap,  $\gap (\cL_\L)$,  of the
infinitesimal generator is the smallest positive eigenvalue of
$-\cL_\L$ and it is given by the variational principle
\begin{align}
\label{eq:gap}
\gap ( \cL_\L):= \inf _{ \substack{f :\O_\L \to \bbR\\ f \text{ non constant} } } \frac{ \cD_\L(f) }{\var_\L(f) }\,.
\end{align}
The relaxation time $T_{\rm rel} (L)$ is defined as the inverse of the spectral gap:
\begin{equation}\label{rilasso}
T_{\rm rel} (L)= \frac{1}{ \gap (\cL_\L)}\,.
\end{equation}
\end{definition}
Our second time scale is the \emph{mixing time} of the process in $\L$.
\begin{definition}[Mixing time]
Writing $\| \cdot \|_{\rm TV}$ for  the total variation distance, the mixing time $T_{\rm mix}(L)$ is defined as
\begin{equation}\label{mescolo}
 T_{\rm mix}(L)= \inf \left \{t\ge 0: \max_\eta\|  \bbP^\L_\eta(\eta_t=\cdot) -\p_\L(\cdot)\|_{TV}\leq 1/4 \,\right\}\,.
 \end{equation} 
\end{definition}
It is well known (see e.g. \cite{Saloff}) that 
\begin{equation*}
T_{\rm rel}(L)\le T_{\rm mix}(L)\le T_{\rm
  rel}(L)\left(1+\frac{1}{2} \log(\frac{1}{\pi^*})\right)
\end{equation*}
where $\pi^*:=\min_\s\pi_\L(\s)$.
The last important characteristic time is an expected hitting time.
\begin{definition}[Mean hitting time]
Let $\t_{\h_L=1}$ be the hitting time of the set
$\{\eta: \eta_L=1\}$.
Then
\begin{equation}\label{raggiungo}
T_{\rm hit}(L):=\bbE^\L_{\mathds{1} 0} \bigl[ \t_{\eta_L=1}\bigr ]\,.
\end{equation}
\end{definition}
To understand the relevance of the last time scale, let us suppose to start
the process 
from a generic configuration $\eta$ such that, for some $x<y\in \L$,
$\eta_x=\eta_y=0$ while $\eta_z=1$ for all $z\in (x,y)$. 
The configuration at $x+1$ is unconstrained until the first time the vacancy
at $x$ disappears. In particular, the vacancy at $x$ can
create waves of vacancies to its right which could remove the vacancy
at $y$. 
Conditioned on the vacancy at $x$ surviving, the
expected time to remove (or kill) the vacancy at $y$ is given by
$T_{\rm hit}(\ell)$ with $\ell=y-x$. Thus the time scales $T_{\rm
  hit}(\cdot)$ can be used as a first attempt to measure the
\emph{lifetime} of the \emph{domains}.
\begin{remark}
In the sequel we will be interested in the above time scales as
functions of the facilitating density $q$ as $q\downarrow 0$. The
dependence on $q$ will, in general, be twofold: that due to the East
dynamics and that due to a (possible) dependence of the
length scale $L$ on $q$.  
\end{remark}

\subsection{Bounds on the characteristic times}

Roughly speaking our first result
states that, for $q\ll 1$ and for all \emph{length scales} $L\le {\rm const}\times 1/q$, the
above \emph{time scales} are all of the same order as a function of
$q$. 
\begin{definition}
\label{def:equiv}
Given
two positive functions $f,h$ on the interval $(0,1)$ we will write $f\asymp h$
if 
\[
0<\liminf_{q\downarrow 0} \frac{f(q)}{h(q)} \le \limsup_{q\downarrow 0}\frac{f(q)}{h(q)} <+\infty. 
\]  
If instead there exists a positive constant $\b$ such that 
\[
\liminf_{q\downarrow 0}q^\b \frac{f(q)}{h(q)}>0
\]
then we will write $f\succ h$.
\end{definition}
\begin{remark}
Notice that the relation $\succ$ requires a rather strong divergence
of the ratio $f(q)/h(q)$ as $q\downarrow 0$. Such a choice was
motivated by some work on the topics discussed here which  appeared in
the physical literature (see \cites{SE1,SE2}).  
\end{remark}
\begin{theoremA}[Equivalence  of  characteristic times up to scale
  $O(1/q)$\footnote{We recall that $f=O(g)$ means that there exists a
    constant $C>0$ such that $|f|\le C\,g$. }]\label{equo} For each $L$ 
\begin{equation}\label{colazione}
(1-q)^L \ T_{\rm hit}(L) \leq T_{\rm rel}(L) \leq T_{\rm mix}(L)\leq 4
T_{\rm hit}(L)
\end{equation} 
In particular, if $L=O(1/q)$,  then  $T_{\rm rel} (L)
\asymp T_{\rm mix}(L)\asymp T_{\rm hit}(L) $.
\end{theoremA}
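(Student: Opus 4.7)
My plan is to prove the three inequalities in \eqref{colazione} separately; the asymptotic equivalence for $L=O(1/q)$ will then follow immediately since $(1-q)^L$ remains bounded away from zero whenever $L\le C/q$.

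The middle bound $T_{\rm rel}(L) \le T_{\rm mix}(L)$ is the classical spectral-gap vs.\ mixing-time comparison for reversible chains recalled after \eqref{mescolo}, so I would simply invoke it. For the rightmost bound $T_{\rm mix}(L) \le 4\,T_{\rm hit}(L)$ I would use the basic coupling of Section~\ref{graphical}: letting $\t^{(L)}$ be the first legal ring at site $L$ in the chain started from $\mathds{1}$, property \eqref{basic-coupling} makes $\t^{(L)}$ a universal coupling time, so the coupling inequality against a second copy drawn from $\pi_\L$, combined with Markov's inequality, gives $T_{\rm mix}(L)\le 4\,\bbE^\L[\t^{(L)}]$. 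Since the dynamics on $\{1,\dots,L-1\}$ ignores the occupancy at site $L$, $\t^{(L)}$ is the same function of the graphical clocks whether one starts from $\mathds{1}$ or from $\mathds{1} 0$; and started from $\mathds{1} 0$ the spin at $L$ cannot flip to $1$ before a legal ring at $L$ occurs. Hence $\bbE^\L[\t^{(L)}] \le T_{\rm hit}(L)$.

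The substantive inequality is the leftmost one, $(1-q)^L T_{\rm hit}(L) \le T_{\rm rel}(L)$, which I would establish by studying the Dirichlet Laplacian of the chain killed on $A:=\{\eta:\eta_L=1\}$. Let $\l_A$ denote its smallest eigenvalue. For any $f$ with $f|_A=0$, Cauchy--Schwarz gives $\pi_\L(f)^2 \le \pi_\L(A^c)\,\pi_\L(f^2)$, hence $\var_\L(f)\ge \pi_\L(A)\,\pi_\L(f^2)$; combining with the global Poincaré inequality yields $\l_A \ge \pi_\L(A)/T_{\rm rel}(L) = (1-q)/T_{\rm rel}(L)$. On the probabilistic side, expanding the constant function $1=\sum_k c_k\psi_k$ in the orthonormal eigenbasis of the killed semigroup gives the clean bound $\bbE^\L_\nu[\t_A] \le 1/\l_A$, where $\nu:=\pi_\L(\cdot\mid A^c)$; the proportionality constant is exactly one because $\sum_k c_k^2=\pi_\L(A^c)$ cancels against the normalisation of $\nu$. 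Finally I would bound the averaged hitting time from below by the single-state contribution from $\mathds{1} 0$:
\[
\bbE^\L_\nu[\t_A] \ge \nu(\mathds{1} 0)\,T_{\rm hit}(L) = \frac{\pi_\L(\mathds{1} 0)}{\pi_\L(A^c)}\,T_{\rm hit}(L) = (1-q)^{L-1}\,T_{\rm hit}(L).
\]
Chaining the three estimates yields $(1-q)^L T_{\rm hit}(L) \le T_{\rm rel}(L)$. The only delicate point I foresee is ensuring that the proportionality constant in $\bbE^\L_\nu[\t_A]\le 1/\l_A$ is exactly one rather than a structural factor depending on the killed spectrum, but this follows cleanly from self-adjointness of the killed generator and orthogonality of its eigenbasis.
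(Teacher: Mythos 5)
Your proposal is correct. The middle inequality and the bound $T_{\rm mix}(L)\le 4\,T_{\rm hit}(L)$ are essentially the paper's argument: both rest on the universal coupling time $\t^{(L)}$ from the graphical construction, the observation that the first legal ring at $L$ is insensitive to the initial value of $\eta_L$, and the fact that starting from $\mathds{1}0$ one has $\t_{\eta_L=1}\ge \t^{(L)}$; the only difference is cosmetic (you apply Markov's inequality to $\t^{(L)}$ at $t=4\,\bbE^\L[\t^{(L)}]$, while the paper bounds $T_{\rm hit}(L)\ge T_{\rm mix}(L)\,\bbP^\L_{\mathds{1}0}(\t_{\eta_L=1}\ge T_{\rm mix}(L))$ and uses the coupling bound at $t=T_{\rm mix}(L)$). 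For the first inequality your route genuinely differs from the paper's: there the authors simply quote the Aldous--Fill hitting-time estimate (Lemma \ref{primastima}), namely $\bbP^\L_{\s}(\t_A>t)\le \frac{\pi(A^c)}{\pi(\s)}\,e^{-t\,\pi(A)/T_{\rm rel}(L)}$ with $\s=\mathds{1}0$ and $A=\{\eta_L=1\}$, and integrate in $t$ to get $T_{\rm hit}(L)\le (1-q)^{-L}T_{\rm rel}(L)$. You instead give a self-contained spectral argument: the smallest Dirichlet eigenvalue $\l_A$ of the generator killed on $A$ satisfies $\l_A\ge \pi(A)/T_{\rm rel}(L)$ (Cauchy--Schwarz plus the Poincar\'e inequality), the $\pi(\cdot\mid A^c)$-averaged mean hitting time is at most $1/\l_A$ by the eigen-expansion of the killed semigroup (your normalisation bookkeeping $\sum_k c_k^2=\pi(A^c)$ is right, and on a finite state space the killed generator is positive definite by irreducibility, so the expansion is legitimate), and the average dominates the single term $\nu(\mathds{1}0)\,T_{\rm hit}(L)=(1-q)^{L-1}T_{\rm hit}(L)$. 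The prefactor the paper pays through $\pi(A^c)/\pi(\mathds{1}0)$ you pay through $\nu(\mathds{1}0)$, and both yield exactly $(1-q)^L$. What your route buys is self-containedness — everything reduces to the variational principle \eqref{eq:gap} and finite-dimensional spectral theory — while the paper's route is shorter by citation and yields, as a by-product, a pointwise exponential tail for $\t_{\eta_L=1}$ rather than just a bound on its mean.
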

\begin{remark}
The equivalence between $T_{\rm mix} (L)$ and $T_{\rm hit}(L)$ agrees
with a recent  general  result \cites{O,PS} roughly  saying that  the
mixing time of a Markov chain  coincides with the mean hitting time of
some likely (w.r.t. $\p_\L\,$) set. In our case, the likely set is simply the event $\{\eta_L=1\}$.
Further information on $T_{\rm hit}(L)$  is given in Section \ref{hit},
where also its equivalence with another hitting time is established.
\end{remark}
Having established the above equivalence, the next important
question concerns the dependence of the time scales on the 
length scale $L$. In particular it is of interest to know when $T_{\rm rel}(L)\succ T_{\rm
  rel}(L')$, for $L >L'$ up to the equilibrium scale $1/q$.  
It turns out that this problem is
quite non-trivial because of a rather subtle interplay between the
contribution to the relaxation time coming from energy barriers and the
contribution due to the 
entropy ({\it i.e} the number of ways of
overcoming the energy barrier). 

It is useful to first recall some known previous bounds on
$T_{\rm rel}(L)$. In the sequel, for any $L\ge 1$, we will define $n =n(L):= \lceil
\log_2 L \rceil $ where $\lceil x \rceil$ denotes the ceiling of $x$, 
\ie the smallest integer number equal to or larger than $x$. In particular 
$2^{n-1} <L \leq 2^n$. If clear from the context the dependence on $L$
of the integer $n$ will be omitted.

Using rather simple energetic considerations together with a key
combinatorial result for the East model \cite{CDG}, it  holds that, for all small enough $q$,
\begin{equation}\label{sciroppo}
\frac{ c(n)
}{q^n}  \leq T_{\rm rel} (L) \leq \frac{c'(n)}{q^n}
\end{equation}
for suitable  positive constants $c(n), c'(n)$ depending \emph{only} on
$n$ and satisfying 
\[
\lim_{n\to \infty}c(n)=\lim_{n\to \infty}c'(n)=+\infty.
\] 
The upper bound was proved in \cite{FMRT-cmp} while the lower bound
follows from Lemma \ref{imbottigliato}. The above bounds turn out to be quite precise for $n$ (i.e. $L$) fixed and
$q\downarrow 0$. If instead $n=n(q)$ depends on $q$ and it diverges as
$q\downarrow 0$, then the above
estimates deteriorate quite a bit and a more refined analysis is required. 

It was shown in \cite{Aldous} firstly and then in
\cite{CMRT} that $\sup_L T_{\rm rel}(L)<\infty$ for all $q\in (0,1)$. In particular, in
\cites{CMRT,CMST} it was proved that, for each $\delta>0$, 
there exists a positive constant $C$ such that, for all small
enough $q$,
\begin{equation}\label{stima_alto}
C^{-1}q^{2}q^{-n_*/2}\leq
 T_{\rm rel}(L)\leq  Cq^{-n_*/(2-\delta) } \,, \qquad n_* = \log_2(1/q) \,.
\end{equation}
\begin{remark}
Of course the same result does not apply to the mixing time $T_{\rm
  mix}(L)$ or to the
mean hitting time $T_{\rm hit}(L)$. Using the fact that the jump rates of the process
are uniformly bounded together with , it is quite simple to show that the latter time
scales grow at least \emph{linearly} with $L$ when $L\gg 1/q$. On the other hand, using $\sup_L T_{\rm rel}(L)<\infty$ one immediately concludes that $\sup_L L^{-1}T_{\rm mix}(L) < \infty$ and similarly for $T_{\rm hit}(L)$. That is of
course not in contradiction with the equivalence with $T_{\rm
  rel}(L)$ for scale $L$ up to equilibrium scale $1/q$.   
\end{remark}
Unfortunately none of the above estimates is able to settle the question of 
whether $T_{\rm rel}(L)\succ T_{\rm rel}(L')$ or not in a satisfactory way and much
more refined bounds are required. Our second result is a step forward
in this direction.

\begin{theoremA}[Bounds on the relaxation time]\label{paletti}
Given $d>0$ there exist constants $\a,\a'$ depending only on $d$ such that, for all $L\in [1,d/q]$,
\begin{equation}\label{piscina}
  \frac{n!}{q^n 2^{\binom{n}{2} }}q^{\a} \leq T_{\rm rel } (L) \leq
  \frac{n!}{q^n 2^{\binom{n}{2} }}  q^{-\a '},\quad n=\lceil \log_2 L\rceil .  
\end{equation} 
\end{theoremA}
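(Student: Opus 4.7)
The plan is to refine the inductive bisection argument of \cite{CMRT}. For a chain $\Lambda$ of length $L \in (2^{k-1}, 2^k]$, I partition $\Lambda = \Lambda_\ell \cup \Lambda_r$ into two halves and apply a block-type Poincar\'e inequality, conditioning on the spin configuration in $\Lambda_\ell$. Given a configuration in $\Lambda_\ell$ whose rightmost vacancy sits at position $\xi$, the East dynamics on $[\xi+1,L]$ with unconstrained left boundary has relaxation time bounded inductively by $T_{\rm rel}(L - \xi)$. The na\"ive CMRT bound produces a recursion $T_{\rm rel}(2^k) \le (C/q)\,T_{\rm rel}(2^{k-1})$, yielding only the crude $q^{-n}$ behaviour. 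To obtain the entropic prefactor $\prod_{k=1}^n k/2^{k-1}$, I would \emph{average} over all admissible positions of the auxiliary vacancy in $\Lambda_\ell$ rather than pin it to a single site, exploiting that the different positions contribute additively to the Dirichlet form; this yields a gain of order $1/2^{k-1}$ at step $k$. The companion factor $k$ in the numerator reflects combinatorial bookkeeping of the minimal legal patterns of vacancies across a window of length $2^k$, in the spirit of the CDG counting \cite{CDG}. Iterating from $k=1$ to $k=n$ produces the stated bound, with $q^{-\alpha'}$ absorbing polynomial losses from boundary effects (e.g.\ $L$ not a power of $2$, or the restriction $L \le d/q$ that keeps the entropy contribution genuinely finite).

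\textbf{Lower bound.} Here I would construct an explicit Cheeger-type bottleneck. Using the variational formula \eqref{eq:gap} with a test function of the form $f = \mathbf 1_S$, it is enough to exhibit a set $S \subset \Omega_\Lambda$ with $\pi_\Lambda(S) = \Theta(1)$ whose equilibrium outflow rate
\[
\sum_{\sigma \in S,\ \sigma' \notin S} \pi_\Lambda(\sigma)\,K(\sigma,\sigma')
\]
is at most $q^n\, 2^{\binom{n}{2}}/n! \cdot q^{\alpha}$. The set $S$ is built so that any legal escape move is forced to pass through a configuration carrying at least $n$ simultaneous vacancies arranged in the ``recursively nested'' CDG pattern enforced by Lemma~\ref{imbottigliato}. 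The novel algorithmic ingredient lies in the \emph{enumeration} of the boundary $\partial S$: I label each bottleneck configuration by an ordered sequence of vacancy positions produced by a recursive peeling procedure, in which at depth $k$ one records the vacancy that splits the active window of length $\sim 2^{n-k+1}$ into two halves. Such a sequence has $n$ entries, the corresponding configuration carries Gibbs weight $\sim q^n$, and the number of admissible sequences is at most $2^{n-1}\cdot 2^{n-2}\cdots 2^0 = 2^{\binom{n}{2}}$; dividing by $n!$ compensates for the overcounting due to the arbitrary order in which vacancies are listed. Combining with the Cheeger inequality yields the claimed lower bound, with $q^{\alpha}$ absorbing the polynomial corrections.

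\textbf{Main obstacle.} The principal difficulty is matching the upper and lower bounds up to a polynomial factor $q^{\pm\alpha}$. The entropic factor $n!/2^{\binom n2}$ is exponentially small in $n$, so the estimates must be tight: any superfluous $1/q$ lost at a single inductive step, or any parasitic overcounting of bottleneck configurations, immediately destroys the bound. I expect that the truly technical work lies in the combinatorial bookkeeping needed to handle $L$ not a power of $2$, the unconstrained left endpoint, and the regime $L \le d/q$ where the interaction between energy and entropy becomes non-trivial; the structural ingredients (the bisection induction and the Cheeger bottleneck) are essentially refinements of familiar tools, but preserving the exact combinatorial factors through each step is where the novelty and the effort concentrate.
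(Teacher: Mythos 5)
Your upper-bound sketch is, in substance, the paper's argument: the gain of order $2^{k}$ per step is obtained there by running the bisection with two \emph{overlapping} blocks and letting the right block relax only when a vacancy is present in the overlap $\Delta$, so that the auxiliary block chain has gap $1-\sqrt{(1-q)^{|\Delta|}}\approx q|\Delta|/2$ with $|\Delta|\approx \ell_{i-1}/r$ --- this is your ``averaging over the positions of the facilitating vacancy'' made precise. One correction: the factor $n!$ in the upper bound is not a CDG-type count of legal vacancy patterns; it is the accumulated loss $r^{r}\approx n!\,e^{n}$ paid because the overlap can only be a $1/r$ fraction of each block if the lengths are to (nearly) double at each of the $r\approx n$ steps, the residual $e^{n}\le (d/q)^{\log_2 e}$ being absorbed into $q^{-\a'}$. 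Since $n!$ only weakens an upper bound, this misattribution is not fatal, and your scheme could be completed along the lines of Lemmas \ref{claim:1}--\ref{sale}.

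The lower bound, however, has a genuine gap, and it is exactly where the novelty of the theorem lies. First, the bottleneck set $S$ is never defined; you only list properties it should have, and the property you invoke --- escape forced through the ``CDG pattern enforced by Lemma \ref{imbottigliato}'' --- points to the wrong set: the paper computes (Lemma \ref{CDGlemma}) that the boundary $U_n$ of that set has $\pi(U_n)\asymp q^{n+1}\,n!\,2^{\binom{n}{2}}$, so the resulting bound \eqref{stiminabis} carries $n!$ in the \emph{denominator}, off from the claimed bound by a factor of order $(n!)^2$, which is super-polynomial in $1/q$ for $n=\lceil\log_2 L\rceil$ and hence cannot be hidden in $q^{\a}$. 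Second, your counting of $\partial S$ is invalid as stated: if the recursive peeling order is determined by the configuration (as it must be for the enumeration to be well defined), then each boundary configuration is labelled by \emph{one} admissible sequence, not by $n!$ of them, so there is no overcounting to divide out, and quotienting the $2^{\binom{n}{2}}$ sequences by $n!$ has no justification. In the paper the $1/n!$ arises elsewhere: the set $A_*$ is defined through a deterministic algorithm removing vacancies in order of increasing gap (Definition \ref{def:Astar}), Lemma \ref{gallo2} and the interval-growing algorithm show that every $\eta\in\partial A_*$ carries $n+1$ vacancies whose successive distances obey the nested constraints $d_1=1$, $d_k\le d_1+\cdots+d_{k-1}$ of \eqref{paride}, and the number of such strings is bounded by the volume of the associated simplex-like region, which is at most $2^{\binom{n}{2}}/n!$ (Proposition \ref{collefiorito}). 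Without a concretely defined set whose boundary provably satisfies constraints of this nested type, the claimed bound on the outflow cannot be reached. (A minor symptom of the same vagueness: a set with $\pi(S)=\Theta(1)$ is impossible once $\{\eta_L=1\}\subset S^c$, since then $S\subset\{\eta_L=0\}$ and $\pi(S)\le q$; this is harmless given the $q^{\a}$ slack, but it shows the set has not been pinned down.)
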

\subsubsection{Some heuristics behind
\eqref{piscina}}\label{heuristic}
Since the equilibrium vacancy
density $q$ is very small, most of the non-equilibrium
evolution will try to remove the excess of vacancies present in the initial
distribution and will thus be dominated by the coalescence of domains (intervals separating two consecutive vacancies).
Of course this process must necessarily occur in a kind
of cooperative way because, in order to
remove a vacancy, other vacancies must be created nearby (to its
left). Since the creation of vacancies requires the overcoming
of an \emph{energy barrier}, in a first
approximation the
non-equilibrium dynamics  of the East model for $q\ll 1$ is driven by
a non-trivial energy landscape.

In order to better explain the structure of this landscape suppose
that we start from the configuration $\mathds{1} 0$
with only a vacancy at the right end of the interval $\L$.
In this case a nice combinatorial argument (see
\cite{CDG} and also \cite{SE2,SE1}) shows that, in order to remove the
vacancy within time $t$, there must exists $s\le t$ such
that the number of vacancies inside $\L\setminus \{L\}$ at time
$s$ is at least $n$. A simple comparison with the stationary East
model, using the fact that $\pi(\mathds 1)=1-o(1)$, shows that at any given time
$s$ the probability of observing $n$ vacancies in $\L\setminus \{L\}$ is
$O(q^n)$. Therefore, in order to have a non negligible probability of
observing the disappearance of the vacancy at $L$, one expects to have
to wait an
\emph{activation time} $t_n=O(1/q^n)$. In a more physical language the energy barrier which the system
must overcome has height $n$. 

The above heuristics in a sense explains the first main contribution
$(1/q)^{n}$ appearing in \eqref{piscina}. The other main contribution,
$n!/2^{\binom{n}{2} }$, is much more subtle and more difficult to
justify heuristically. It is an
\emph{entropic} term related, in some sense, to the cardinality of the set   $V(n)$ of configurations on $\L\setminus \{L\} $
which can be reached from the configuration $\mathds{1}$ using at most $n$ vacancies in $\L\setminus \{L\}$. Equivalently, $V(n)$ is given by the  configurations in $\L\setminus \{L\}$ that can be reached from $\mathds{1}$ through a path in $Z_n(\L\setminus \{L\})$. 
 The cardinality  $|V(n)|$ satisfies the inequalities (see \cite{CDG})
\[
c_1^n n! 2^{\binom{n}{2} }\le  |V(n)| \le c_2^n n! 2^{\binom{n}{2} }
\]
where $c_1,c_2$ are positive constants in $(0,1)$. 

A first naive guess would be that the actual relaxation
time is the activation time $t_n$ reduced by a factor proportional to
$|V(n)|^{-1}$ (see \cite{CMST} for a rigorous lower bound on $T_{\rm
  rel}(L)$ based on this idea).  Notice however that the true
reduction factor in \eqref{piscina} is much smaller and equal to
$2^{\binom{n}{2} }/n!$. Thus only a tiny fraction of the
configurations reachable with $n$ vacancies actually belong to the
energy barrier.  Many configurations with $n$ vacancies will return
quickly to $\10$ before removing the vacancy at $L$, and are therefore not
typically visited during an excursion which overcomes the energy
barrier.


\subsection{Time scale
separation and dynamic heterogeneity}

Theorems \ref{equo} and \ref{paletti} have some interesting
consequences on two basic and strongly interlaced questions concerning the non-equilibrium
dynamics of the East model at low $q$. The first one is whether the characteristic time scales
corresponding to different length scales have the same scaling, as 
$q\downarrow 0$, or not. As we will see numerical simulations in this case can
be quite misleading. The second question is whether and to what extent
we should expect \emph{dynamic heterogeneity} in the model. To
simplify the notation we do not write explicitly the integer part
when the meaning is clear. 

\begin{theoremA}[time scale separation up to mesoscopic scales] 
\label{bubu} 
The following holds:
\begin{enumerate}[(i)]
\item
Given $L',L$ independent of $q$, $T_{\rm rel} \left( L' \right) \succ
T_{\rm rel} \left( L\right)$
if and only if $ \lceil \log_2 L' \rceil >   \lceil \log_2 L \rceil$.

\item
Given $\gamma\in (0,1)$, there exists  $\l=\l(\g)>1$ such that, for
all  $L=d/q^\g$, $d>0$, 
\begin{equation}
\label{fabio1}
T_{\rm rel} \left( \l L\right) \succ  T_{\rm rel} \left( L\right)\,.
\end{equation}
Moreover $\l=2$ when  $ \gamma < 1/2$. 
\end{enumerate}
\end{theoremA}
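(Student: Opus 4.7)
The strategy is to read off both parts directly from the sharp two-sided bounds of Theorem \ref{paletti}, keeping careful track of how the ratio $T_{\rm rel}(\lambda L)/T_{\rm rel}(L)$ behaves under doubling of the length scale.

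For part (i), the scales $L,L'$ are independent of $q$, so $n=n(L)$ and $n'=n(L')$ are fixed integers and the combinatorial factor $n!/2^{\binom{n}{2}}$ is a constant. In this regime Theorem \ref{paletti} (or already the cruder bound (\ref{sciroppo})) yields $T_{\rm rel}(L) \asymp_{n} q^{-n}$ with absolute constants depending only on $n$. Hence, if $n'>n$ then $T_{\rm rel}(L')/T_{\rm rel}(L)\geq c\,q^{n-n'}$, and multiplying by $q^\beta$ with any $0<\beta<n'-n$ still forces divergence, so $T_{\rm rel}(L')\succ T_{\rm rel}(L)$. Conversely, if $n'=n$ both sides are $\asymp q^{-n}$ so the ratio is bounded, while if $n'<n$ then necessarily $L'<L$ and Lemma \ref{monotone} gives $T_{\rm rel}(L')\leq T_{\rm rel}(L)$; in either case $T_{\rm rel}(L')\succ T_{\rm rel}(L)$ fails.

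For part (ii), set $n=n(L)=\lceil\log_2(d/q^\gamma)\rceil$; since $d/q^\gamma<2^n\leq 2d/q^\gamma$ one obtains the crucial relation $q\,2^n \asymp q^{1-\gamma}$. Taking $\lambda=2^k$ for an integer $k\geq 1$ gives $n(\lambda L)=n+k$. Applying Theorem \ref{paletti} to both $L$ and $\lambda L$ (the hypothesis $L\leq d'/q$ is satisfied for small $q$ with any fixed $d'>0$) and using the identity $\binom{n+k}{2}-\binom{n}{2}=nk+\binom{k}{2}$, one obtains
\begin{equation*}
\frac{T_{\rm rel}(\lambda L)}{T_{\rm rel}(L)}\;\geq\;\frac{\prod_{j=1}^{k}(n+j)}{2^{\binom{k}{2}}}\cdot\frac{1}{(q\,2^n)^{k}}\cdot q^{\alpha+\alpha'}\;\geq\; C_k\,\bigl(\log(1/q)\bigr)^{k}\,q^{-k(1-\gamma)+\alpha+\alpha'}.
\end{equation*}
The right-hand side dominates $q^{-\beta}$ for some $\beta>0$ as soon as $k(1-\gamma)>\alpha+\alpha'$, which proves (\ref{fabio1}) upon setting $\lambda(\gamma)=2^{k^\ast}$ with $k^\ast=\lceil(\alpha+\alpha')/(1-\gamma)\rceil+1$.

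The refinement $\lambda=2$ for $\gamma<1/2$ corresponds to taking $k=1$ in the display above, and hence demands the sharper inequality $\alpha+\alpha'<1-\gamma$. This is where the main obstacle lies, since Theorem \ref{paletti} as stated provides only qualitative (possibly large) $\alpha,\alpha'$. To overcome it I would revisit the proofs of Theorem \ref{paletti} in the restricted range $n\leq n_\ast/2$, where the entropic factor $2^{\binom{n}{2}}\leq q^{-1/2}$ is no longer the dominant source of slack: on the upper-bound side the inductive estimate can be tightened because the energy barrier $q^{-n}$ dominates, and on the lower-bound side the bottleneck construction carries equilibrium weight $\asymp n!/(q^n 2^{\binom{n}{2}})$ up to polylogarithmic corrections. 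Tracking the constants carefully should yield $\alpha+\alpha'=o(1)$ as $q\downarrow 0$ in this regime, hence $\alpha+\alpha'<1-\gamma$ for every $\gamma<1/2$, which is precisely what is required.
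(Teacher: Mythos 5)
Parts (i) and the first assertion of (ii) are correct and follow the same route as the paper: (i) is read off from the fixed-$n$ bounds \eqref{sciroppo} plus monotonicity (Lemma \ref{monotone}), and the existence of a finite $\l(\g)$ in (ii) is exactly the paper's computation with \eqref{piscina}, taking the gap $k=n'-n$ so large that $k(1-\g)$ beats the unknown exponent $\a+\a'$.

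The genuine gap is the refinement $\l=2$ for $\g<1/2$. Your plan reduces it to the inequality $\a+\a'<1-\g$, and then asserts that ``tracking the constants'' in the proofs of Theorem \ref{paletti} should give $\a+\a'=o(1)$ for $n\le n_*/2$. This is not something the paper's arguments can deliver, and it is not a matter of bookkeeping. In the upper bound the recursion of Lemma \ref{claim:1}--\ref{sale} loses a fixed multiplicative constant at \emph{each} of the $r\approx \g\log_2(1/q)$ steps (the factor $2$ from double counting on $\D$, the factor from $1-\sqrt{\e_{i-1}}\approx q\lceil\ell_{i-1}/r\rceil/2$, the $e^{r}$ discrepancies between $r^r$ and $r!$ and between $\prod_i\ell_i$ and $2^{\binom r2}$); these accumulate geometrically to $C^{r}=q^{-\g\log_2 C}$ with $\log_2 C$ comfortably larger than $4$, so the method yields $\a'\gtrsim c'\g$ with $c'$ around $5$, not $o(1)$. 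Likewise the bottleneck lower bound loses the factor $p^L/(2^{n+1}L)\asymp q^{2\g}$ (the sum over $z_0\in\L$ and the $2^{n}$ choices of signs are polynomial in $1/q$, not polylogarithmic as you claim), so $\a\approx 2\g$. Hence even an optimal execution of your programme with the paper's techniques gives $\a+\a'\gtrsim 7\g$, which is smaller than $1-\g$ only for $\g$ below roughly $1/8$ --- nowhere near the claimed threshold $\g<1/2$. Proving $\a+\a'=o(1)$ would amount to a sharp asymptotic for $T_{\rm rel}$ on mesoscopic scales, a substantially harder result that Theorem \ref{paletti} deliberately does not claim.

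The paper avoids this issue entirely by a dynamical argument rather than a spectral one. Working on $\L'=[1,2L]$ from $\mathds{1}0$, it waits for the first ``wave'' $\tau_L$ that could reach $2L$, defines $\tilde\tau$ as the first subsequent time the chain is back at $\mathds{1}0$ or has killed the vacancy at $2L$, and shows (Claim \ref{claim6}, via the distinguished zero and Proposition \ref{dominare}) that the return to $\mathds{1}0$ wins with probability $1-O(q^\d)$; a restart/renewal estimate then gives $T_{\rm hit}(2L)\ge c\,q^{-\d}T(L)$, and Theorem \ref{equo} converts this into $T_{\rm rel}(2L)\succ T_{\rm rel}(L)$. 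The restriction $\g<1/2$ arises there from a union bound of order $L\cdot q^{1-\g}=O(q^{1-2\g})$ over the jumps of the distinguished zero --- not from any quantitative control of $\a,\a'$. To complete your proof you would need either this kind of hitting-time argument or a genuinely sharper version of Theorem \ref{paletti}; as written, the key step is an unsupported hope.
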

While at finite lengths (\emph{i.e independent of
  $q$})  the question of time scale separation is completely
characterized (see (i) above), for mesoscopic lengths of order $O(1/q^\g)$,
$\gamma \in (0,1)$, our knowledge is less detailed. Although
time scale separation occurs between length scales whose ratio
is above a certain threshold $\l$ (see (ii) above), we would like to know,
for example, 
if it occurs in a ``continuous'' fashion. By that we mean the following. 
\begin{definition}[Continuous time scale separation]
Given $\g\in (0,1]$ we say that \emph{continuous time scale separation}
occurs at length scale $1/q^\g$ if  $T_{\rm rel} \left( d'/ q^\gamma  \right) \succ  T_{\rm rel} \left( d/q^\gamma \right)$ whenever $d'>d$.
\end{definition}
In \cites{SE1,SE2} a continuous time scale separation was conjectured
for $\g=1$, {\it i.e.} for length scales of the order of the
equilibrium inter-vacancy distance. The following hypothesis was
put forward in \cites{SE1,SE2} on the basis of numerical simulations
and was the base of the so-called
\emph{super-domain dynamics} proposed by Evans
and Sollich to describe the time evolution of the stationary East model.
\smallskip

{\bf Time scaling hypothesis}. \emph{Continuous time scale separation occurs at length scale
$1/q$. Moreover there exists a
strictly increasing positive function $f:(0,+\infty)\mapsto \bbR$ such that, as $q\downarrow 0$, 
 \begin{equation}\label{levissima}
 T_{\rm rel} \left(d/q\right) = \left(1/q\right)^{f(d) +o(1) } T_{\rm rel} \left(1/q \right)
 \end{equation}
}
\begin{remark}
Strictly speaking the above hypothesis was formulated with $T_{\rm
  hit}(L)$ in place of $T_{\rm rel}(L)$. However, thanks to Theorem
\ref{equo}, the two formulations are completely equivalent.  
\end{remark}
The next result shows that the above hypothesis is false.
\begin{theoremA} [Absence of  time scale separation on scale  $1/q$] \label{sorpresa}
There is no time scale separation at the equilibrium scale. More precisely
\begin{equation}
  \label{salvia1}
T_{\rm rel}(d/q)\asymp T_{\rm rel}(d'/q)\quad \forall d,d' >0.
\end{equation}
In particular
\begin{equation}\label{salvia2}
\lim _{q \downarrow 0 } \left(\frac{  T_{\rm rel} (  d'/q  ) }{ T_{\rm
      rel} ( d/q  ) }\right)^{\frac{1}{\log(1/q) }}=1\,, \qquad
\forall d'>d,
\end{equation}
so $f$ in \eqref{levissima} is identically zero.\end{theoremA}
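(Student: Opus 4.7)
The plan is to prove the quantitative equivalence \eqref{salvia1} first; the asymptotic statement \eqref{salvia2} is then immediate by taking logarithms, dividing by $\log(1/q) \to \infty$, and noting that any $q$-independent multiplicative constant contributes $o(1)$ on the logarithmic scale, forcing $f \equiv 0$.

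Without loss of generality fix $0 < d < d'$. By Lemma \ref{monotone}, $T_{\rm rel}(L)$ is non-decreasing in $L$, so $T_{\rm rel}(d/q) \leq T_{\rm rel}(d'/q)$ is free, and only the reverse bound $T_{\rm rel}(d'/q) \leq C(d,d') T_{\rm rel}(d/q)$ needs work. I would first invoke Theorem \ref{equo}: since $L = d'/q = O(1/q)$, the factor $(1-q)^L$ is bounded below by $e^{-2d'}$ for small $q$, so both relaxation times are sandwiched between constant multiples of the corresponding hitting times $T_{\rm hit}$. The problem reduces to showing
\[
T_{\rm hit}(d'/q) \;\leq\; C'(d,d') \, T_{\rm hit}(d/q).
\]

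To establish this bound I would use the graphical construction of Section \ref{graphical} to set up an explicit \emph{relay strategy} for killing the rightmost vacancy. Partition $[1, d'/q]$ into $k = \lceil d'/d \rceil = O(1)$ consecutive blocks $B_1, \ldots, B_k$, each of length roughly $d/q$. Starting from $\mathds{1}0$, the plan is to produce a vacancy at the right endpoint of $B_1$ using the restricted East dynamics inside $B_1$, which by the factorization property is an autonomous East chain of length $d/q$ with unconstrained left end; this stage takes expected time $O(T_{\rm hit}(d/q))$ (by the symmetry between ``kill a vacancy at the right end'' and ``create a vacancy at the right end from $\mathds{1}$'', both governed by the same spectral gap). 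Conditional on the vacancy at the right boundary of $B_j$ being still present, the dynamics on $B_{j+1}$ is again an autonomous East chain with an unconstrained left end, and a vacancy at the right endpoint of $B_{j+1}$ can be produced in expected time $O(T_{\rm hit}(d/q))$. Iterating across the $O(1)$ blocks and finally flipping the vacancy at $d'/q$ (possible at unit rate once $\eta_{d'/q - 1}=0$) yields the required bound.

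The main obstacle is the persistence of the intermediate ``relay'' vacancies: once a vacancy is created at the right boundary of $B_j$, it must survive long enough for the $B_{j+1}$-subchain to complete its own relaxation excursion. I expect this to be the delicate point, since the natural lifetime of a single vacancy is not a priori comparable to $T_{\rm hit}(d/q)$. A clean way around the difficulty is a variational comparison of Dirichlet forms in place of the pathwise construction: one shows that an eigenfunction $f^\star$ of $-\cL_{[1,d'/q]}$ realizing $\gap^{-1}$ is essentially supported, in a Poincar\'e-inequality sense, on the $[1, d/q]$ coordinates, so that
\[
\text{Var}_{[1,d'/q]}(f^\star) \;\leq\; C \, T_{\rm rel}(d/q) \, \cD_{[1,d'/q]}(f^\star),
\]
with $C$ independent of $q$. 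This localization encodes the equilibrium-scale heuristic that the slow mode of relaxation lives on $O(1/q)$ sites and cannot be further slowed down by enlarging the chain, which is precisely the physical content of the failure of the time-scaling hypothesis of \cites{SE1,SE2}. Either route provides the required $q$-independent comparison and closes the proof of \eqref{salvia1}; \eqref{salvia2} then follows as noted.
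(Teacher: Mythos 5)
Your reduction to hitting times via Theorem \ref{equo}, the easy direction via Lemma \ref{monotone}, and the deduction of \eqref{salvia2} from \eqref{salvia1} are all fine, but neither of the two routes you offer for the non-trivial inequality actually closes it, and the step you leave open is not a technicality: it is the entire content of the theorem. In the relay strategy, the survival of the intermediate vacancies is exactly where the scale $1/q$ must enter. The relay vacancy at the right end of $B_j$ is destroyed by the autonomous East dynamics inside $B_j$ on a time scale which is itself of order $T_{\rm hit}(d/q)$, i.e.\ of the same order as the time the next block needs to complete its stage, so whether the relay finishes before it dies is a genuine competition that you do not analyze. Note that if the relay argument went through as sketched, with no input specific to the equilibrium scale, the same reasoning at a mesoscopic scale $L=1/q^\g$, $\g<1/2$, would give $T_{\rm hit}(2L)\le C\,T_{\rm hit}(L)$, contradicting Theorem \ref{bubu}(ii); hence some quantitative fact special to $L\asymp 1/q$ is indispensable, and your sketch contains none. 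Your second route has the same defect in another guise: the claimed localization of the gap eigenfunction of $\cL_{[1,d'/q]}$ onto the first $d/q$ coordinates is precisely the statement to be proved, and you offer no mechanism for it.

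The paper supplies the missing mechanism by a short Poincar\'e-inequality (block dynamics) argument, the same tool already used for the upper bound in Theorem \ref{paletti}. For $\ell\in[1/q,\,1/(\d q)]$ split $[1,\ell]$ into the two halves $\L_1=[1,\lfloor \ell/2\rfloor]$ and $\L_2=[\lfloor \ell/2\rfloor+1,\ell]$, and let $\L_2$ be facilitated by the event that there is a vacancy in $\D=[\lfloor \ell/4\rfloor,\lfloor \ell/2\rfloor]$. Since $|\D|\gtrsim \ell/4\ge 1/(4q)$, this auxiliary constraint has equilibrium probability $1-(1-q)^{|\D|}$ bounded away from $0$ uniformly in $q$, so the two-block chain has spectral gap $\ge 1/8$ (cf.\ \cite{CMRT}), yielding $T_{\rm rel}(\ell)\le 16\, T_{\rm rel}(\lceil (3/4)\ell\rceil)$. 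Iterating this bound $O(\log(d'/d))$ times reduces $d'/q$ to $1/q$ at the cost of a constant depending only on $d'/d$, which together with monotonicity gives \eqref{salvia1}. This elementary but crucial input --- a block of length of order $1/q$ carries a vacancy with probability of order one, which is false on mesoscopic scales --- is exactly what both of your routes leave unsubstantiated.
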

Our last result concerns \emph{dynamic heterogeneity}. Roughly
speaking it says that the following holds for $\g\in [0,1), d>0$: 
\begin{itemize}
\item Domains
much shorter than $d/q^\g$ are very
unlikely to be present at time $T_{\rm rel}(d/q^\g)$.
\item An initial domain larger than 
$d/q^\g$ is likely to still be present at  time $T_{\rm rel}(\epsilon d/q^\g)$
for some $\epsilon$ small enough.
\end{itemize}
\begin{theoremA}
\label{hetero}
Let $\L=[1,L]$ and fix $\g\in
  [0,1), d>0$.

(i) Assume $L=o(1/q)$.
  Then, as $q\downarrow 0$ and with $t=T_{\rm rel}(d/q^\g)$,
\label{th:dh}  
\[
\sup_\eta \bbP_\eta^\L\Bigl(\exists z_1<z_2:\ z_2-z_1\le \epsilon d/q^\g
\ \text{and } \eta_{z_1}(t)=\eta_{z_2}(t)=0\Bigr) =o(1) 
\]
for all $\epsilon$ small enough.

(ii) Let $d/q^\g\le L \le 1/q$.
Choose 
an initial configuration $\eta$ such that $\eta_L=0$ and $\eta_x=1$
for all $L-d/q^\g \le x <L$. Then, as
$q\downarrow 0$ and with
$t=T_{\rm rel}(\epsilon d/q^\g)$,  
\[
\bbP_\eta^\L\Bigl(\eta_L(t)=0\Bigr)=1-o(1)
\]    
for all $\epsilon$ small enough.
\end{theoremA}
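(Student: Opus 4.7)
The plan is to derive both claims from the graphical-decoupling properties recalled in Section \ref{graphical} together with the quantitative time-scale bounds of Theorems \ref{equo}, \ref{paletti} and \ref{bubu}. Throughout, set $\ell=d/q^\gamma$ and $\ell'=\epsilon d/q^\gamma$; the key input from Theorem \ref{bubu}(ii) is that once $1/\epsilon$ exceeds the threshold $\lambda(\gamma)$ one has $T_{\rm rel}(\ell')\le q^{\alpha}\,T_{\rm rel}(\ell)$ for some $\alpha=\alpha(\gamma)>0$.

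\textbf{Part (ii).} The event $\{\eta_L(t)=1\}$ forces a legal ring at $L$ during $[0,t]$, hence a time $s\le t$ with $\eta_{L-1}(s)=0$. Denoting by $\tau'$ the first such time, it suffices to prove $\bbP_\eta(\tau'\le T_{\rm rel}(\ell'))=o(1)$. I would first use the graphical decoupling to reduce to an auxiliary East process on the sub-interval $[L-\ell-1,L-1]$: a monotone coupling shows that the worst-case left-boundary trajectory for creating a vacancy in this sub-interval is $\eta_{L-\ell-2}(s)\equiv 0$, so the probability of interest is dominated by the analogous probability for the auxiliary chain started from $\mathds{1}$. In this reduced chain $\bbE[\tau']\asymp T_{\rm hit}(\ell+1)\asymp T_{\rm rel}(\ell)$ thanks to Theorem \ref{equo}. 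Upgrading the mean bound to a probability bound via an exponential-tail/capacity estimate of the form $\bbP(\tau'\le s)\le C\,s/\bbE[\tau']$ for hitting times of rare sets in reversible chains, and applying Theorem \ref{bubu}(ii), yields $\bbP(\tau'\le T_{\rm rel}(\ell'))=O(q^{\alpha})=o(1)$.

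\textbf{Part (i).} Fix a pair $z_1<z_2$ with $z_2-z_1\le\ell'$ and consider the conditional law of $(\eta_{z_1}(t),\eta_{z_2}(t))$ given the trajectory $\{\eta_{z_1-1}(s)\}_{s\le t}$. By the graphical construction this coincides with the distribution at time $t$ of an East process on $[z_1,L]$ started from $\eta_{[z_1,L]}$ and driven by the recorded left-boundary trajectory. Since $t=T_{\rm rel}(\ell)$ exceeds the mixing time $T_{\rm mix}([z_1,z_1+\ell'])\asymp T_{\rm rel}(\ell')$ by a factor $q^{-\alpha}$, on the high-probability event that the boundary is vacant for a sufficient cumulative time during $[0,t]$, the window $J=[z_1,z_2]$ has mixed and its marginal law is $q^{\alpha'}$-close in total variation to the equilibrium marginal $\pi_J$, whose mass on $\{\eta_{z_1}=\eta_{z_2}=0\}$ equals $q^2$. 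Averaging then gives $\bbP_\eta(\eta_{z_1}(t)=\eta_{z_2}(t)=0)\le q^2(1+o(1))$, and summing over the at most $L\ell'$ pairs produces a total of order $L\ell'\,q^2\le(Lq)\cdot\epsilon d\cdot q^{1-\gamma}=o(1)$, since $L=o(1/q)$ and $\gamma<1$.

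\textbf{Main obstacles.} The most delicate step in part (ii) is the concentration upgrade: since the relaxation time of the auxiliary chain is itself of order $\bbE[\tau']$, the standard hypothesis $T_{\rm rel}\ll\bbE[\tau']$ of textbook exponential-tail theorems fails, and the probability bound must be obtained via a direct Dirichlet-form or capacity argument in the spirit of the lower bound of Theorem \ref{paletti}. In part (i) the analogous difficulty is to make the local mixing robust against an adversarial left-boundary trajectory: the natural workaround is to show, by a comparison with the stationary East chain relying on Proposition \ref{prem}(ii) and the convergence result of \cite{CMST}, that the rare event in which $\eta_{z_1-1}(s)$ is essentially never vacant during $[0,t]$ contributes $o(1)$ to the total probability, while on its complement one can combine the graphical decoupling with the mixing estimate above to close the argument.
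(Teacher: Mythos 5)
Both parts of your proposal contain genuine gaps, and in each case the paper has to work around exactly the step you take for granted.

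In part (ii), the reduction to an auxiliary chain on $[L-\ell-1,L-1]$ with the left boundary frozen at $0$ rests on a ``monotone coupling'' showing that a permanently vacant boundary is the worst case for creating a vacancy at $L-1$. The East model is not attractive: under the basic coupling a legal ring at $L-\ell$ in the auxiliary chain may \emph{remove} a vacancy that the true chain (where that ring is illegal) keeps and later uses to facilitate sites to its right, so no pathwise domination holds, and no such stochastic domination is known. This is precisely why the paper does not argue this way: it introduces the bottleneck set $A_*$ of Definition \ref{def:Astar} on $\L'=[z+1,L]$, observes that $\{\eta_L(t)=1\}$ forces the trajectory to visit $\partial^\L A_*$, compares the deterministic initial condition on $\L'$ with the stationary one at the price of a factor $1/\pi(\eta_{\L'})\le e/q$ (using that the restricted dynamics preserves $\pi$ on $\L'$), and then combines Corollary \ref{boundarymeasure} with a union bound over the rings in $[0,t]$ and Theorem \ref{bubu}(ii). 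Incidentally, your stated ``main obstacle'' (that $T_{\rm rel}\ll\bbE[\tau']$ fails) is not the issue: the small-time bound $\bbP(\tau\le s)\le e\,s/\bbE[\tau]$ is already available from Proposition \ref{dominare} via submultiplicativity \eqref{tortino2}, with no spectral hypothesis; the missing step is the domination itself.

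In part (i), the per-pair estimate $\bbP_\eta(\eta_{z_1}(t)=\eta_{z_2}(t)=0)\le q^2(1+o(1))$ is false under the supremum over initial configurations. Take $\eta$ with vacancies only at $z_1,z_2$, with $z_1$ at distance $\gg d/q^\g$ from the nearest vacancy (or from the origin) on its left: then the vacancy at $z_1$ persists up to time $t$ with probability $1-o(1)$ (its removal requires a time of order $T_{\rm rel}$ of a much longer interval), while $z_2$, facilitated by the persistent vacancy at $z_1$, re-equilibrates and is vacant at time $t$ with probability of order $q$. The worst-case pair probability is therefore of order $q$, not $q^2$, and the union bound $L\ell' q=(Lq)\,\epsilon d\,q^{-\g}$ need not be $o(1)$ (e.g.\ $Lq=q^{\g/2}$), so your argument does not close; moreover the local-mixing step itself breaks down for such $\eta$, and the fix you suggest via Proposition \ref{prem}(ii) and \cite{CMST} only applies to product initial laws, not to $\sup_\eta$. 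The paper's proof is structured to avoid exactly this: it first applies a single \emph{global} bound, \eqref{fabio3bis}, showing that with probability $1-O(qL)=1-o(1)$ no vacancy present at time $t$ was ever removed and re-created, and then, for the remaining event that \emph{both} $z_1$ and $z_2$ stay vacant throughout $[0,t]$, conditions on the persistent vacancy at $z_1$ and bounds the persistence of $z_2$ by $\bbP^{\L_{z,z'}}_{\mathds{1}0}(\tau_{\sigma_{z'}=1}>t)\le e^{-ct/T_{\rm hit}(z'-z)}$ via \eqref{tortino1}; Theorems \ref{equo} and \ref{bubu}(ii) make each such term stretched-exponentially small, so the union over the $\le L\ell'$ pairs is harmless.
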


\section{Preliminary results on hitting times}
\label{hit}

With $\L=[1,L]$ we prove some preliminary results about  the hitting time $\t_{\eta_L=1}$ under $\bbP^\L_{\mathds{1}0}$ as well
of the hitting time  $\t_{\eta_L=0}$ under
$\bbP^\L_{\mathds{1}}$. The last one was also analyzed in
\cites{SE1,SE2} by simulations and assumed to be equivalent to $\t_{\eta_L=1}$. 
We also show that all the relevant time scales are
non-decreasing as a function of $L$.
To simplify notation  we write $\t_L := \t_{\eta_L =1}$
and $\hat\t_L := \t_{\eta_L =0}$.

\begin{lemma}
\label{monotone}
The time scales $T_{\rm rel}(L),
  T_{\rm mix}(L), T_{\rm hit}(L)$ are non-decreasing in $L$.  
\end{lemma}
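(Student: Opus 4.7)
The plan is to handle the three time scales separately, using monotonicity mechanisms tailored to each; fix $L<L'$ and write $\L=[1,L]$, $\L'=[1,L']$.

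For $T_{\rm rel}$ I would invoke the variational formula \eqref{eq:gap}. Given any non-constant $f:\O_\L\to\bbR$, its trivial extension $\tilde f(\sigma):=f(\sigma|_\L)$ to $\O_{\L'}$ has $\var_x(\tilde f)=0$ for $x>L$, while for $x\in\L$ the constraint $c^{\L'}_x$ coincides with $c^\L_x$ (both depend only on the site immediately to the left); integrating out the extra coordinates against the product measure $\pi_{\L'}$ then gives $\cD_{\L'}(\tilde f)=\cD_\L(f)$ and $\var_{\L'}(\tilde f)=\var_\L(f)$, so taking the infimum yields $\gap(\cL_{\L'})\le\gap(\cL_\L)$.

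For $T_{\rm mix}$ I would use the fact that the same leftward-only structure of the constraints makes the projection of the $\L'$-process onto its first $L$ coordinates itself the East process on $\L$: for every $\sigma'\in\O_{\L'}$, the law of $\sigma(t)|_\L$ under $\bbP^{\L'}_{\sigma'}$ coincides with the law of $\sigma(t)$ under $\bbP^\L_{\sigma'|_\L}$. Since $\pi_{\L'}$ marginalises to $\pi_\L$, the contraction of total variation under projection gives
\[
\|\bbP^{\L'}_{\sigma'}(\sigma(t)\in\cdot)-\pi_{\L'}\|_{\rm TV}\ge\|\bbP^\L_{\sigma'|_\L}(\sigma(t)\in\cdot)-\pi_\L\|_{\rm TV}
\]
for every $\sigma'$, and taking the supremum over $\sigma'$ (which covers every $\sigma\in\O_\L$) gives $T_{\rm mix}(L)\le T_{\rm mix}(L')$.

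I expect the main obstacle to be $T_{\rm hit}$; here I would proceed in two coupled steps, taking $L'=L+1$ (the general case follows by iteration). (a) Starting the $\L'$-process from $\mathds{1}0\in\O_{\L'}$ (vacancy only at $L+1$), the constraint at $L+1$ forces $\t_{\eta_{L+1}=1}\ge\t_{\eta_L=0}$ pathwise; by the marginal identity above, the latter has, under this law, the same distribution as under $\bbP^\L_{\mathds{1}}$, so $T_{\rm hit}(L+1)\ge\bbE^\L_{\mathds{1}}[\t_{\eta_L=0}]$. (b) The remaining inequality $\bbE^\L_{\mathds{1}}[\t_{\eta_L=0}]\ge T_{\rm hit}(L)=\bbE^\L_{\mathds{1}0}[\t_{\eta_L=1}]$ I would extract from the graphical construction as follows. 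For any starting configuration equal to $1$ on $[1,L-1]$ the sites $[1,L-1]$ evolve autonomously (their constraints ignore $\eta_L$), so the sequence of legal-ring times $R_1<R_2<\dots$ at $L$ is a common random sequence independent of the i.i.d.\ Bernoulli$(1-q)$ coins $C_1,C_2,\dots$ at $L$, and on this space $\t_{\eta_L=0}=R_{N_0}$, $\t_{\eta_L=1}=R_{N_1}$ with $N_j:=\min\{k:C_k=j\}$. Since $q<1/2$, $\bbP(N_0\ge k)=(1-q)^{k-1}\ge q^{k-1}=\bbP(N_1\ge k)$, so $N_0$ stochastically dominates $N_1$; combined with the fact that $k\mapsto R_k$ is non-decreasing and independent of $(C_k)$, this yields $\bbE[R_{N_0}]\ge\bbE[R_{N_1}]$, completing the argument.
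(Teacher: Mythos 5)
Your proposal is correct, and its overall skeleton is the same as the paper's: for $T_{\rm rel}$ and $T_{\rm mix}$ the paper simply notes that the projection of the $[1,L]$-process onto its first $L-1$ coordinates is again the East process there (your test-function extension for the gap and total-variation contraction under marginalisation are this observation written out), and for $T_{\rm hit}$ the paper uses precisely your chain, namely $T_{\rm hit}(L)\ge \bbE^{\L}_{\1}\bigl[\hat\t_{L-1}\bigr]\ge T_{\rm hit}(L-1)$, exploiting the autonomy of the sites to the left of the last one. The one genuine difference is how the comparison between emptying and filling the last site is established. The paper invokes Proposition \ref{dominare}, which proves the full stochastic domination of $\t_{L}$ under $\bbP^{\L}_{\10}$ by $\hat\t_{L}$ under $\bbP^{\L}_{\1}$ through an explicit coupling in which the Bernoulli update at a legal ring at the last site is anti-coupled (valid because $p-q\ge 0$, i.e. $q\le 1/2$, the same restriction your step uses). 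You instead write both hitting times as the common, coin-independent sequence of legal-ring times at the last site evaluated at a geometric index, and compare the geometric($q$) and geometric($p$) indices; this yields the inequality in expectation, which is all the lemma requires, whereas the paper's coupling gives the stronger distributional statement that it also exploits elsewhere (e.g. in \eqref{mucca} and in Appendix A.2). Your representation does implicitly use that the coins seen at legal rings at $L$ remain i.i.d.\ Bernoulli$(1-q)$ and independent of the legal-ring times; this is exactly the independence property of the graphical construction stated in the paper's remark in Section \ref{graphical}, so the step is legitimate. Both routes are sound; yours is marginally more elementary, the paper's is stronger and reusable.
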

\begin{proof}
Fix an integer $L$ and consider the East model in $\L=[1,L]$. 
Clearly the restriction of the process to the first $L-1$ sites coincides with 
the East model in $[1,L-1]$. Hence $T_{\rm
  rel}(L)\geq T_{\rm rel}(L-1)$ and similarly for $T_{\rm
  mix}(L)$. 
As far as the mean hitting time is concerned we just observe that, in order to remove the last vacancy at $L$ starting
from $\mathds{1}{0}$, we need to wait at least the first time $\hat\tau_{L-1}$ for the process to remove the initial particle at $L-1$. 
Thus 
\[
T_{\rm hit}(L)=\bbE_{\mathds{1}{0}}^\L[\tau_{L}]\ge
\bbE_{\mathds{1}}^\L[ \hat\tau_{L-1}]\,.
\] 
Let $\L' = [1,L-1]$, then as shown in Proposition \ref{dominare} below  $\bbE_{\mathds{1}}^\L[\hat\tau_{L-1}]\ge \bbE_{\mathds{1}{0}}^{\L'}[\tau_{L-1}]=T_{\rm hit}(L-1)$.
\end{proof}

\begin{proposition}\label{dominare}
The hitting time $\t_{L}$ under $\bbP^{\L}_{\mathds{1}0}$ is stochastically dominated by the hitting time
$\hat\t_{L}$ under $\bbP^{\L}_{\mathds{1}}$ and stochastically dominates  $\hat\t_{L-1}$ under $\bbP^{\L}_{\mathds{1}}$.
Moreover
\begin{equation}
\label{mucca}
\bbE ^{\L}_{\mathds{1}} \left[ \hat\t_{L-1} \right] \leq 
\bbE ^\L_{\mathds{1}0} \left[ \t_{L} \right]  \leq 5\,\bbE^{\L}_{\mathds{1}} \left[ \hat\t_{L-1} \right] \,.
\end{equation}
\begin{align}
& \bbP^\L_{\mathds{1}0 }  \left( \t_{L} >t \right) \leq (1/4) ^{\lfloor   t/T(L) \rfloor } \,, \label{tortino1}\\
& \bbP^\L_{\mathds{1}0 }  \left( \t_{L} <t \right) \leq  e
  t/T_{\rm hit}(L),\label{tortino2}
\end{align}  
where $T(L)$ is characterized by the identity 
$ P^\L_{\mathds{1}0 }  \left( \t_{L} >T(L)\right)=1/4$.
In addition,  $T(L)$ is  bounded below by $T_{\rm mix}(L)$ and  satisfies 
\begin{equation}\label{contucci}
 (1/4)T_{\rm hit}(L)  \leq T(L) \leq 4 T_{\rm hit}(L) \,,
\end{equation}
\end{proposition}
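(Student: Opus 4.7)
The two stochastic-domination statements are both obtained from the basic coupling of Section \ref{graphical} run simultaneously from $\10$ and from $\1$. Because these initial conditions agree on $[1,L-1]$ and the constraints at $[1,L-1]$ involve only $[1,L-1]$, the coupled processes coincide on $[1,L-1]$, and in particular $\hat\t_{L-1}$ is realised at the same instant in both. The inequality $\t_L\ge\hat\t_{L-1}$ holds pathwise because $\eta_L$ can change only at a ring at $L$ whose legality requires $\eta_{L-1}=0$. For $\t_L\le_{st}\hat\t_L$, let $\sigma_1<\sigma_2<\cdots$ be the common sequence of legal rings at $L$: $\t_L=\sigma_K$ with $K$ the first index $k$ such that $s_{L,k}=1$, and $\hat\t_L=\sigma_{\hat K}$ with $\hat K$ the first $k$ such that $s_{L,k}=0$. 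Since $(s_{L,k})_k$ is i.i.d.\ Bernoulli$(1-q)$ with $1-q>1/2$, $K$ (geometric with parameter $1-q$) is stochastically smaller than $\hat K$ (geometric with parameter $q$); monotonicity of $k\mapsto\sigma_k$ then yields $\t_L\le_{st}\hat\t_L$.

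The lower bound in \eqref{mucca} is immediate. For the upper bound I would run a one-step renewal analysis at $\hat\t_{L-1}$. Just after $\hat\t_{L-1}$ the next ring at $L$ and the next legal $1$-coin ring at $L-1$ (which would restore $\eta_{L-1}=1$) race; both are stochastically at least Exp$(1)$, so with probability at least an absolute constant the ring at $L$ arrives first, and its coin is $1$ independently with probability $1-q>1/2$. On this event $\t_L$ occurs within an extra Exp$(1)$; on the complementary event one returns to a state $\eta'$ with $\eta'_L=0$, and the remaining hitting time is controlled via the key monotonicity
\begin{equation*}
\bbP^\L_{\eta'}(\t_L>t)\le \bbP^\L_{\10}(\t_L>t)\qquad\text{for every }\eta'\text{ with }\eta'_L=0 .
\end{equation*}
Strong Markov then closes a geometric renewal series bounded by $C\,\bbE^\L_\1[\hat\t_{L-1}]$; an elementary bookkeeping of constants, using $q<1/2$, gives $C=5$.

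Once the monotonicity displayed above is in hand, all the remaining statements follow from the submultiplicativity
\begin{equation*}
f(s+t)\le f(s)\,f(t),\qquad f(t):=\bbP^\L_{\10}(\t_L>t),
\end{equation*}
obtained by applying the strong Markov property at time $s$ on $\{\t_L>s\}$ (where $\eta_L(s)=0$) and invoking the monotonicity. Iterating at $s=T(L)$, with $f(T(L))=1/4$, gives \eqref{tortino1}. Integrating, $\int_t^\infty f(u)\,du=\int_0^\infty f(t+u)\,du\le f(t)\,T_{\rm hit}(L)$, together with $T_{\rm hit}(L)=\int_0^t f+\int_t^\infty f$, yields $\bbP^\L_{\10}(\t_L<t)=1-f(t)\le t/T_{\rm hit}(L)$, stronger than \eqref{tortino2}. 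Markov's inequality gives $T(L)\le 4T_{\rm hit}(L)$, and integrating \eqref{tortino1} gives $T_{\rm hit}(L)\le(4/3)T(L)$, hence $T(L)\ge T_{\rm hit}(L)/4$. Finally $T(L)\ge T_{\rm mix}(L)$ follows from the general mixing-versus-hitting comparison of \cites{O,PS} applied to the likely set $\{\eta_L=1\}$, whose worst starting configuration is $\10$ by the same monotonicity. The main obstacle is therefore the monotonicity itself: the East process is \emph{not} monotone under the basic coupling in the componentwise order, so one must use a specialised coupling that exploits the strictly left-to-right propagation of the constraint (adding a $0$ to the initial configuration can only help create further $0$s to its right).
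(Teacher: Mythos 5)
Most of your proposal follows the paper's own route: the basic coupling yields both stochastic dominations (your re-indexing of the coins at $L$ along the common sequence of legal rings, giving geometric indices of parameters $1-q$ and $q$, is a harmless variant of the paper's three-case re-coupling of the coin at $L$), the right inequality in \eqref{mucca} is proved by exactly the same renewal step at $\hat\t_{L-1}$ (the constant bookkeeping you defer does close at $5$: racing the first ring in $\{L-1,L\}$ gives $2/p+1/2\le 4.5$, using $\bbE^\L_{\1}[\hat\t_{L-1}]\ge 1$), and \eqref{tortino1} comes from the same submultiplicativity of $t\mapsto \bbP^\L_{\10}(\t_L>t)$. Your integration argument for \eqref{tortino2} is in fact cleaner and slightly stronger than the paper's Lemma \ref{zanzibar}.

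Two points need attention. First, the extremality of $\10$, i.e. $\sup_\eta\bbP^\L_\eta(\t_L>t)=\bbP^\L_{\10}(\t_L>t)$, which you correctly identify as the crux but explicitly leave unproved (``the main obstacle''), is not an obstacle and does not require any new coupling: it is a direct consequence of the coalescence property \eqref{basic-coupling} of the graphical construction. Before the first legal ring $\t^{(L-1)}$ of the chain started from $\1$, the chain started from $\10$ still has a $1$ at $L-1$ and hence sees no legal rings at $L$; for $t\ge\t^{(L-1)}$ all initial conditions agree on $[1,L-1]$, so the legal rings at $L$ are common to all chains, and at the first such ring with coin $1$ (which is when the $\10$-chain first puts a $1$ at $L$) every other chain has a $1$ at $L$ as well. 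Your parenthetical heuristic that adding a zero to the initial condition ``can only help create zeros to its right'' is not the mechanism that works (as you note, East is not attractive in the componentwise order); the coalescence argument above is what the paper invokes. Second, the inequality $T(L)\ge T_{\rm mix}(L)$ cannot be deduced from \cites{O,PS}: those results identify mixing times with hitting times of large sets only up to universal constants, whereas the proposition asserts the exact inequality. The paper obtains it from the coupling bound \eqref{onda}: $1/4=\Delta(T_{\rm mix}(L))\le\bbP^\L_{\1}\bigl(\t^{(L)}>T_{\rm mix}(L)\bigr)\le\bbP^\L_{\10}\bigl(\t_L>T_{\rm mix}(L)\bigr)$, and then $T_{\rm mix}(L)\le T(L)$ follows from the monotonicity of the tail and the definition of $T(L)$ (cf. \eqref{torneo}). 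With these two repairs your argument is complete.
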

\begin{remark}
Results similar to \eqref{tortino1}, \eqref{tortino2} and
\eqref{contucci} hold for the hitting time $\hat \t_L$ under
$\bbP_{\1}^\L$.   
\end{remark}


\begin{proof}
  We first prove the stochastic domination. 
  The fact that $\hat\t_{L-1}$ is stochastically dominated by $\t_L$ is trivial, since in order 
  to create a particle at $L$ starting from $\10$, one first has to create a 
  zero at $L-1$. In particular $\bbE ^{\L}_{\mathds{1}} \left[ \hat\t_{L-1} \right] \leq 
\bbE ^\L_{\mathds{1}0} \left[ \t_{L} \right]$.
  We now prove that $\t_L$ starting from $\xi := \10$ is
  stochastically dominated by $\hat\t_L$ starting from $\hat\xi:=\1$.
  We couple the two evolutions $\xi(t)$ and $\hat\xi(t)$, starting
  from $\mathds{1}0$ and $\mathds{1}$ respectively,  as follows:
  \begin{itemize}
  \item For all sites $x \in [1,L-1]$ use the basic coupling described in Section
    \ref{graphical}, so that $\xi_x(t) = \hat\xi_{x}(t)$
  \item  If at time $t$ there is a legal ring at site $L$ then, setting $p=1-q$,
    $$
    \begin{cases}
      \xi_L(t)=0\,,\qquad  \hat\xi_L(t)=1 &
      \text{ with probability $q$,}\\
      \xi_L(t)=1\,,\qquad  \hat\xi_L(t)=1 &
      \text{ with probability $p-q$,}\\
      \xi_L(t)=1\,,\qquad  \hat\xi_L(t)=0 &
      \text{ with probability $q$.}
    \end{cases}
    $$
  \end{itemize}
  We now observe that for  both processes the legal rings at site
  $L$ coincide and both processes restricted to $[1,L-1]$ are identical.
  Suppose $t$ is the first time that $\hat\xi_L(t)=0$, i.e. $t =
 \hat\t_L$. 
 Then the third case in  the above list happens, in particular  $\xi_L(t)=1$. 
 This implies that $t \geq \t_L$. This concludes our proof of the stochastic domination.

We now prove the second upper bound in \eqref{mucca}. Equivalently,  we need   to prove that  $\bbE ^\L_{\mathds{1}0} \left[ \t_{L} \right]  \leq 5\,\bbE^{\L}_{\mathds{1}0} \left[ \hat\t_{L-1} \right]$, hence we consider the East process on $\L$ starting from $\mathds{1}0$.
Let $\t$ be the waiting time after $\hat\t_{L-1}$ for the first ring on site $L-1$ or $L$ (note that this is not necessarily legal if the first ring occurs at $L-1$), then $\t \sim \text{Exp}(2)$ and is independent of  $\hat\t_{L-1}$.
Let $A$ be the event that this ring occurs on site $L$, not $L-1$, and
it updates to a $1$, so that $A := \{ \eta_{L}(\hat\t_{L-1}+\tau) = 1 \}$.
Then $\chi := \1_A\sim \text{Binomial}(p/2)$ and is independent of $\t$ and $\hat\t_{L-1}$ (see the graphical construction in Section \ref{graphical}).
If $\chi = 0$ then we couple the process started from $\eta(\hat\t_{L-1}+\tau)$ with that started from $\10$ according to the basic coupling.
Then by \eqref{basic-coupling} the hitting time of $\{\eta:\ \eta_L = 1\}$
starting from $\eta(\hat\t_{L-1}+\tau)$ is dominated by the hitting
time of $\{\eta:\ \eta_L=1\}$ started from $\10$, call this  $\t'$. 
Clearly $\t'$ is independent of $\chi$.
It follows that
$$
\t_L \leq \hat\t_{L-1} + \chi\, \t + (1-\chi)\,\t'\ .
$$ 
Taking expectation (where $\bbE^\L$ refers to the basic coupling) we have,
\begin{align*}
  \bbE_{\10}^\L[\t_{L}] &\leq \bbE_{\10}^\L[\hat\t_{L-1}] + \bbE^\L[\chi\, \t] + \bbE^\L[(1-\chi)\,\t']\\
  &\leq \bbE_{\10}^\L[\hat\t_{L-1}] + \frac{p}{4}+ \left( 1 -\frac{p}{2} \right) \bbE_{\10}^\L[\t_{L}]
\end{align*}
so,
\begin{align*}
  \bbE_{\10}^\L[\t_L] &\leq \frac{2}{p}\bbE_{\10}^\L[\hat\t_{L-1}] + \frac{1}{2}\leq \left(  \frac{2}{p}+ \frac{1}{2} \right)\bbE_{\10}^\L[\hat\t_{L-1}] 
\end{align*}
which clearly leads to the second upper bound in \eqref{mucca} (recall that $q \leq 1/2$).
 
In order to prove \eqref{tortino1} it is enough to show that,  given $t,s\geq 0$, 
 \begin{equation}
   \label{tommy} 
   \bbP^\L_{\mathds{1}0} \bigl( \t_{L} >t+s\bigr) \leq \bbP^\L_{\mathds{1}0} \bigl( \t_{L} > t\bigr) 
   \bbP^\L_{\mathds{1}0} \bigl( \t_{L} > s\bigr) \,.
 \end{equation}
 Indeed, by the Markov property we can write 
 \begin{gather*}
   \bbP^\L_{\mathds{1}0} \bigl( \t_{L} > t+s \bigr) = \bbE^\L_{\mathds{1}0 } \left[\mathds{1}_{\{
       \t_L >t\}}\, \bbP^\L_{\eta(t)} \bigl( \t_{L} > s \bigr)
   \right]\\
   \leq \bbP^\L_{\mathds{1}0} \bigl( \t_{L} > t\bigr)
   \sup _{\eta \in \O_\L}  \bbP^\L_{\eta} \bigl( \t_{L} > s\bigr)
   \,.
 \end{gather*}
 To conclude we observe that the last supremum is realized by
 $\eta= \mathds{1}0$, which follows from the basic coupling described
 in Section \ref{graphical} (cf. \eqref{basic-coupling}).
 Note that by the same arguments \eqref{tommy} holds 
 replacing $ \bbP^\L_{\mathds{1}0}$ and $ \t_{L}$ by   $ \bbP^\L_{\mathds{1}}$ and  $\hat\t_{L}$, thus leading to the same result  \eqref{tortino1} for the other hitting time. 
 Therefore, from now on we concentrate only on $\t_{L}$ under $ \bbP^\L_{\mathds{1}0}$ (the other hitting time can be treated similarly). 
 Integrating \eqref{tortino1} one gets the left bound in \eqref{contucci}, while one derives the right bound directly by the definition of $T(L)$. 
 The fact that $T(L) \geq T_{\rm mix}(L)$ is proved in \eqref{torneo}. At this point it remains to prove 
 \eqref{tortino2} which follows from a general fact, Lemma \ref{zanzibar} below, and \eqref{tommy}.
\end{proof}

\begin{lemma}\label{zanzibar}
Let $\t$ be a positive random variable such that   $t\mapsto
\bbP(\tau\ge t)$ is continuous and sub-multiplicative:
$$ \bbP( \t >t+s) \leq\bbP( \t >t) \bbP(\t>s)\,, \qquad \forall t,s \geq 0\,.$$
Then 
\[
\bbP(\t<t) \leq  e\, t/\bbE(\tau),\quad  \forall t>0.
\]
\end{lemma}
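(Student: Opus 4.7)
The plan is to leverage the sub-multiplicative structure to bound $\bbE(\tau)$ from above in terms of the survival function $G(t):=\bbP(\tau \geq t)$, then simply rearrange. The elementary observation is that, under sub-multiplicativity, $\tau$ behaves ``no worse than exponential'', so one should expect a Markov-type bound $\bbP(\tau<t)\lesssim t/\bbE(\tau)$.

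First I would iterate the hypothesis to obtain $G(nt)\leq G(t)^n$ for every integer $n\geq 0$ and every $t>0$; this is a trivial induction using $G(t+s)\leq G(t)G(s)$. The continuity of $G$ ensures $G(t)=\bbP(\tau>t)$, so by the usual formula $\bbE(\tau)=\int_0^\infty G(s)\,ds$.

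Next, for a fixed $t>0$, I would split the integral into blocks of length $t$ and use monotonicity of $G$ on each block together with the iterated bound from the previous step:
\[
\bbE(\tau)=\sum_{n=0}^{\infty}\int_{nt}^{(n+1)t}G(s)\,ds \;\leq\; t\sum_{n=0}^{\infty}G(nt) \;\leq\; t\sum_{n=0}^{\infty}G(t)^{n}.
\]
If $G(t)=1$ then $\bbP(\tau<t)=0$ and the claim holds trivially; otherwise the geometric series sums to $1/(1-G(t))$, giving $\bbE(\tau)\leq t/(1-G(t))$. Rearranging yields $\bbP(\tau<t)=1-G(t)\leq t/\bbE(\tau)\leq e\,t/\bbE(\tau)$, which is even stronger than what is claimed.

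There is no real obstacle here; the $e$ factor in the statement is slack. The only minor care needed is in the degenerate case $G(t)=1$ (where the geometric series diverges but the conclusion is vacuous) and in using continuity of $G$ to identify $\bbE(\tau)=\int_0^\infty G(s)\,ds$ — both handled in one line.
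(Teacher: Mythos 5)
Your proof is correct, and it is a genuinely different (and in fact sharper) argument than the one in the paper. Both proofs start from the same iteration of sub-multiplicativity, $\bbP(\tau>nt)\le \bbP(\tau>t)^n$, but they convert it into a bound on $\bbE(\tau)$ differently. The paper relaxes the iterated bound to $\bbP(\tau>s)\le \bbP(\tau>t)^{s/t-1}$, integrates over the continuum to get $\bbE(\tau)\le t\bigl[\bbP(\tau>t)\log\frac{1}{\bbP(\tau>t)}\bigr]^{-1}$, and then needs a two-case analysis around the level $t_*$ defined by $\bbP(\tau\ge t_*)=e^{-1}$, together with the inequality $x\le \log\frac{1}{1-x}$; this is precisely where the factor $e$ enters. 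You instead block the tail integral $\bbE(\tau)=\int_0^\infty G(s)\,ds$ into intervals of length $t$, use monotonicity of $G$ on each block and sum the geometric series, obtaining $\bbE(\tau)\le t/(1-G(t))$ whenever $G(t)<1$; this rearranges directly into $\bbP(\tau<t)\le t/\bbE(\tau)$ with no case distinction and no loss of a constant (note that $\log(1/x)\le (1-x)/x$ shows your intermediate bound on $\bbE(\tau)$ is never weaker than the paper's). Your handling of the degenerate case $G(t)=1$ and your use of continuity only to identify $\bbP(\tau\ge t)$ with $\bbP(\tau>t)$ are both fine, so the conclusion $\bbP(\tau<t)\le t/\bbE(\tau)\le e\,t/\bbE(\tau)$ is fully justified; what you gain over the paper is a shorter argument and a constant $1$ in place of $e$, while the paper's route has the minor virtue of producing along the way the explicit quantitative relation between $\bbE(\tau)$ and any quantile of $\tau$ (e.g.\ $\bbE(\tau)\le e\,t_*$), which is occasionally useful in its own right.
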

\begin{proof}
Let $t$ be such that $\bbP(\tau > t) <1$. The sub-multiplicative property implies that 
$$ 
\bbP( \t> s) \leq \bbP(\tau > t)^{\lfloor s/t\rfloor
}\leq \bbP(\tau > t)^{s/t-1}, \qquad \forall s \geq 0\,.
$$ 
Integrating over $s$ we get
\[
\bbE(\t) \leq t\left[ \bbP(\tau > t)\log\frac{1}{\bbP(\tau > t)}\right]^{-1}.
\] In particular
$\bbE(\tau)\le e\, t_*$ where $t_*$ is such that $\bbP(\tau \ge
t_*)=e^{-1}$. Assume now $t \le t_*$. Then $\bbP(\tau \ge t) \ge e^{-1}$
and 
\begin{align*}
\bbP(\tau \le t) &\le  \log\Bigl(\frac{1}{1-\bbP(\tau\le
    t)}\Bigr)\le e\, \bbP(\tau \ge t) \log\Bigl(\frac{1}{1-\bbP(\tau\le
    t)}\Bigr) \le e\, t/\bbE(\tau) \,.
\end{align*}
Note that in the first inequality we have used that $x \leq \log \frac{1}{1-x}$ for all $x \in (0,1)$.
If instead $t\ge t^*$ then $e\, t/\bbE(\t)\ge 1$ and there is nothing
to prove.
\end{proof}

For system sizes which are much shorter than the equilibrium length
scale we observe a loss of memory property, that leads to $\t_L$ being
approximately exponentially distributed, a phenomena which is
typically associated with metastable dynamics.
\begin{lemma}
If $L =  d/q^\g$ for some $d>0$ and $\g\in[0,1)$, then 
  \begin{align}
    \label{eq:4}
    \lim_{q\searrow 0} \bbP^\L_{\10}\left(\frac{\t_L}{\bbE_{\10}^\L[\t_L]}
        > t\right) = e^{-t}\,.
  \end{align}
\end{lemma}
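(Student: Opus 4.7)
The plan is to derive the exponential limit from an explicit Laplace-transform representation for the tail of $\t_L$, combined with the sub-multiplicativity \eqref{tommy} and the uniform tail bound from \eqref{tortino1}--\eqref{contucci}.

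First, because the restriction of the dynamics to $[1,L-1]$ is autonomous (the constraint at site $x\le L-1$ only depends on $\eta_{x-1}$), and conditional on the trajectory $\eta_{[1,L-1]}(\cdot)$ the rings at site $L$ which are legal and flip $\eta_L$ to $1$ form an inhomogeneous Poisson process of intensity $(1-q)\,\1\{\eta_{L-1}(\cdot)=0\}$, the graphical construction of Section \ref{graphical} yields the representation
\[
\bbP^\L_{\10}(\t_L>t)=\bbE^{[1,L-1]}_{\mathds 1}\!\left[\exp\bigl(-(1-q)\,X_t\bigr)\right],\qquad X_t:=\int_0^t\1\{\eta_{L-1}(s)=0\}\,ds.
\]
Setting $T:=\bbE^\L_{\10}[\t_L]$ and $\phi_q(u):=\bbP^\L_{\10}(\t_L>uT)$, the inequalities \eqref{tortino1} and \eqref{contucci} give the uniform tail bound $\phi_q(u)\le 4^{1-u/4}$, so $\{\phi_q\}$ is tight and $\{\t_L/T\}$ is uniformly integrable, while \eqref{tommy} gives sub-multiplicativity $\phi_q(u+v)\le\phi_q(u)\phi_q(v)$.

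By Helly's theorem, any subsequential pointwise limit $\phi_\infty$ of $\phi_q$ as $q\downarrow 0$ is non-increasing, right-continuous, sub-multiplicative, satisfies $\phi_\infty(0)=1$, and by uniform integrability $\int_0^\infty\phi_\infty(u)\,du=\lim_{q\downarrow 0}\bbE[\t_L/T]=1$. To identify $\phi_\infty(u)=e^{-u}$ I would establish the complementary approximate super-multiplicative estimate
\[
\phi_q(u+v)\ge\phi_q(u)\,\phi_q(v)-o_q(1),\qquad u,v>0,
\]
from which $\phi_\infty$ is multiplicative, hence of the form $e^{-\lambda u}$, and then the normalization $\int\phi_\infty=1$ forces $\lambda=1$. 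Applying the Markov property at time $uT$ and using the representation on each of the two resulting time intervals, this step reduces to a loss-of-memory statement: under the tilted law conditional on $\{\t_L>uT\}$, the state $\eta_{[1,L-1]}(uT)$ should produce a remaining Laplace transform of $X$ close to $\phi_q(v)$.

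The main obstacle is precisely this loss-of-memory estimate. By Theorem \ref{equo} and Lemma \ref{monotone}, the mixing time of the East chain on $[1,L-1]$ is of the same order as $T$, so a direct "fast mixing inside a metastable basin" argument is unavailable. A more promising alternative, which bypasses approximate super-multiplicativity altogether, is to prove concentration of $X_{uT}$ directly: a second moment bound $\operatorname{Var}((1-q)X_{uT})=o(1)$ around a deterministic limit $c(u)$, obtained from the spectral gap of Theorem \ref{paletti} and a covariance estimate for $\1\{\eta_{L-1}(\cdot)=0\}$ in the $[1,L-1]$ chain (whose indicator has equilibrium mass $q$ and two-time correlations decaying on the spectral gap scale), would give $\phi_q(u)\to e^{-c(u)}$ directly, with $c(u)=u$ identified by the first-moment normalization.
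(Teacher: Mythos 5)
There is a genuine gap, and it sits exactly at the step you flag and then abandon. Your first route (Laplace representation, sub\-/multiplicativity from \eqref{tommy}, the uniform tail bound from \eqref{tortino1} and \eqref{contucci}, Helly plus uniform integrability) is sound as far as it goes, but the identification of the limit requires the two\--sided approximate factorization, and you do not prove it; moreover the reason you give for abandoning it is not the relevant issue. No ``fast mixing inside a metastable basin'' is needed: the point is that at \emph{any} fixed time $sE$ (with $E=\bbE^\L_{\10}[\t_L]$), on the event $\{\t_L>sE\}$ the configuration equals $\10$ except on an event of probability at most $\sum_{x\in\L\setminus\{L\}}\bbP^\L_{\10}(\eta_x(sE)=0)\le Lq=d\,q^{1-\g}$, because every $x<L$ starts at $1$ and $\bbP^\L_\eta(\eta_x(t)=0)\le q$ for all $t$ whenever $\eta_x(0)=1$ (a direct consequence of the graphical construction, \cite{FMRT-cmp}*{Lemma 4.2}); this is precisely where $\g<1$ enters. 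Applying the Markov property at time $sE$ and splitting according to whether $\eta(sE)=\10$ or not then gives $|f(t+s)-f(t)f(s)|\le 2\,\bbP^\L_{\10}\bigl(\{\t_L>sE\}\cap\{\eta(sE)\neq\10\}\bigr)\le 2d\,q^{1-\g}$ uniformly in $s,t$, which is the missing loss\--of\--memory estimate; combined with the normalization $\bbE^\L_{\10}[\t_L/E]=1$ (your Helly/UI argument, or the standard lemma in \cite{Olivieri-Vares}) this yields \eqref{eq:4}. This elementary union bound over the $O(q^{-\g})$ sites is the crux of the argument, and it is what your proposal lacks.

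Your fallback route would fail. The representation $\bbP^\L_{\10}(\t_L>t)=\bbE^{[1,L-1]}_{\1}[e^{-(1-q)X_t}]$ is correct, but $(1-q)X_{uT}$ does not concentrate and its first moment is not $u$. Indeed $\bbP(\eta_{L-1}(s)=0)=q\,\bbP(\text{some legal ring occurred at }L-1\text{ by time }s)$, and since $\bbE_{\1}[\hat\t_{L-1}]\le T_{\rm hit}(L)=T$ (Proposition \ref{dominare}), for $s\ge 2T$ this probability is at least $q/2$; hence $\bbE[X_{uT}]\ge c\,q\,uT$ for $u>2$, while $T\asymp T_{\rm rel}(L)\gg 1/q$ in the regime at hand (already \eqref{sciroppo} gives this once $\lceil\log_2 L\rceil\ge 2$, and for $\g>0$ one even has $qT\to\infty$ super\--polynomially). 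So $\bbE[(1-q)X_{uT}]\to\infty$ although $\bbE[e^{-(1-q)X_{uT}}]\to e^{-u}\in(0,1)$; a bound $\var\bigl((1-q)X_{uT}\bigr)=o(1)$ is therefore impossible (it would force $\phi_q(u)\to 0$), and the ``first\--moment normalization $c(u)=u$'' is false. The exponential limit stems precisely from the non\--concentration of the occupation time --- with probability close to $e^{-u}$ site $L-1$ has accumulated only $O(1)$ vacancy time, and otherwise an amount of order $quT\gg 1$ --- not from a law of large numbers for $X_{uT}$, so a spectral\--gap/covariance concentration argument cannot replace the loss\--of\--memory step.
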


\begin{proof}
  Let $f(t):=\bbP^\L_{\10}\left(\t_L/\bbE_{\10}^\L[\t_L] > t\right)$,
  we show that
  \begin{align*}
    \left| f(t+s) - f(t)f(s) \right| \leq c q^{1-\g}\,,
  \end{align*}
  for some $c > 0$ independent of $q$. The result then follows by
  standard arguments (for example see \cite{Olivieri-Vares}*{Lemma $4.34$}).
  For brevity of notation let $E:=\bbE_{\10}^\L[\t_L]$. For any
  $s,t>0$ we have
  \begin{align*}
    f(t+s) = \bbP^\L_{\10}\left(\t_L/E > t+s \mid \t_L/E > s\right)f(s)\,,
  \end{align*}
  and by the Markov property,
  \begin{align*}
    \bbP^\L_{\10}\left(\t_L/E > t+s \mid \t_L/E > s\right) &= f(t)\bbP^\L_{\10}\left( \h(s E) = \10 \mid \t_L/E > s\right) +\\
    &+ \bbP^\L_{\10}\left( \{\t_L/E > t+s  \}\cap\{\h(s E) \neq \10\} \mid \t_L/E > s\right)\,.
  \end{align*}
  It follows that 
  \begin{align*}
    \left| f(t+s) - f(t)f(s) \right| &\leq 2 \bbP^\L_{\10}\left( \{\t_L/E > s  \}\cap\{\h(s E) \neq \10\}\right)\\
    &\leq \sum_{x\in\L\setminus {\{L\}}}\bbP^\L_{\10}\left( \eta_x(sE) = 0 \right) \leq d q^{1-\gamma}\,,
  \end{align*}
  where for the last inequality we use that for $t>0$ and $\eta_x(0)=1$ we have $\bbP^\L_{\eta}\left(\eta_x(t) = 0 \right) \leq q$, which is a simple consequence of the graphical construction (see for example \cite{FMRT-cmp}*{Lemma 4.2}). 
\end{proof}


%
%
%
%

\section{Proof of Theorem \ref{equo} }
  
The fact that $T_{\rm rel} (L)
\asymp T_{\rm mix}(L)\asymp T_{\rm hit}(L) $ for $L=O(1/q)$ follows trivially from \eqref{colazione}
since $p^{d/q} \sim e^{-d}$ as $q \downarrow 0$. Thus it is enough to
prove \eqref{colazione}.

We begin by recalling  a general result about hitting times (see Proposition 21 in \cite{Aldous-Fill}). 
\begin{lemma}\label{primastima} Let $A \subset \O_\L$ and $\s \in
  \O_\L$. Let $\t_A$ denotes the hitting time of the set $A$. Then
\begin{equation}
 \bbP^\L_{\s}(\t_A>t)\leq \frac{\p(A^c)}{\p(\s)}
e^{-t\, \p(A)/  T_{\rm rel} (L) }\,.
\end{equation}
\end{lemma}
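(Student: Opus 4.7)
The plan is to combine the Poincaré inequality furnished by the relaxation time with the observation that the function $h_t(\s):=\bbP^\L_\s(\t_A>t)$ vanishes on $A$ for every $t>0$, thereby upgrading the standard spectral gap estimate to one with an extra factor of $\p(A)$ in the exponent.

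First, I would prove the following killed Poincaré inequality: for any $f:\O_\L\to\bbR$ with $f|_A\equiv 0$,
\[
\cD_\L(f)\ \ge\ \frac{\p(A)}{T_{\rm rel}(L)}\,\p_\L(f^2).
\]
Indeed the variational definition \eqref{eq:gap} gives $\cD_\L(f)\ge (1/T_{\rm rel}(L))\,\var_\L(f)=(1/T_{\rm rel}(L))(\p_\L(f^2)-\p_\L(f)^2)$, and since $f$ is supported on $A^c$, Cauchy--Schwarz yields $\p_\L(f)^2\le \p(A^c)\,\p_\L(f^2)$, so $\p_\L(f^2)-\p_\L(f)^2\ge \p(A)\,\p_\L(f^2)$.

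Next I would apply this to $h_t$. Viewing $h_t$ as the action of the killed-on-$A$ semigroup on $\1$, one has $h_t|_A\equiv 0$ and $\partial_t h_t=\cL_\L h_t$ on $A^c$. Consequently
\[
\frac{d}{dt}\p_\L(h_t^2)\ =\ 2\sum_{\s\in A^c}\p_\L(\s)\,h_t(\s)\,\cL_\L h_t(\s)\ =\ -2\,\cD_\L(h_t),
\]
where the last equality uses that $h_t(\s)\,\cL_\L h_t(\s)=0$ on $A$ so the sum can be extended to all of $\O_\L$. Combining with the killed Poincaré inequality,
\[
\frac{d}{dt}\p_\L(h_t^2)\ \le\ -\frac{2\,\p(A)}{T_{\rm rel}(L)}\,\p_\L(h_t^2),
\]
and Grönwall together with $\p_\L(h_0^2)=\p(A^c)$ gives $\p_\L(h_t^2)\le \p(A^c)\exp(-2t\,\p(A)/T_{\rm rel}(L))$.

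Finally, from $\p_\L(\s)\,h_t(\s)^2\le \p_\L(h_t^2)$ we get
\[
h_t(\s)\ \le\ \sqrt{\frac{\p(A^c)}{\p_\L(\s)}}\ \exp\!\Bigl(-\frac{t\,\p(A)}{T_{\rm rel}(L)}\Bigr),
\]
which, since $\p(A^c)/\p_\L(\s)\ge 1$ whenever $\s\in A^c$ (the only case of interest, the bound being trivial for $\s\in A$), implies the stated inequality. The main (only) subtlety is the bookkeeping at the killing boundary when computing $\frac{d}{dt}\p_\L(h_t^2)$; everything else is elementary once the killed Poincaré inequality is in hand.
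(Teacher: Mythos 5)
Your proposal is correct. Note that the paper does not actually prove Lemma \ref{primastima}: it simply quotes it as a general fact about reversible chains (Proposition 21 in the Aldous--Fill monograph). Your argument is a complete, self-contained derivation of that fact, and it follows the standard route in a slightly different packaging: instead of bounding $\bbP^\L_\s(\t_A>t)\le \p(\s)^{-1}\bbP^\L_\p(\t_A>t)$ and then estimating the stationary survival probability via the spectral decomposition of the killed semigroup, you run a Gr\"onwall differential inequality on $\p_\L(h_t^2)$ for $h_t(\s)=\bbP^\L_\s(\t_A>t)$, using the killed Poincar\'e inequality $\cD_\L(f)\ge \p(A)\,\p_\L(f^2)/T_{\rm rel}(L)$ for $f|_A\equiv 0$. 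All the individual steps check out: the Cauchy--Schwarz step giving the factor $\p(A)$, the identity $\tfrac{d}{dt}\p_\L(h_t^2)=-2\cD_\L(h_t)$ (valid because the zero extension of $h_t$ to $A$ satisfies the backward equation on $A^c$ and contributes nothing on $A$), and the final pointwise extraction $\p_\L(\s)h_t(\s)^2\le\p_\L(h_t^2)$. In fact your bound is marginally stronger than the stated one, with prefactor $\sqrt{\p(A^c)/\p(\s)}$ rather than $\p(A^c)/\p(\s)$; the reduction to the stated form via $\p(\s)\le\p(A^c)$ for $\s\in A^c$ (and triviality for $\s\in A$) is fine, and this strengthening is harmless since the lemma is only used with $\s=\mathds{1}0$ and $A=\{\eta_L=1\}$, where the lost factor is immaterial.
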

We can now prove the  first bound $(1-q)^LT_{\rm hit}(L)\le T_{\rm
  rel}(L)$.
As a special case of Lemma \ref{primastima} we have
$$
 \bbP^\L_{\mathds{1}0 }(\t_{\eta_L=1} >t)\leq  
p^{-(L-1)} e^{-t p  /T_{\rm rel}(L)}\,.
$$
Thus
\begin{align*}
T_{\rm hit}(L)&= \bbE^\L _{\mathds{1}0} \left[ \t_{ \eta_L=1} \right] =
\int_0^\infty \bbP^\L_{\mathds{1}0 }(\t_{\eta_L=1} >t)dt \\
&\leq 
(1-q)^{-(L-1)}\int_0^\infty  e^{-t (1-q)  /T_{\rm rel}(L) }dt = (1-q)^{-L}T_{\rm rel}(L)\,.
\end{align*}
The \emph{second bound} $T_{\rm rel}(L) \le T_{\rm mix}(L)$ is a general fact for reversible Markov chains (see
e.g. \cite{Saloff}).

To prove the \emph{last bound} $T_{\rm mix}(L)\le 4 T_{\rm hit}(L)$ we
use a coupling argument. To this aim recall the basic coupling
described in Section \ref{graphical} together with the definition of
the ``legal times'' starting from $\1$, $\t^{(x)}$, $x\in \L$. A standard result (see e.g. \cite {Levin2008}*{Cor. 5.3}) gives
  \begin{equation}\label{onda}
  \Delta(t):= \max_{\eta}\|\bbP_\h^\L(\eta(t)=\cdot) - \pi(\cdot)\|_{\rm TV} \leq \max_{\sigma,\sigma'}\bbP^\L\left( \sigma(s) \not= \sigma'(s)  \; \forall s \in [0,t]\right)\,.
  \end{equation}
Using \eqref{basic-coupling} together with \eqref{onda} we get  
  \begin{align}
\label{torneo}
  1/4 = \Delta(T_{\rm mix}(L) ) 
\leq  \bbP^\L _{\mathds{1}} \bigl( \t^{(L)} >T_{\rm mix}(L)\bigr)\,.
   \end{align}
Hence
\begin{gather*}
T_{\rm hit}(L)\ge T_{\rm mix}(L)\, \bbP^\L _{\mathds{1}0} \bigl(
\t_{\h_L=1}\ge T_{\rm mix}(L)\bigr)\\
\geq \, T_{\rm mix}(L)\bbP^\L _{\mathds{1}} \bigl(
\t^{(L)} >T_{\rm mix}(L)\bigr)\ge \frac 14 T_{\rm mix}(L).  
\end{gather*}
where we used the following elementary observations; the time of the first legal ring at
$L$ does not depend on the initial value $\eta_L$ and 
starting from $\mathds 10$ the hitting time $\t_{\h_L=1}$ is not
smaller than $\t^{(L)}$.


  %
  %
  %
\section{Proof of Theorem \ref{paletti}}

\subsection{Upper bound on the relaxation time}
\label{sec:upper-bound}
The proof of the upper bound in \eqref{piscina} is based on the iterative procedure developed in \cite{CMRT}, although refinements are necessary.
We first need the following claim.
\begin{lemma}
\label{claim:1}
Given an integer $r>2$ let $\{\ell_i\}_{i=1}^r$ be defined by the inductive scheme
\begin{align}
\label{fiesole}
  \begin{cases}
    \ell_1 = 3\,,&\\
    \ell_i = 2\ell_{i-1} - \lceil \d \ell_{i-1}\rceil\,,& \textrm{ for }
    2\leq i \leq r\, ,
  \end{cases}
\end{align}
with $\d = 1/r$. 
Let also $\g_i:=T_{\rm rel}(\ell_i)$. 
Then
\begin{equation}\label{sirena1}
\g_i \le \frac{2}{1-\sqrt{\epsilon_{i-1}}} \g_{i-1}\,,\qquad  2\leq  i\leq r\,,
\end{equation}
where $\epsilon_i=(1-q)^{\lceil \d \ell_i\rceil}$.
\end{lemma}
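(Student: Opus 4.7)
The plan is to establish the Poincar\'e inequality
\[
\mathrm{Var}_{\Lambda_i}(f) \leq \frac{2\gamma_{i-1}}{1 - \sqrt{\epsilon_{i-1}}}\, \mathcal{D}_{\Lambda_i}(f)
\]
for every $f$ on $\Omega_{\Lambda_i}$, where $\Lambda_i := [1,\ell_i]$; by the variational formula \eqref{eq:gap} this is equivalent to $\gamma_i \leq 2\gamma_{i-1}/(1-\sqrt{\epsilon_{i-1}})$. The approach is a two-block decomposition with a genuine overlap, refining the inductive technique of \cite{CMRT}. Write $\Lambda_i = A \cup B$ with $A := [1,\ell_{i-1}]$ and $B := [\ell_i - \ell_{i-1} + 1, \ell_i]$, both of length $\ell_{i-1}$, meeting in the overlap $V := A \cap B$ of size $k := \lceil \delta \ell_{i-1}\rceil$; note $\pi_{\Lambda_i}(\sigma_V \equiv \mathds{1})=(1-q)^k = \epsilon_{i-1}$.

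First I would use the standard variance decomposition with respect to the conditional expectation $Pf := \pi_{A\setminus V}(f)$ (averaging over $\sigma_{A\setminus V}$), which makes $Pf$ a function of $\sigma_B$ alone:
\[
\mathrm{Var}_{\Lambda_i}(f) = \pi_{\Lambda_i}\!\left[\mathrm{Var}_{A\setminus V}(f\mid \sigma_V,\sigma_{B\setminus V})\right] + \mathrm{Var}_{\Lambda_i}(Pf).
\]
The first summand, a conditional variance of $f$ on $A\setminus V \subseteq A$ viewed inside the length-$\ell_{i-1}$ East chain $A$ (whose left end is unconstrained, inherited from $\Lambda_i$), is bounded by $\gamma_{i-1}\,\mathcal{D}_A(f)$ through the induction hypothesis. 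For the second summand the crucial point is that on the event $E := \{\sigma_V\not\equiv\mathds{1}\}$ of probability $1-\epsilon_{i-1}$, the leftmost vacancy $z$ of $V$ furnishes an unconstrained left boundary for the sub-chain $B':=[z+1,\ell_i]$, which is an East chain of random length $\leq \ell_{i-1}$. By the factorization property quoted in Section \ref{graphical} together with Lemma \ref{monotone}, the induction hypothesis applies uniformly to $B'$ and bounds the corresponding conditional variance by $\gamma_{i-1}\mathcal{D}_{B'}(f)$. The atypical event $E^c$, of probability $\epsilon_{i-1}$, is handled via Cauchy--Schwarz on the decomposition $\mathrm{Var}_{\Lambda_i}(Pf) = \pi(E)\mathrm{Var}(Pf|E) + \pi(E^c)\mathrm{Var}(Pf|E^c) + \pi(E)\pi(E^c)\bigl(\mathbb{E}[Pf|E]-\mathbb{E}[Pf|E^c]\bigr)^2$, producing a contribution bounded by $\sqrt{\epsilon_{i-1}}\,\mathrm{Var}_{\Lambda_i}(Pf)$ which, after absorption into the left-hand side, yields the prefactor $(1-\sqrt{\epsilon_{i-1}})^{-1}$. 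Finally $\mathcal{D}_A(f)+\mathcal{D}_{B'}(f)\leq 2\mathcal{D}_{\Lambda_i}(f)$ since only sites in $A\cap B'\subseteq V$ are double-counted, giving the factor $2$ in the numerator.

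The main obstacle is the second step: making the ``random facilitating vacancy'' argument rigorous when $z$ is itself random and depends on $\sigma_V$. One has to apply the induction hypothesis to East chains $B'$ of random length $\leq \ell_{i-1}$; here monotonicity of $T_{\rm rel}$ (Lemma \ref{monotone}) and the product structure of $\pi_{\Lambda_i}$ are essential. Careful bookkeeping of the conditional means on $E$ and $E^c$ (as opposed to a crude union bound) is what produces exactly $\sqrt{\epsilon_{i-1}}$, rather than $\epsilon_{i-1}$ or $2\sqrt{\epsilon_{i-1}}$, in the denominator; the square-root is what will eventually keep the iterated product $\prod_{i=2}^{r}\tfrac{2}{1-\sqrt{\epsilon_{i-1}}}$ under control when the recursion is applied $r-1$ times in the regime $\delta = 1/r$.
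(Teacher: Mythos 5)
Your geometric setup coincides with the paper's: two overlapping blocks $A=[1,\ell_{i-1}]$ and $B=[\ell_i-\ell_{i-1}+1,\ell_i]$ with overlap $V$ of size $\lceil \d\ell_{i-1}\rceil$, the induction hypothesis applied through monotonicity to the chain lying to the right of a vacancy of $V$, and the factor $2$ from double counting the overlap. The gap is in the step that is supposed to produce the prefactor $(1-\sqrt{\epsilon_{i-1}})^{-1}$. In the paper this factor is not obtained by absorption of a small error, but from a quoted spectral result: the auxiliary block chain in which $\L_1=A$ is resampled at rate one and $\bar\L_2=B\setminus V$ is resampled only when $V$ contains a vacancy has spectral gap exactly $1-\sqrt{\epsilon_{i-1}}$ (\cite{CMRT}, p.~480), and the two block terms $\pi(\var_{\L_1}(f))$ and $\pi(c_{\bar\L_2}\var_{\bar\L_2}(f))$ are then converted into East Dirichlet forms on $\L_1$ and $\L_2$ by \cite{CMRT}*{Sect.~4}. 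Your substitute for this input --- decomposing $\var(Pf)$ over $E=\{\s_V\not\equiv\mathds{1}\}$ and $E^c$ and claiming that the $E^c$-terms are bounded by $\sqrt{\epsilon_{i-1}}\,\var(Pf)$ --- is a false inequality, and no bookkeeping of conditional means can rescue it within your decomposition. Test it on $f=\mathds{1}_{\{\s_V=\mathds{1}\}}$: then $Pf=f$, the conditional variances given $E$ and given $E^c$ both vanish, and the cross term equals $\pi(E)\pi(E^c)=\epsilon_{i-1}(1-\epsilon_{i-1})=\var(Pf)$, not $\sqrt{\epsilon_{i-1}}\,\var(Pf)$. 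Worse, for this $f$ your first term $\pi(\var_{A\setminus V}(f))$ vanishes and so does your good-event term (given the leftmost vacancy of $V$, $Pf$ is constant), so your scheme would conclude $\var(Pf)\le \sqrt{\epsilon_{i-1}}\,\var(Pf)$, i.e.\ $\var(f)=0$, which is false.

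The root cause is that by projecting with $\pi_{A\setminus V}$ instead of $\pi_A$ you never charge the fluctuations of $\s_V$ --- in particular the bottleneck event $\{\s_V=\mathds{1}\}$ and the variance of the conditional means over the position of the leftmost vacancy --- to any Dirichlet term: the only mechanism that pays for them is the full resampling of block $A$ (whose conditional variance $\var_A$ contains $\var_V$), and quantifying that this costs only a factor $(1-\sqrt{\epsilon_{i-1}})^{-1}$ is precisely the content of the CMRT two-block gap (in its proof the $\sqrt{\epsilon_{i-1}}$ arises from a Cauchy--Schwarz applied to averages of the form $\pi_A\bigl[(1-c)\bigl(f(\cdot,\s_B)-f(\cdot,\s_B')\bigr)\bigr]$, not from an $E/E^c$ split of $\var(Pf)$). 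To repair your argument, keep the full average over $A$ in the first step, so that the first term is $\pi(\var_{A}(f))\le \g_{i-1}\sum_{x\in A}\pi\bigl(c_x^{A}\var_x(f)\bigr)$, and then either invoke or reprove the two-block Poincar\'e inequality with constraint $\mathds{1}\{\exists\, x\in V:\ \s_x=0\}$; your leftmost-vacancy conditioning is then exactly the right tool for the remaining conversion $\pi\bigl(c\,\var_{B\setminus V}(f)\bigr)\le \g_{i-1}\sum_{x\in B}\pi\bigl(c_x^{B}\var_x(f)\bigr)$, which is the paper's second ingredient.
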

\begin{proof} 
Let $i \geq 2$. Since $\ell_2=5>\ell_1$ one can easily prove by induction that $ \ell_{i}> \ell_{i-1}$ as follows:
$$ \ell_i - \ell_{i-1} \geq \ell_{i-1}-( \d \ell_{i-1}+1)= \ell_{i-1} (1- \d )-1> \ell_1 (1-1/3)-1>0\,.$$
Let $\L^{(i)}=[1,\ell_i]$. 
Each interval $\L^{(i)}$ can be divided into two overlapping intervals of size $\ell_{i-1}$;
\begin{align}
  \L_1 = [1,\ell_{i-1}]\,, \quad \L_2 = [\ell_i-\ell_{i-1} +1,\ell_i]\ .
\end{align}
The overlap $\D = \L_1\cap \L_2$ then contains $N_i = \lceil \d \ell_{i-1}\rceil \geq \ell_{i-1}/r$ sites.
Furthermore $\L^{(i)}$ can be divided into two disjoint intervals $\L_1$ and $\bar\L_2:= \L_2\setminus \D$.
Since the cardinalities  of $\L_1$ and $\L_2$ are both $\ell_{i-1}$, we
have 
\[
T_{\rm rel}(\L_1) = T_{\rm rel}(\L_2) =  \g_{i-1}\,.
\]
Apply now the  block  chain in which $\L_1$ goes to equilibrium with rate one
and $\bar\L_2$ does the same if and only if there is a zero in $\D \neq \emptyset$.
More precisely, the  block chain has configuration space $\O_{\L^{(i)}}$ and generator
$$ \bbL f= \bigl(\p_{\L_1} (f)-f\bigr) + c_{\bar\L_2} \bigl( \p_{\bar\L_2}(f)-f \bigr)\,, \qquad f : \O_{\L^{(i)}} \to \bbR $$
where the new constraint $c_{\bar\L^2}$ 
is defined as $c_{\bar\L^2}( \eta)= \mathds{1} \bigl\{\exists\, x \in \D\,:\, \h_x=0\bigr\}$. 
It is simple to check that the associated Dirichlet form
is given by
$$ \bbD (f)= \pi\left(\var_{\L_1}(f)+c_{\bar\L_2}\var_{\bar\L_2}(f)\right)\,.$$

 As proven in \cite{CMRT}[p. 480] the block chain has spectral gap $(1-\sqrt{\epsilon_{i-1}})$ and therefore
it satisfies the Poincar\'e inequality
\[
\var(f) \le
\frac{1}{1-\sqrt{\epsilon_{i-1}}}\,\pi\Big(\var_{\L_1}(f)+c_{\bar\L_2}\var_{\bar\L_2}(f)\Big)\,.
\]
It was proved in \cite{CMRT}*{Sect. 4} that (recall \eqref{fin-constr})
\begin{align*}
\pi\left(c_{\bar\L_2}\var_{\bar\L_2}(f)\right) &\le \g_{i-1}\sum_{x\in \L_2 }\pi\left(c_{x}^{\L_2}\var_x(f)\right)\,,\\
\pi\left(\var_{\L_1}(f)\right) &\le \g_{i-1}\sum_{x\in
  \L_1}\pi\left(c_{x}^{\L_1}\var_x(f)\right)\,.
\end{align*}
Therefore
\begin{gather*}
\var(f) \le
\frac{\g_{i-1}}{1-\sqrt{\epsilon_{i-1}}}\,\pi\left(\sum_{x\in
  \L_1 }  c_x^{\L_1}\var_x(f) + \sum_{x\in
  \L_2}c_x^{\L_2}\var_x(f)\right)\\\le \frac{2 \g_{i-1}}{1-\sqrt{\epsilon_{i-1}}}\,\cD_{\L^{(i)}}(f)
\end{gather*}
where the factor $2$ comes from the double counting of the points in
$\D$. In conclusion
\begin{gather*}
 \g_i \le \frac{2}{1-\sqrt{\epsilon_{i-1}}}\,\g_{i-1}.
\end{gather*}
This proves \eqref{sirena1}.
\end{proof}
Next we show an important property of the length scales $ \{ \ell_i\}_{1\leq i\leq r}$ appearing in Lemma \ref{claim:1}.
\begin{lemma}\label{universo}
For all $i \leq r$ it holds that $ 2^i (1-1/r)^i \leq \ell_i \leq 2^{i+1}$.
\end{lemma}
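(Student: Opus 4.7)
The upper bound is immediate by induction: the base case $\ell_1 = 3 \leq 4 = 2^2$ holds, and since $\lceil \ell_{i-1}/r\rceil \geq 0$ the recursion gives $\ell_i \leq 2\ell_{i-1} \leq 2\cdot 2^{i} = 2^{i+1}$. The lower bound is more delicate and I would prove it by splitting according to whether $\ell_{i-1}$ exceeds $r$ or not, using the monotonicity $\ell_i > \ell_{i-1}$ already established at the start of the proof of Lemma \ref{claim:1}. Set $i_0 := \min\{i \geq 1 \sst \ell_i \geq r\}$ (with the convention $i_0 = +\infty$ if no such $i$ exists, in which case the second regime below is empty).

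For $i \leq i_0$, the condition $i-1 < i_0$ forces $\ell_{i-1} < r$ when $i\geq 2$, so $\lceil \ell_{i-1}/r\rceil = 1$ and the recursion collapses to $\ell_i = 2\ell_{i-1}-1$. Iterating from $\ell_1 = 3$ yields $\ell_i = 2^i + 1$ for every $i \leq i_0$, so the desired bound is trivially met: $\ell_i > 2^i \geq 2^i(1-1/r)^i$.

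For $i > i_0$, monotonicity gives $\ell_{i-1} \geq \ell_{i_0}\geq r$, which upgrades the naive estimate $\lceil \ell_{i-1}/r\rceil \leq \ell_{i-1}/r + 1$ to the much sharper $\lceil \ell_{i-1}/r\rceil \leq 2\ell_{i-1}/r$. Substituting into the recursion produces $\ell_i \geq 2(1-1/r)\ell_{i-1}$, and a straightforward geometric induction on $i$, whose base case $i = i_0$ is supplied by the previous regime (via $\ell_{i_0} = 2^{i_0}+1 > 2^{i_0}(1-1/r)^{i_0}$), gives the claimed bound $\ell_i \geq 2^i(1-1/r)^i$.

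I do not foresee any substantive obstacle. The only subtle point is appreciating that the crude inequality $\lceil x\rceil \leq x+1$ is too lossy when $x<1$, which is precisely why the ``small-$\ell_{i-1}$'' regime has to be handled separately by explicit iteration before the ``large-$\ell_{i-1}$'' regime yields to the geometric induction.
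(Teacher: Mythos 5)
Your proof is correct and follows essentially the same route as the paper: split at the first index where $\ell_i$ reaches $r$, solve the recursion exactly as $\ell_i=2^i+1$ in the small regime (where $\lceil \ell_{i-1}/r\rceil=1$), and use $\lceil \ell_{i-1}/r\rceil\le 2\ell_{i-1}/r$ to get $\ell_i\ge 2(1-1/r)\ell_{i-1}$ in the large regime, then iterate. The upper bound is handled identically by trivial induction.
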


\begin{proof}
The upper bound follows immediately by induction.
For the lower bound, it is simple to derive from \eqref{fiesole}  that $\ell_i= 2^{i}+1$ if $i \leq k:= 1+ \lfloor \log_2 (r-1) \rfloor$. 
Indeed, let $\tilde k= \max\{i:\ell_{i-1} \leq r\}$. 
Then, $\forall i: 2 \leq i \leq \tilde k$ \eqref{fiesole} becomes $ \ell_i= 2 \ell_{i-1}-1$ 
and the solution of this iterative system is $\ell_i= 2^i+1$. 
In particular, it follows that $\tilde k = k$.
If $i \geq k$ then $\ell_{i-1}/r \geq 1 $, together with \eqref{fiesole} this implies  $\ell_i \geq 2(1-1/r)\ell_{i-1}$. 
The rest of the proof is straightforward.
\end{proof}


\begin{lemma}\label{sale}
Given $r \geq 1$ such that $2^r\leq  d/q$ it holds that $ \g_r \leq q^{-c(d)} \frac{r!}{ q^r 2 ^{\binom{r}{2} }}$ for some constant $c(d)$ depending only on $d$.
\end{lemma}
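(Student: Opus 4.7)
The plan is to iterate the one-step contraction estimate of Lemma~\ref{claim:1} and then to extract the two combinatorial factors $r!$ and $2^{\binom{r}{2}}$ separately: the former from Stirling applied to a product of $r$'s, the latter from the product $\prod_i \ell_i$ via Lemma~\ref{universo}. The exponent $q^{-c(d)}$ will absorb all multiplicative contributions of the form $C^r$, which are tolerable precisely because the hypothesis $2^r \le d/q$ forces $r \le \log_2(d/q)$.

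First I iterate \eqref{sirena1} to obtain
\begin{equation*}
\g_r \le \g_1 \cdot 2^{r-1}\prod_{i=1}^{r-1}\frac{1}{1-\sqrt{\epsilon_i}},
\end{equation*}
and bound $\g_1 = T_{\rm rel}(3) \le C/q^2$ using the general upper bound in \eqref{sciroppo} with $n=2$. Next I estimate each entropy factor. Writing $a_i = \lceil \ell_i/r\rceil/2 \ge \ell_i/(2r)$, so that $\sqrt{\epsilon_i}=(1-q)^{a_i} \le e^{-q a_i}$, the elementary inequality $1-e^{-x} \ge x/(1+x)$ for $x\ge 0$ gives
\begin{equation*}
\frac{1}{1-\sqrt{\epsilon_i}}\;\le\;1+\frac{1}{qa_i}\;\le\;1+\frac{2r}{q\ell_i}\;=\;\frac{q\ell_i+2r}{q\ell_i}.
\end{equation*}

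Next I treat the denominator and numerator of the resulting product separately. Lemma~\ref{universo} and the elementary bound $(1-1/r)^r \ge 1/4$ (valid for $r\ge 2$) yield
\begin{equation*}
\prod_{i=1}^{r-1}\ell_i \;\ge\; 2^{\binom{r}{2}}(1-1/r)^{\binom{r}{2}} \;\ge\; 2^{\binom{r}{2}}\cdot 2^{-(r-1)}.
\end{equation*}
For the numerator, factor out $2r$ from each term and use $\ell_i \le 2^{i+1}$ together with $2^r \le d/q$:
\begin{equation*}
\prod_{i=1}^{r-1}(q\ell_i+2r)\;\le\;(2r)^{r-1}\exp\!\left(\sum_{i=1}^{r-1}\frac{q\ell_i}{2r}\right)\;\le\;(2r)^{r-1}\,e^{2d}.
\end{equation*}
Combining,
\begin{equation*}
\g_r \;\le\; \frac{C\,e^{2d}\,(8r)^{r-1}}{q^{r+1}\,2^{\binom{r}{2}}}.
\end{equation*}

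To finish, I convert $r^{r-1}$ into $r!$ via Stirling, which gives $r^{r-1} \le e^{r} r!$, so that $(8r)^{r-1} \le (8e)^{r-1}\,r!$. The factor $(8e)^{r-1}/q$ is controlled by the standing assumption $2^r \le d/q$, which forces $r \le \log_2(d/q)$, hence
\begin{equation*}
(8e)^{r-1}\;\le\;(d/q)^{\log_2(8e)}\;\le\;d^{\log_2(8e)}\,q^{-\log_2(8e)}.
\end{equation*}
Absorbing all $d$-dependent multiplicative constants into $q^{-c(d)}$ for $q$ sufficiently small yields the claim.

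The only genuine obstacle is making sure that the factor $\prod_i(1-\sqrt{\epsilon_i})^{-1}$ actually has an \emph{entropic} structure matching the target $r!/2^{\binom{r}{2}}$, rather than producing uncontrolled exponential losses. This is the reason for not using the naive bound $\prod(1+x_i)\le \exp(\sum x_i)$, which would cost $e^{O(r/q)}$ and destroy the estimate; instead we keep $(q\ell_i+2r)/(q\ell_i)$ intact and exploit that $\prod \ell_i$ is genuinely of order $2^{\binom{r}{2}}$ while $\prod 2r$ is of order $r^{r-1}\approx r!$.
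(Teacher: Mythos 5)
Your proposal is correct and follows essentially the same route as the paper: iterate the one-step bound \eqref{sirena1}, estimate each factor $\bigl(1-\sqrt{\epsilon_i}\bigr)^{-1}$ by roughly $r/(q\ell_i)$, lower bound $\prod_i \ell_i$ by $2^{\binom{r}{2}}$ up to an $e^{O(r)}$ loss via Lemma \ref{universo}, convert $r^{r}$-type factors to $r!$ by Stirling, and absorb all $C^r$ losses into $q^{-c(d)}$ using $2^r\le d/q$. Your elementary inequalities differ cosmetically from the paper's (e.g. $1-e^{-x}\ge x/(1+x)$ instead of $1-e^{-x}\ge xe^{-x}$, and $(1-1/r)^r\ge 1/4$ instead of $1-x\ge e^{-2x}$), but the argument is the same.
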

\begin{proof} 
Recall the definition of $\e_i$ in Lemma \ref{claim:1}. 
Using the bounds  $(1-x) \leq e^{-x} $  and $(1-e^{-x}) \geq e^{-x}
x$ for all $x > 0$ we get 
\begin{equation}\label{maremare}\frac{2}{ 1-\sqrt{\e_{i-1} } }=\frac{2}{1-(1-q)^{\lceil \ell_{i-1}/r \rceil/2} }\leq 
\frac{2}{1-e^{-q \lceil  \ell_{i-1}/r \rceil/2} }\leq
\frac{4}{q  \lceil \ell_{i-1}/r \rceil} e^{q \lceil  \ell_{i-1} /r\rceil/2} \,,
\end{equation}
for $i\leq r$.
Due to Lemma \ref{universo} we have $\ell_i \leq 2^{i+1}$. 
Hence, using the monotonicity of $\ell_j$ and that $2^r \leq d/q$, 
we get $\ell_i  \leq 2 d/q$. This allows us to conclude that 
$q \lceil  \ell_{i-1}/r \rceil  \leq q  (\ell_{i-1}/r+1 ) \leq 2d/r+q $.
This bound together with \eqref{maremare} implies that $ \frac{2}{ 1-\sqrt{\e_{i-1} } }\leq \frac{c(d) r }{q   \ell_{i-1}}$. 
Coming back to \eqref{sirena1} we conclude  that
\begin{equation}\label{ringhiera1}
 \g_r \le  \g_1 \frac{c(d)^r  r^r}{q^r \prod _{i=1}^{r-1}  \ell_{i} } \,.
 \end{equation}
Due to Lemma \ref{universo}  and since $1-x \geq e^{-2x}$ for $x$ small,  we have
\begin{equation}\label{ringhiera2}
\prod _{i=1}^{r-1}  \ell_i  \geq 2^{\binom{r}{2} } (1-1/r) ^{\binom{r}{2}}\geq 2^{\binom{r}{2} }(1-1/r)^{\frac{r^2}{2}}\geq 
2^{\binom{r}{2} }e^{-r } 
 \end{equation}
 for $r$ sufficiently large.
Since $e^{r}= 2^{r\log_2 e } \leq (d/q) ^{\log_2 e}$ and since by Stirling's formula $r^r \leq C r! \,  e^{r}$, combining  \eqref{ringhiera1} and \eqref{ringhiera2} we get the result.
\end{proof} 

We now have the necessary tools to conclude the proof of the upper bound in Theorem \ref{paletti}. 
Fix $L \leq d/q$ and set $n= \lceil \log_2 L \rceil $.
Let $c_0:= \inf\{ (1-1/k)^k \,:\, k \geq 1\}$. Note that $c_0\in
(0,1)$. We now choose $r=r_0 := n+\lceil  \log_2 (1/c_0)\rceil $, so
that $2^{r_0} c_0 \geq L$. 
By Lemma \ref{universo}, $\ell_{r_0} \geq L$.  
Since $2^{r_0}\le 4d/c_0$, by  monotonicity and Lemma \ref{sale} we conclude  
$$ T_{\rm rel} (L) \leq \g_{r_0} \leq q^{- c( 4d/c_0) }\frac{r_0!}{ q^{r_0} 2 ^{\binom{r_0}{2} }}\leq q ^{ -c'(d,c_0) }\frac{n!}{ q^{n} 2 ^{\binom{n}{2} }}\,.$$

\subsection{Lower bound on the relaxation time}
\label{lwb}
A general strategy to find a lower bound on the relaxation time is to look for a set $A$ whose boundary $\partial A$ forms a small bottleneck in the state space $\O_\L$ \cite{Levin2008,Saloff}. 
One can upper bound the spectral gap (\ie lower bound the relaxation time) by restricting the variational formula  \eqref{eq:gap} to indicator functions of sets in $\O_\L$. In this way one gets
\begin{equation}\label{biancaneve}
T_{\rm rel} (L)  \geq  \max_{A\subset \O_\L  
}\frac{ \p(A) \p(A^c) }{\cD_\L ( \mathds{1}_A) }\geq \frac{1}{2\Phi_\star} \,,
\end{equation}
where $\Phi_\star$ is the \emph{bottleneck ratio} (also known as the Cheeger or Isoperimetric constant) given by
\begin{align*}
  \Phi_\star = \min_{A\,:\,\pi(A) \leq \frac{1}{2}} \frac{\cD_\L ( \mathds{1}_A) }{\p(A)}\,,
\end{align*}
where for a given set $A$ the ratio $\cD_\L ( \mathds{1}_A) / \p(A)$ is referred to as the bottleneck ratio of the set $A$.
Due to reversibility, $\cD_\L(\mathds{1}_A)$ (called the boundary measure of the set $A$) can be written as
\begin{equation} \label{rane} 
  \cD_\L(\mathds{1}_A)=  \sum_{\eta \in A,\s\in A^c}\pi(\eta)K(\eta,\s)=
  \sum _{\eta\in \partial A} \p(\eta) K(\eta,A^c)
\end{equation} 
where
\begin{align*}
  \partial A&:= \left\{ \h \in A\,:\,\exists\ \sigma\in A^c  \text{ such that $\eta,\sigma$ are neighbors, \ie} K(\h,\s)>0 \right\}
\end{align*}
is the internal boundary of $A$  and 
\begin{align}\label{atmosfera}
  K(\eta,A^c) = \sum_{\s\in A^c}K(\eta,\s)
  =  
  \sum_{\substack{x\in [1,L]:\\c^\L_x(\eta)=1 ,\; \eta^x \not \in A}}\left\{q \eta_x +p(1-\eta_x)\right\}
\end{align}
is the total rate with which the East dynamics starting from  $\eta$ escapes from $A$.
Notice that $K(\eta,A^c) \le (\# \{\text{ of zeros in $\eta$}\}+1)\le L+1$, so that sets $A$ for which the boundary has very small equilibrium measure with respect to the full set, $\pi(\partial A)/ \pi(A)\ll 1/L$,  admit small bottleneck ratios, which are $o(1)$.

\subsubsection{A first attempt for the choice of the set $A$}
As explained in Appendix \ref{capacitino}, we expect that there exist good choices of $A$ in \eqref{biancaneve} such that the boundary $\partial A$ separates the singleton $\mathds{1}0$ from $\{ \eta_L=1\}$, \ie $\10 \in A$ and $\{ \eta_L=1\} \subset A^c$.
In  what follows we show that for certain $A$ featuring this property, $\pi(\partial A)\ll 1$ giving rise to a large relaxation time.


Our first candidate for the set $A$ is inspired by the combinatorial work in \cite{CDG} and it is defined as the set of configurations that  are
  connected to $\10$ by paths in the set $Z_{n+1}(\L)$ (the set of configurations with at most $n+1$
zeros) under the East dynamics.
In order compute the bottleneck ratio for this set we first define an auxiliary set $U_n$ which will turn out to be the internal boundary of our set $A$. 

Let $V_n$ be the set of configurations in $\{0,1\}^\bbN$
with exactly $n$ zeros that can be obtained from $\mathds{1}$ by 
the East dynamics  and by using at most n simultaneous zeros, i.e.\ 
  that can be reached from $\mathds{1} $ by a path in $Z_n(\bbN)$. 
As proven in \cite{CDG}, for all $\h \in V_n$ the zeros of $\h$  are included in $ [1,2^n-1]$.
Then, for any $L \geq 2^n$, the set $U_n$ is defined as the set of configurations $\eta$ which are obtained from the configurations in $V_n$ by flipping to zero the spin at $L$.
%
%

The following lemma follows immediately from the bounds given in \cite{CDG}:
\begin{lemma}
\label{CDGlemma} 
There exist constants $c,c'\in (0,1)$ such that, for all $L \geq 2^n$, 
\begin{equation}\label{CDGbound}
q^{n+1} p^{L-n-1} 2^{\binom{n}{2} } n!\, (c')^n \leq \p(U_n) \leq   q^{n+1} p^{L-n-1} 2^{\binom{n}{2} } n! \,c^n.
\end{equation}
\end{lemma}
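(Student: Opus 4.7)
The plan is to reduce $\pi(U_n)$ to a cardinality count, then apply the combinatorial bounds from \cite{CDG} on $|V_n|$ which are recalled in the heuristic section.

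First I would unpack the definition of $U_n$. Since $L \geq 2^n$ and every $\eta \in V_n$ has its $n$ zeros contained in $[1, 2^n-1] \subset [1, L-1]$ (by the result of \cite{CDG} recalled just above the lemma), the natural map $\Phi : V_n \to U_n$ sending $\eta$ to the configuration obtained by setting $\eta_L = 0$ is well-defined and injective. It is surjective by the very definition of $U_n$, hence a bijection; in particular every $\sigma \in U_n$ has exactly $n+1$ zeros in $\Lambda = [1,L]$.

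Second, since $\pi = \pi_\Lambda$ is the product Bernoulli$(1-q)$ measure, any configuration with exactly $n+1$ zeros has weight $q^{n+1}(1-q)^{L-n-1} = q^{n+1} p^{L-n-1}$, so
\begin{equation*}
\pi(U_n) \;=\; |U_n|\, q^{n+1} p^{L-n-1} \;=\; |V_n|\, q^{n+1} p^{L-n-1}.
\end{equation*}

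Third I would invoke the combinatorial estimate from \cite{CDG} (quoted in the heuristics preceding Theorem \ref{paletti}): there exist constants $c_1, c_2 \in (0,1)$ such that
\begin{equation*}
c_1^{\,n}\, n!\, 2^{\binom{n}{2}} \;\leq\; |V_n| \;\leq\; c_2^{\,n}\, n!\, 2^{\binom{n}{2}}.
\end{equation*}
Substituting this into the previous display yields \eqref{CDGbound} with $c' = c_1$ and $c = c_2$. There is no real obstacle here: all the combinatorial work is already contained in \cite{CDG}, and the only content of the lemma is the translation of that cardinality bound into a statement about the Bernoulli weight, which works cleanly because all configurations in $U_n$ have the same number of zeros and therefore the same $\pi$-mass.
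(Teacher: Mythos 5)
Your reduction is exactly what the paper intends: the paper's own ``proof'' is nothing more than the remark that the lemma follows from the bounds in \cite{CDG}, and your bijection $V_n\to U_n$ (well defined and injective because the zeros of any $\eta\in V_n$ lie in $[1,2^n-1]\subset[1,L-1]$, so the added zero at $L$ is new and distinct configurations stay distinct) together with the observation that every element of $U_n$ has exactly $n+1$ zeros, hence $\pi$-weight $q^{n+1}p^{L-n-1}$, is the correct and essentially unique way to fill it in. One caveat on the citation: the display you quote from the heuristics section concerns $V(n)$, the set of \emph{all} configurations reachable from $\mathds{1}$ by a path in $Z_n$, i.e.\ configurations with \emph{at most} $n$ zeros, whereas $V_n$ consists of those reachable configurations with \emph{exactly} $n$ zeros. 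The inclusion $V_n\subset V(n)$ (after restriction to $[1,L-1]$) gives your upper bound, but the lower bound $|V_n|\geq (c')^n\, n!\, 2^{\binom{n}{2}}$ does not follow from the lower bound on the larger set $|V(n)|$; for that step you must invoke the exactly-$n$ count from \cite{CDG} itself (which is what the lemma implicitly does), not the at-most-$n$ estimate quoted in the heuristics. This is a matter of pointing to the right statement in \cite{CDG} rather than a flaw in strategy, but as written the lower-bound inference is not justified by the display you cite.
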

The connection between the set $A$ and $U_n$ is as follows.
\begin{lemma}\label{imbottigliato}
Let $L > 2^n$ and $A\in \O_\L$ be the set of configurations that  are
  connected to $\10$ by paths in the set $Z_{n+1}(\L)$ under the East dynamics.
Then $\partial A$ is a bottleneck separating $\mathds{1}0$ from $\{ \eta_L=1\}$ and $\partial A= U_n$. 
Moreover there exists $C>1$ such that
\begin{equation}\label{stiminabis}
T_{\rm rel}(L) \geq  \frac{ \p(A) }{2\cD_\L ( \mathds{1}_A) }\geq  \frac{C^n p^n }{ q^{n+1}  2^{\binom{n}{2} } n!}\,. 
\end{equation}
\end{lemma}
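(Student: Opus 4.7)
The plan is to pin down the internal boundary $\partial A$ explicitly as $U_n$ and then bound $T_{\rm rel}(L)$ by plugging the CDG estimate \eqref{CDGbound} into the general bottleneck inequality \eqref{biancaneve}. The hypothesis $L>2^n$ is critical: it means $L-1\ge 2^n$, which interacts with the CDG reachability bound to freeze the vacancy at $L$ inside $A$.

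I first claim that every $\eta\in A$ satisfies $\eta_L=0$, which immediately gives $\{\eta_L=1\}\subset A^c$ and the bottleneck separation property. Along any East path in $Z_{n+1}(\L)$ from $\10$, the graphical construction decouples the evolution at site $L$ from that on $[1,L-1]$ as long as $\eta_L=0$, and the restriction to $[1,L-1]$ starts from $\mathds{1}$. The budget of $Z_{n+1}(\L)$ leaves at most $n$ simultaneous zeros in $[1,L-1]$ during this decoupled phase, and by the CDG reachability result quoted just before Lemma \ref{CDGlemma} such trajectories confine their zeros to $[1,2^n-1]$. Since $L-1\ge 2^n$, no zero ever reaches $L-1$, so the constraint required to flip $\eta_L$ is never satisfied and the vacancy at $L$ is frozen.

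For the identification $\partial A=U_n$, observe that an East neighbour $\eta^x$ of $\eta\in A$ stays in the same $Z_{n+1}(\L)$-connected component as $\eta$ unless the flip leaves $Z_{n+1}(\L)$; since $0\to 1$ flips can only decrease the zero count, the only way to exit $A$ is via a legal $1\to 0$ flip applied to a configuration with exactly $n+1$ zeros. By the previous paragraph and the East-reachability defining $A$, such a configuration has a vacancy at $L$ and, upon restriction to $[1,L-1]$, belongs to $V_n$; this is exactly the defining property of $U_n$. Conversely, every $\eta\in U_n$ admits a legal $1\to 0$ flip: if $\eta_1=1$ then site $1$ is unconstrained, and if $\eta_1=0$ (so necessarily $n\ge 1$) one can flip the $1$ immediately to the right of the rightmost zero in $[1,2^n-1]$.

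It remains to carry out the bottleneck bound. From the above, $\pi(A)\le \pi(\{\eta_L=0\})=q<1/2$, so \eqref{biancaneve} yields $T_{\rm rel}(L)\ge \pi(A)/(2\cD_\L(\1_A))$. I bound $\pi(A)\ge \pi(\10)=qp^{L-1}$ and, via \eqref{rane}--\eqref{atmosfera}, $K(\eta,A^c)\le q(n+2)$ for $\eta\in U_n$ (the sites admitting a legal $1\to 0$ flip are site $1$ plus the sites immediately to the right of each of the $n+1$ zeros, each contributing rate $q$). Combining with the upper bound in \eqref{CDGbound},
\[
T_{\rm rel}(L)\ge \frac{qp^{L-1}}{2q(n+2)\,q^{n+1}p^{L-n-1}2^{\binom{n}{2}}n!\,c^n}=\frac{p^n}{2(n+2)c^n\, q^{n+1}\, 2^{\binom{n}{2}}\, n!},
\]
which has the claimed form $C^np^n/(q^{n+1}2^{\binom{n}{2}}n!)$ for any $C\in(1,1/c)$ once the polynomial factor $n+2$ is absorbed (small $n$ handled by adjusting $C$). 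The delicate point is the first step, where the CDG reachability bound must be combined correctly with the graphical decoupling to conclude that the vacancy at $L$ really cannot move; once this is in place, the rest of the argument is a direct computation.
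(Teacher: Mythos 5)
Your proposal is correct and follows essentially the same route as the paper: freeze the vacancy at $L$ via the CDG confinement of zeros to $[1,2^n-1]$ (the paper phrases this as a contradiction along a hypothetical path, you argue it directly), identify $\partial A=U_n$ by noting that only legal $1\to 0$ flips from configurations with exactly $n+1$ zeros can exit $A$, bound the escape rate by $(n+2)q$, and plug $\pi(A)\ge qp^{L-1}$, $\pi(A^c)\ge 1/2$ and the upper bound of \eqref{CDGbound} into \eqref{biancaneve}. The final absorption of the factor $2(n+2)c^n$ into $C^n$ is handled with the same level of (in)formality as in the paper, so there is no gap relative to the original argument.
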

\begin{proof} 
Clearly $\mathds{1}0 \in A$. The fact that  $\{\eta: \ \eta_L =1 \} \subset A^c$ follows from the above property of the vacancies of configurations in $V_n$. In fact, to have $\{\eta: \ \eta_L =1 \} \cap  A\not= \emptyset$ there would be a path $\mathds{1}0= \eta^{(1)}, \eta^{(2)}, \dots, \eta^{(k)}$ in 
 $A$  with  $\eta^{(k)}_L=1$,  with  $\eta^{(i)}_L=0$  for $1\leq i <k$ and  with $ \eta^{(k-1)}_x=0$ for $x=L-1>2^n-1$. In particular, defining $\xi_x= \eta^{(k-1)}_x$ for $1\leq x < L$ and $\xi_x =1$ for $x \geq L$, 
   one would get a configuration $\xi \in Z_n(\bbN)$ with a vacancy on the  right of $2^n-1$, which gives rise to a contradiction. 
   
We now show that $\partial A=U_n$ and $A$ admits a small bottleneck ratio.   
$U_n$ is the set of configurations in $A$ with exactly $n+1$ vacancy, 
since starting from $\10$ the East dynamics restricted to paths in $Z_{n+1}(\L)$ can not remove the vacancy at $L$.
It is trivial to check that this last set corresponds to $\partial A$. 
Indeed, if $\h \in A$ has strictly less than $n+1$ zeros, adding or removing a zero to $\h$ by a legal flip will bring us to a new configuration inside $A$ by the definition of $A$. If $\h \in A$ has $n+1$ zeros,  adding a zero to $\h$ by a legal flip will take the configuration  outside $A$, 
thus implying  $\h \in \partial A$ (note that such a transition is possible since $n<2^n < L$ and therefore $\h$ cannot be given by the empty configuration).  This completes the proof that $\partial A=U_n$.

Take $\h \in A$ with $n+1$ zeros. If  we remove a zero from $\h$ by a legal flip we remain inside $A$. 
Hence 
$$ \cD_\L ( \mathds{1}_A) =  \sum _{\h \in U_n\subset \O_\L}\p(\h)\sum _{\substack{x\in [1,L]:\\ \h_x=1, \, c^\L_x(\h)=1} }q  \leq \p (U_n)
( n+2)q \,.
$$
Since  $ \mathds{1}0 \in A$,  $\p(A) \geq \p(\mathds{1}0)=p^{L-1}q$. 
On the other hand, we know that $\{\h:\ \h_L=1\}\subset A^c$  and therefore $\p(A^c) \geq p > 1/2$.  At this point it is enough to apply \eqref{biancaneve} and \eqref{CDGbound} in order to get \eqref{stiminabis}.
\end{proof}


Taking  $n = \lceil  \log_2 1/q\rceil $  and  $L>2^n$, \eqref{stiminabis} together with the monotonicity in $L$ of the relaxation time gives a  lower bound on the infinite volume relaxation time similar to  that previously obtained in \cite{CMST}. However the bound is quite different from the actual bound claimed in Theorem \ref{paletti}. There in fact the large term $n!$ appears in the \emph{numerator} while in \eqref{stiminabis} it sits in the denominator.    
To correct this fact we need to a find a different set, call it  $A_*$, giving rise to a smaller bottleneck ratio.

\subsubsection{Definition of  $A_*$ by means of  deterministic dynamics on $\O_{\L}$}
\label{Astar}
The construction of the set $A_*$ has been inspired by capacity theory explained in Appendix \ref{capacitino}.
One wants to choose the set $A_*$ such that $\1_{A_*}$ is `close' to the minimiser in the Dirichlet principle \eqref{eq:dir-princ}, which is given by $f(\eta) = \bbP_\eta ( \tau_{\mathds{1}0 } < \tau _B)$, where $B = \{\eta_L = 1\}$  and $\tau_{\10}$ is the hitting time of $\10 \in \O_\L$.
The dynamics at small $q$, on length scales which are much smaller than the equilibrium separation of vacancies, typically act by removing vacancies.
Also (following \cite{FMRT} and \cite{SE1}) we expect vacancies with the smallest distance to their neighbouring vacancy on the left (smallest domains) to be removed first.
Following this idea, we essential say $\eta \in A_*$ if starting from $\eta$ and removing vacancies in order of their domain size we hit $\10$ before removing a vacancy at $L$.
We now proceed by giving a formal definition of $A_*$ in terms of an algorithm, which we call the \emph{deterministic dynamics} because it will approximate the order in which vacancies are typically removed by the East process for small $q$.

For each configuration $\eta\in\Omega_{\L}$ we define the  \emph{gap} at $x\in
\Lambda$, denoted  $g_x(\eta)$,  as the distance from $x$ to the nearest vacancy on the left (including the origin where the frozen zero is located):
\begin{equation} 
  g_x(\eta) :=  \min\{d>0 \sst \eta_{x-d} = 0\}\ .
\end{equation} 

Given $\eta \in \O_{\L}$, $1 \leq d \leq L$ and $x \in [1,L]$,  we define $\phi_{d,x }(\eta)\in \O_{\L} $  as the configuration obtained from $\eta$ by removing a vacancy at $x$ if one is present with gap exactly $d$, and doing nothing otherwise:
\begin{equation}
  \phi_{d,x} (\eta)_y = 
  \begin{cases}
    1 & \text{ if } y=x,\; \eta_x=0\,,\; g_x(\eta)=d\,\\
    \eta_y & \text{ otherwise}\,.
  \end{cases}
\end{equation}
The deterministic dynamics will be defined recursively by removing vacancies firstly with gap size one, starting from the right of the system and proceeding towards the origin, then removing those with gaps size two from right to left and continuing in this way.
It is therefore convenient to endow the set  $[1,L]^2$ with the following total order:
\[
(d_1,x_1) \prec (d_2,x_2) \iff d_1 < d_2 \quad \textrm{or}\quad
d_1=d_2\ \textrm{ and }\ x_1>x_2\ .
\]
Hence, we can order the elements of $[1,L]^2$ as $\a_1\prec\a_2\prec \cdots \prec \a_{L^2}$.
Notice $\a_{\cdot}$ gives rise to a bijection between $[1,L^2]$ and $[1,L]^2$ given explicitly by $\a_k = (\lceil k/L \rceil,(L-k \mod\ L) +1)$ (where we identify $\bbZ/L\bbZ$ with $\{0,1,\ldots,L-1\}$ in the usual way).
 
We are now ready to define our deterministic dynamics in discrete time $0,1,2, \dots, L^2$.  
Given the starting configuration  $\eta \in \O_{\L}$, the new configuration $\eta[k]$ at time $k=0,1,2,\dots,L^2$ is recursively defined by the rule
$$
\begin{cases}
\eta [0]= \eta\,,\\
\eta [k] = \phi _{\a_k} \left( \eta[k-1]\right) & \text { for } k =1,2, \dots , L^2\,.
\end{cases}
$$
To more easily identify the stage of the deterministic dynamics we will use the following notation:
$$ 
\dyn{\eta}{d}{x} := \eta [k] \qquad \text{ if } \qquad \a_k= (d,x)\,.
$$

\begin{remark}\label{africano}
It is simple to check that the following procedure gives an equivalent definition of the deterministic
dynamics. $\dyn{\eta}{d}{x}$ is simply the configuration obtained from $\eta$ as follows:
\begin{enumerate}
\item  Erase all the  vacancies of gap $1$ from $\eta$  (from the right to the left or simultaneously is equivalent). Call $\eta^{(1)}$ the resulting configuration.
\item In general, for $1<i\leq d-1$ erase from $\eta^{(i-1)} $ all the vacancies with gap $i$ and call the resulting configuration $\eta^{(i)}$ .
\item Erase from $\eta^{(d-1)}$ all the vacancies of gap $d$ located at $y \geq x$. The resulting configuration is $\dyn{\eta}{d}{x} $.
\end{enumerate}

\end{remark}

Having described  the  deterministic dynamics on $\O_{\L}$ we can define our set $A_*$:

\begin{definition}
\label{def:Astar}
The set $A_*\subset \O_\L$ is given by the configurations $\eta \in \O_\L$ such that 
\begin{equation}\label{pierpibello} 
  \dyn{\eta}{L-1}{1} = \mathds{1}0\, . 
\end{equation}
 Equivalently, $A_*$ is the set of all configurations $\eta$ such that
 the deterministic dynamics hits $\10$ before $\{\eta_L = 1\}$.
\end{definition}
\begin{remark}
\label{boundaryA}
 It is simple to check that  $\mathds{1}0$ belongs to $A_*$ and that
 $\{\eta_L = 1\} \subset A^c_*$.
An example of $\eta \in \partial A_*$ is shown in Fig. \ref{fig:det-dyn}.
\end{remark}
We proceed by showing that $A_*$ admits a very small bottleneck ratio.
\begin{figure}[t]
  \centering
  \mbox{\subfloat[$\eta\in A_*$]{\vtop{%
  \vskip0pt
  \hbox{\includegraphics[width=0.48\textwidth]{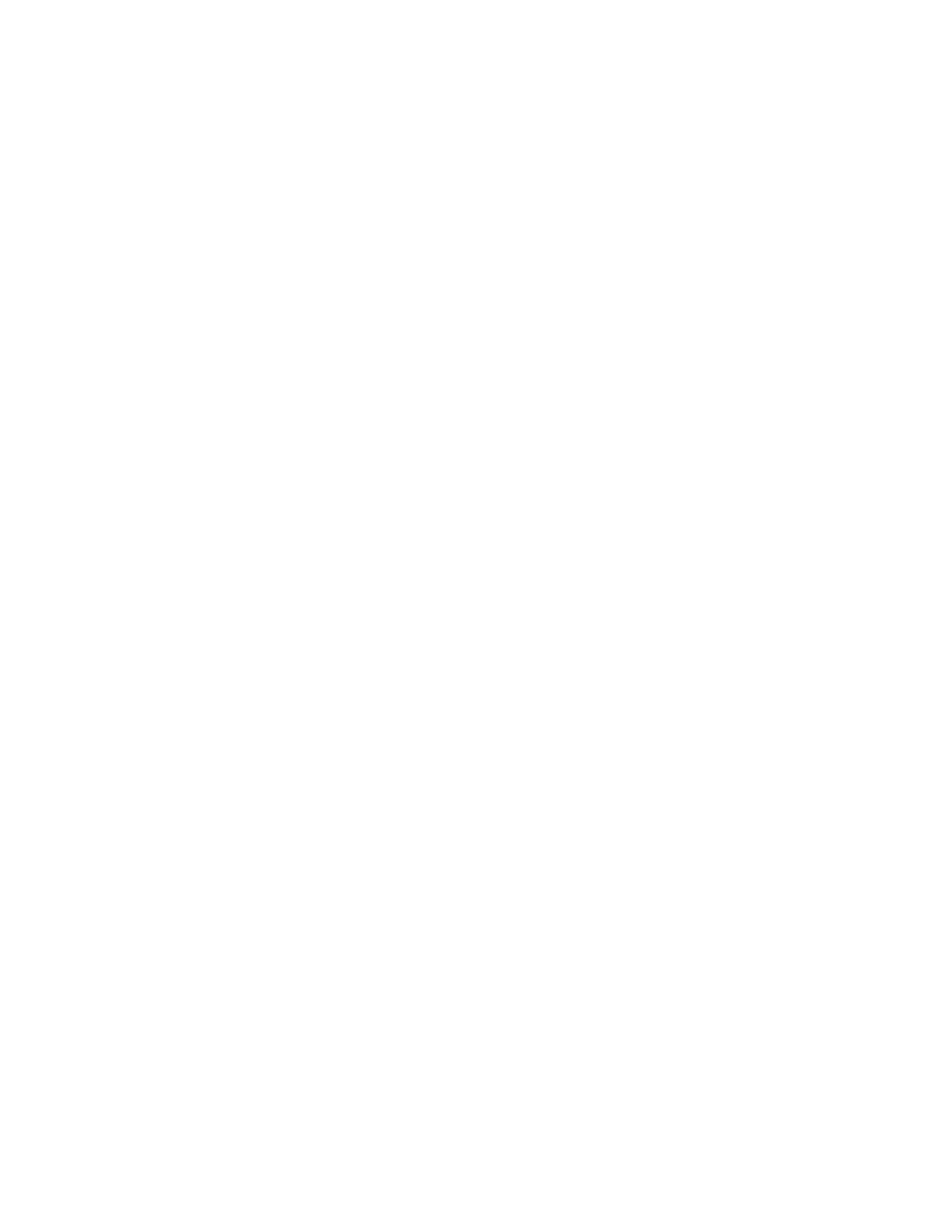}}}}\quad \quad 
    \subfloat[$\eta^z \notin A_*$]{\vtop{%
  \vskip0pt
  \hbox{\includegraphics[width=0.48\textwidth]{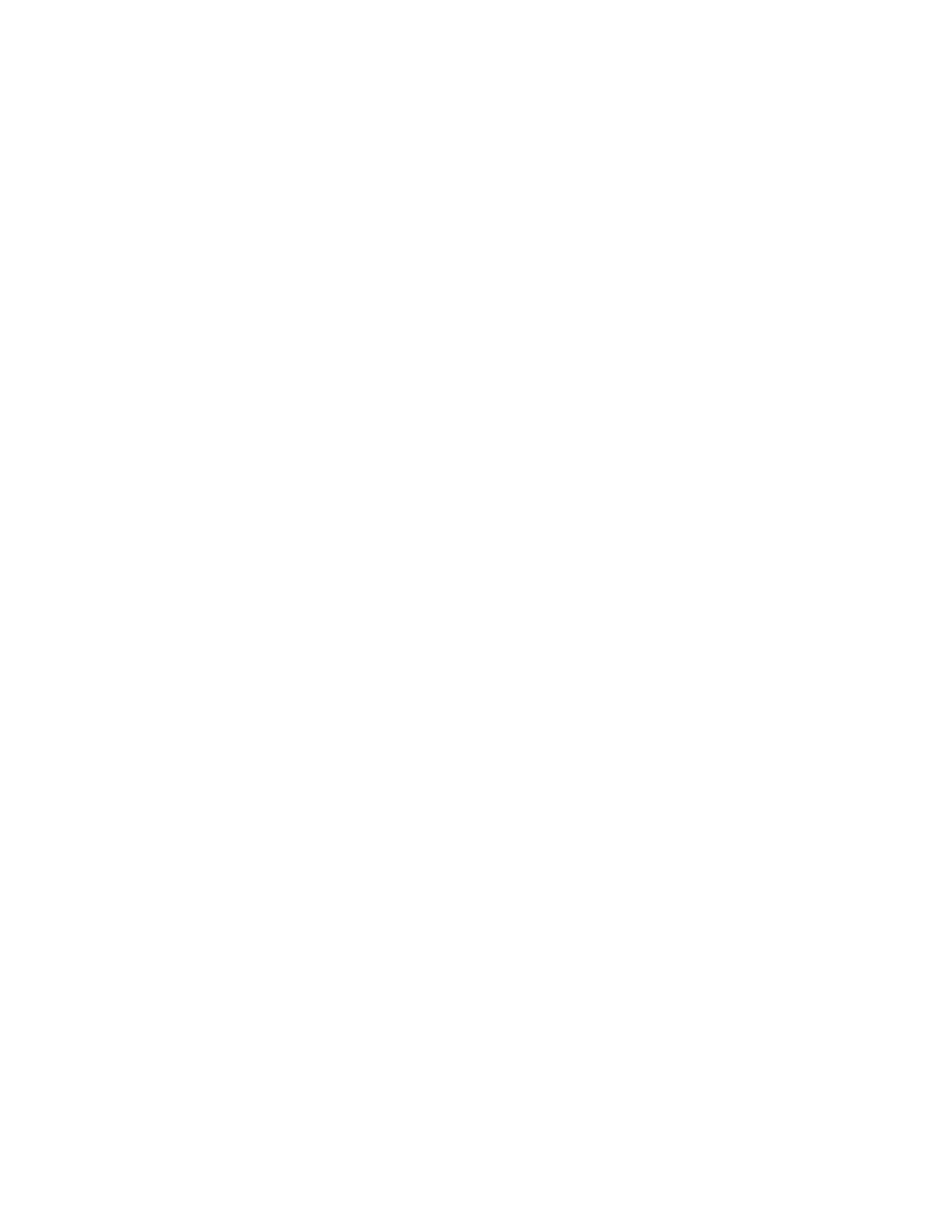}}}}}
  \caption{\label{fig:det-dyn} Example of the deterministic dynamics on a
    system of size $L = 16$. 
    Zeros are indicated by solid circles and ones by $|$, the frozen zero at
    the origin is contained in a box. 
    The deterministic dynamics $\Phi_{d,x}(\eta)$ are
    defined in Section \ref{Astar} (see Remark \ref{africano}), this algorithm acts by only removing zeros which are
    intially present in $\eta$ in order of their gap size and never adding zeros.
    Only the steps of the deterministic dynamics algorithm which change the configuration are shown. 
    Left: An example of the deterministic dynamics applied to a configuration $\eta \in \partial A_*$. $\dyn{\eta}{L-1}{1} = \10$ so $\eta \in A_*$ (see Definition \ref{def:Astar}). 
    Right: The dynamics applied to $\eta^z$ with $z=8$, the zero at $L$ is removed before all other zeros are removed so that $\eta^z\notin A_*$.}
\end{figure}

\subsubsection{Key properties of $A_*$ and proof of the lower bound in Theorem \ref{paletti}} Our key result is the following:
\begin{proposition}
\label{collefiorito} 
Let $2 \leq L \leq d/q$ and set $n := \lceil \log _2 L \rceil$.  Then
\begin{equation}
\cD_\L (\mathds{1}_{A_*}) \leq  \frac{q^n 2^{\binom{n}{2} }}{n!}  q^{-\a}
\end{equation}
for some positive  constant $\a$ depending only on $d$. 
\end{proposition}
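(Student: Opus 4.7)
Proof proposal.

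The plan is to apply the identity \eqref{rane} combined with a careful enumeration of the internal boundary $\partial A_*$. From \eqref{rane}, \eqref{atmosfera} and the crude bound $K(\eta,A_*^c)\leq L+1\leq 2d/q$, one has
\begin{equation*}
\cD_\L(\mathds{1}_{A_*})
=\sum_{\eta\in\partial A_*}\pi(\eta)\,K(\eta,A_*^c)
\;\leq\;\frac{2d}{q}\,\pi(\partial A_*),
\end{equation*}
so it suffices to prove $\pi(\partial A_*)\leq q^{n+1}\,2^{\binom n 2}(n!)^{-1}q^{-\a'}$ for some $\a'$ depending only on $d$.

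\textbf{Step 1 (vacancy count).} Any $\eta\in\partial A_*$ has $\eta_L=0$ and admits a legal East-neighbour $\eta^x\in A_*^c$, so the deterministic dynamics started from $\eta^x$ removes the vacancy at $L$ before hitting $\10$. Since $\phi_{d,x}$ only erases vacancies, and each erasure coincides with an East legal flip once the simultaneous removals at the same gap value are ordered right-to-left, the combinatorial bound of \cite{CDG} applied along this trajectory forces $\eta^x$ to contain at least $n+1$ simultaneous vacancies. Consequently $\eta$ itself carries at least $n+1$ vacancies in $[1,L]$.

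\textbf{Step 2 (counting).} The crucial rigidity of the deterministic dynamics is that for every $\eta\in A_*$ the trajectory $\eta=\eta^{(0)},\eta^{(1)},\dots,\eta^{(k-1)}=\10$ is \emph{uniquely} determined by $\eta$: at each step the erased vacancy is the rightmost one with the minimal current gap. Reversing this procedure, every element of $A_*^{(k)}:=A_*\cap\{|\eta|_0=k\}$ is reconstructed from $\10$ by a prescribed sequence of $k-1$ vacancy additions, at each step in a position constrained by the current gap structure. Mimicking the dyadic partitioning used in \cite{CDG} and \cite{CMRT}, I would prove by induction on $n$ that
\begin{equation*}
|A_*^{(k)}|\;\le\; C^{\,k-n}\,\frac{2^{\binom n 2}}{n!}\,q^{-\a''},\qquad k\ge n+1,
\end{equation*}
for constants $C,\a''$ depending only on $d$. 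Heuristically, one $n!$ in the denominator comes from the fact that the order of removals is fixed (instead of free as in the CDG enumeration of $V_n\asymp n!\,2^{\binom n 2}$), and a further $n!$ from the rigid right-to-left placement within each gap class. Summing over $k\ge n+1$ with $\pi(\eta)\leq q^{|\eta|_0}$ and $Cq<1$ for $q$ small then gives
\begin{equation*}
\pi(\partial A_*)\;\leq\;\sum_{k\ge n+1}q^k\,|A_*^{(k)}|\;\leq\; q^{n+1}\,\frac{2^{\binom n 2}}{n!}\,q^{-\a'},
\end{equation*}
and the proposition follows with $\a=\a'+1$.

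The main obstacle is the implementation of Step~2: turning the heuristic ``factor $(n!)^{-2}$ gain from the rigidity of the deterministic algorithm'' into a rigorous inductive bound. Concretely, one must track how the gap structure constrains, at every level of the recursion, the admissible positions of the newly added vacancy in the reverse procedure, and show that the polynomial-in-$1/q$ overhead stemming from boundary effects (in particular when $L$ is not a pure power of two) can be uniformly absorbed into $q^{-\a''}$. This refined combinatorial control over the deterministic dynamics is precisely what distinguishes Proposition \ref{collefiorito} from the cruder bound of Lemma \ref{imbottigliato}.
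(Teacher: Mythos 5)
The reduction $\cD_\L(\1_{A_*})\le (L+1)\pi(\partial A_*)$ is fine and matches the paper, but both steps that should carry the weight are flawed. In Step 1, your appeal to the combinatorial bound of \cite{CDG} ``along the trajectory'' of the deterministic dynamics is not legitimate: the elementary moves $\phi_{d,x}$ erase vacancies of gap $d\ge 2$, i.e.\ vacancies whose left neighbour is occupied, so they are \emph{not} legal East flips, and the CDG result (which concerns East paths through $Z_{n+1}$) does not apply to that trajectory. More importantly, even granting a lower bound of order $n$ on the number of vacancies of a boundary configuration, this is far from sufficient: a union bound using only the vacancy count gives $\pi(\partial A_*)\lesssim\sum_{k\ge n}q^k\binom{L}{k}\approx (qL)^n/n!\approx d^n/n!$, which completely misses the factor $q^n2^{\binom{n}{2}}$. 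What the paper actually proves (Lemma \ref{gallo2} together with the interval-growing algorithm and Lemma \ref{rumore}) is a \emph{rigidity} statement about where those vacancies sit: around the flip site $z_0$ they must form a cascade in which the $k$-th vacancy lies within distance $d_k\le d_1+\cdots+d_{k-1}$ of the interval spanned by the previous ones, and the number of such witness patterns is at most $2^n2^{\binom{n}{2}}/n!$ (a volume/integration estimate). One then takes a union bound over patterns, paying $q^{n+1}$ per pattern — not an enumeration of configurations by their number of zeros.

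In Step 2 the surrogate you propose is provably false. Every configuration with a vacancy at $L$ and $n$ further vacancies placed arbitrarily in $[1,\lfloor L/2\rfloor-1]$ belongs to $A_*$: the gaps of those vacancies never exceed $\lfloor L/2\rfloor-1$, while the gap of the vacancy at $L$ stays above $L/2$ until they are all erased, so the deterministic dynamics reaches $\10$ by step $(L-1,1)$. Hence $|A_*^{(n+1)}|\ge\binom{\lfloor L/2\rfloor-1}{n}\approx (d/2q)^n/n!$, which grows like $q^{-n}$ and cannot be bounded by $C\,2^{\binom{n}{2}}q^{-\a''}/n!\approx q^{-(n-1)/2-\a''}$ once $n\to\infty$ with $q$ — precisely the regime $L=d/q^\g$ where the proposition has content. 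The smallness needed is a property of the boundary $\partial A_*$, not of $A_*$, and it is invisible to vacancy counts alone; it comes from the geometric constraint on vacancy positions described above. So the heart of the proposition — the structural lemma for $\partial A_*$ and the $2^{\binom{n}{2}}/n!$ count of admissible doubling sequences — is missing from your proposal, and the inductive bound offered in its place cannot be repaired as stated.
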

The following consequence will be useful to prove Theorem~\ref{hetero}.
\begin{corollary}
\label{boundarymeasure}
Under the same assumptions of Proposition \ref{collefiorito}
\[
\pi(\partial A_*)\le   \frac{q^n 2^{\binom{n}{2} }}{n!}  q^{-(1+\a)}.
\]
\end{corollary}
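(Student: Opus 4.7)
The plan is to extract a uniform lower bound on the escape rate $K(\eta, A_*^c)$ for every configuration $\eta \in \partial A_*$, then divide Proposition \ref{collefiorito} by this lower bound.

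First I would rewrite the Dirichlet form via \eqref{rane} as
\[
\cD_\L(\mathds{1}_{A_*}) \;=\; \sum_{\eta \in \partial A_*} \pi(\eta)\, K(\eta, A_*^c).
\]
By the very definition of $\partial A_*$, for every $\eta \in \partial A_*$ there exists at least one site $x \in [1,L]$ with $c_x^\L(\eta)=1$ such that $\eta^x \in A_*^c$. Looking at \eqref{atmosfera}, each such legal flip contributes either $q$ (if $\eta_x=1$) or $p = 1-q$ (if $\eta_x=0$) to $K(\eta,A_*^c)$. Since we always assume $q<1/2$, both quantities are bounded below by $q$, so
\[
K(\eta, A_*^c) \;\ge\; q \qquad \forall \eta \in \partial A_*.
\]

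Plugging this lower bound into the previous display yields
\[
\cD_\L(\mathds{1}_{A_*}) \;\ge\; q\, \pi(\partial A_*),
\]
and hence $\pi(\partial A_*) \le q^{-1}\, \cD_\L(\mathds{1}_{A_*})$. Combining with the estimate of Proposition \ref{collefiorito} gives
\[
\pi(\partial A_*) \;\le\; q^{-1}\,\frac{q^n\, 2^{\binom{n}{2}}}{n!}\, q^{-\alpha}
\;=\; \frac{q^n\, 2^{\binom{n}{2}}}{n!}\, q^{-(1+\alpha)},
\]
which is exactly the claim. There is no real obstacle here: the only nontrivial input is Proposition \ref{collefiorito} itself; the reduction from $\cD_\L(\mathds{1}_{A_*})$ to $\pi(\partial A_*)$ is just the observation that the escape rate from any boundary configuration is at least $q$, which loses precisely one factor of $q$.
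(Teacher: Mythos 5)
Your proposal is correct and is essentially identical to the paper's own argument: both divide the identity \eqref{rane} for $\cD_\L(\mathds{1}_{A_*})$ by the observation that $K(\eta,A_*^c)\ge q$ for every $\eta\in\partial A_*$ (cf. \eqref{atmosfera}) and then invoke Proposition \ref{collefiorito}. Nothing further is needed.
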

\begin{proof}[Proof of the Corollary]
It follows immediately from Proposition \ref{collefiorito} together
with \eqref{rane} if we observe that the escape rate
$K(\eta,A_*^c)\ge q$ for all $\eta\in \partial A_*$ (cf. \eqref{atmosfera}).  
\end{proof}
We have now all the ingredients to prove the lower bound in Theorem \ref{paletti}.
As already observed  $\mathds{1}0 \in A_*$ and $\{ \eta_L=1\} \subset
A_*^c$, so $\p(A_*)\geq p^{L-1}q$ and $\p(A_*^c) \geq p$. At this point the lower bound on the relaxation time follows at once from \eqref{biancaneve} together with Proposition \ref{collefiorito}. The proof of Theorem \ref{paletti} is now complete modulo the  proof of Proposition \ref{collefiorito} which is the subject of the following section. 
\qed

\subsubsection{Proof of Proposition \ref{collefiorito}} 

We first collect some properties which follow immediately from the definition of
the deterministic dynamics:
\begin{itemize}
\item[(P1)] The deterministic dynamics can only remove vacancies, hence  gaps
  are increasing under the dynamics. 
  Also, if $\eta$ has a vacancy at $x$ with gap $d$, then such a vacancy is still present
  in the configurations $\dyn{\eta}{d'}{x'}$ with $(d',x')\prec  (d,x)$.
\item[(P2)]
  $\dyn{\eta}{d}{x}$ contains no vacancies with gaps smaller than $d$ and all vacancies right of $x$ (including $x$)
  have gaps no smaller than $d+1$.
\item[(P3)] Whether the deterministic dynamics removes
  a vacancy or not at a site $x$ only depends on the configuration to
  the left of $x$,  so for all $d,x,y \in [1,L]$   the value  $\dyn{\eta}{d}{y}_x$ is
  independent of $\eta_{(x,L]}$.
  
\item[(P4)]
  Once the deterministic dynamics started from two different initial
  configurations are the same, then they remain the same. 
  Equivalently, for two configurations $\eta$ and $\eta'$ if there exists $(d,x)$ such that
  $\dyn{\eta}{d}{x}= \dyn{\eta'}{d}{x}$ then $\dyn{\eta}{\bar d}{\bar x} =
  \dyn{\eta'}{\bar d}{\bar x }$ for all $(\bar d, \bar x ) \succ (d,x)$. 
  In this case we say that the two dynamics, starting from $\eta$ and $\eta'$ respectively, couple.
\end{itemize}

In what follows, we set $\eta_0:=0$ to denote the frozen zero at the origin.
To simplify the notation we continue to write $\L$ for $[1,L]$ and introduce  $\L^0=[0,L]$ (note $\L^0$ includes the origin on which there will always be a fixed vacancy).

\medskip

\begin{lemma}
\label{gallo2}
Let $\eta \in \partial A_*$ and let $z \in\L$ be such that $c^\L_z( \eta)=1$ and $\eta^z \not \in A_*$.
Take an interval $I=[a,b]\subseteq \L^0$ containing $z$ and $z-1$. 
Define $\ell := b-a = |I| - 1$ and 
\begin{align}
    I_-:= (a-\ell,a)\cap \L^0,\nonumber\\
    I_+ := (b, b+\ell]\cap \L^0.
\end{align}
Then $I = \L^0$ or $\eta$ has at least one vacancy in $ (I_- \cup I_+) \subset  \L^0$.
\end{lemma}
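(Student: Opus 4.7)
The plan is to argue by contradiction: assume $I \neq \Lambda^0$ and that $\eta$ has no vacancy in $I_- \cup I_+$. I will show that, under these hypotheses, the deterministic dynamics applied to $\eta$ and to $\eta^z$ eventually produce identical configurations, so that by property (P4) one has $\dyn{\eta}{L-1}{1} = \dyn{\eta^z}{L-1}{1}$, contradicting $\eta \in A_*$ and $\eta^z \notin A_*$. Property (P3) already gives agreement at all sites $y < z$, so the task reduces to proving eventual coupling on $[z, L]$, and by (P4) it suffices to exhibit a single stage $d^*$ with $\dyn{\eta}{d^*}{1} = \dyn{\eta^z}{d^*}{1}$.

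I would track the discrepancy set $D_d := \{y : \dyn{\eta}{d}{1}_y \neq \dyn{\eta^z}{d}{1}_y\}$ through successive gap stages, starting from $D_0 = \{z\}$, and show $D_{d^*} = \emptyset$ for some $d^* \leq \ell$. The gap-$1$ stage is direct: the flip at $z$ either creates or removes a gap-$1$ vacancy at $z$ (using $\eta_{z-1} = 0$), processed immediately. The only potential new discrepancy after stage $1$ sits at $z+1$, and it arises iff $\eta_{z+1}=0$---then the vacancy at $z+1$ has gap $1$ in exactly one of the two evolutions, depending on $\eta_z$. If $\eta_{z+1}=1$, then $D_1=\emptyset$ and (P4) closes the argument. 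Otherwise $\eta_{z+1}=0$, in which case one first rules out $z = L$: indeed $\eta \in A_*$ forces $\eta_{L-1}=1$ (else the vacancy at $L$ is gap-$1$ and is removed at the very first processed pair $(1,L)$, giving $\dyn{\eta}{L-1}{1}_L \ne 0$), which precludes a legal flip at $z=L$. Consequently $z+1 \le L$, and if $z=b$ then $z+1 = b+1 \in I_+$ carries a vacancy, contradicting the hypothesis. So $z+1 \in I$.

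For $d>1$ I would iterate, establishing by induction that a surviving discrepancy at stage $d$ sits at position at most $z + f(d) \in I$ for an explicit hierarchical function $f(d)$ of the CDG type (cf. \cite{CDG}), and requires a specific pattern of vacancies of $\eta$ in the range $[z+1, z+f(d)]$ to sustain it. Extending the discrepancy past $b$ would at some stage demand a vacancy at $b+1 \in I_+$, excluded by hypothesis; this keeps the discrepancy inside $I$. Dually, the absence of vacancies in $I_-$ furnishes the lower bound $\ell$ on the gap of the leftmost vacancy in $I$, needed to keep the inductive propagation within $I$ clean. Finiteness of the sustaining chain then forces $D_{d^*} = \emptyset$ for some $d^* \le \ell$, giving the required coupling. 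The main obstacle is the inductive bookkeeping at stages $d\ge 2$---identifying precisely which positions host the discrepancy at stage $d$ and which patterns of $\eta$ sustain it, and in particular verifying that propagation past $b$ forces a vacancy in $I_+$ rather than some subtler evasion. This requires a careful analysis combining all of properties (P1)--(P4) with the hierarchical combinatorial structure of the deterministic dynamics.
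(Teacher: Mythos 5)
Your overall strategy is the same as the paper's (argue by contradiction assuming $I\neq \L^0$ and no vacancy of $\eta$ in $I_-\cup I_+$, then couple the deterministic trajectories of $\eta$ and $\eta^z$ and invoke (P4) to contradict $\eta\in A_*$, $\eta^z\notin A_*$), and your treatment of the preliminaries ($z<L$, the gap-$1$ stage, the location of the first possible discrepancy at $z+1$) is sound. However, the proposal has a genuine gap exactly where the content of the lemma lies: the inductive control of the discrepancy set for gap stages $d\ge 2$. You assert, but do not prove, (i) that a surviving discrepancy at stage $d$ is confined to positions $\le z+f(d)$ inside $I$ unless a vacancy in $I_+$ is present, and (ii) that ``finiteness of the sustaining chain'' forces the discrepancy to vanish at some stage $d^*\le\ell$. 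Neither claim is justified in the proposal; in particular no argument is given for why the discrepancy cannot cross $b$ through some configuration of vacancies entirely inside $I$, nor for why it must die by stage $\ell$ (you never use the emptiness of $I_-$ beyond a one-line remark that it gives the leftmost vacancy of $I$ a gap $\ge\ell$). You yourself flag this as the ``main obstacle,'' so what is offered is a plan for a proof rather than a proof.

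For comparison, the paper closes precisely this gap without any stage-by-stage bookkeeping, by a static use of (P1)--(P3) and the dichotomy in (P2). Let $u$ be the leftmost common vacancy of $\eta,\eta^z$ in $I$ (it exists since $\eta_{z-1}=0$, and $u\le z-1$); because $I_-$ is vacancy free, $g_u\ge\ell$, so by (P1) the vacancy at $u$ is still present in both $\dyn{\eta}{\ell}{u+1}$ and $\dyn{\eta^z}{\ell}{u+1}$, and then (P2) forbids any vacancy of either configuration in $[u+1,b]$ at that stage (such a vacancy would have gap $\le b-u\le\ell$), which forces agreement on $[z,b]$. Agreement on $y<z$ is (P3), and agreement on $y>b$ follows because $\eta$ and $\eta^z$ coincide there and, since $I_+$ is vacancy free, their leftmost vacancy to the right of $b$ has gap $>\ell$, survives by (P1), and everything to its right evolves identically in the two trajectories. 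This yields coincidence of the two trajectories at a single explicit stage of gap $\ell$, after which (P4) gives the contradiction. If you want to salvage your route, you would have to supply the missing induction on $d$, which is considerably more delicate than the structural argument above; as written, the proposal does not establish the lemma.
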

\begin{remark}
  If $\eta \in \partial A_* \subset A_*$ then $\eta_L = 0$ since vacancies can only be removed by the deterministic dynamics.
  If $I \neq \L^0$ then $(I_- \cup I_+) \subset  \L^0$ may contain the origin or the site $L$ on which $\eta$ is necessarily zero for all $\eta  \in\partial A_*$. 
  This case is not excluded from the lemma. 
  Fig. \ref{fig:gallo} shows an illustrated application of the lemma.
\end{remark}

\begin{figure}[t]
  \centering
  \mbox{\hbox{\includegraphics[width=0.75\textwidth]{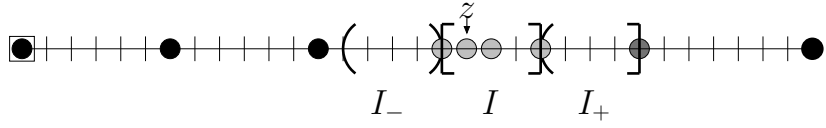}}}
  \caption{\label{fig:gallo} Lemma \ref{gallo2} applied to a configuration on a system of size $L = 32$. Zeros are shown by circles and ones by $|$ . The frozen zero at the origin is contained in a box. A configuration $\eta \in \partial A_*$ is shown where $\eta^z\notin A_*$ for $z = 18$. Lemma \ref{gallo2} is illustrated with $I =[z-1,z+3]$, zeros of $\eta$ contained in $I$ are light grey, $\eta$ does not contain any zeros on $I_-$ and there is a single vacancy in $I_+$ coloured dark grey.}
\end{figure}

\begin{proof}  
Fix $\eta \in \partial A_*$ and let $z \in\L$ be such that $c^\L_z( \eta)=1$ and $\eta^z \not \in A_*$.
We first note that $z-1 \in I$ and  $\eta_{z-1} = 0$ because $c^\L_z( \eta)=1$. 
Also $z < L$, otherwise $\dyn{\eta}{1}{L}_L = 1$ which contradicts $\eta \in A_*$.
Suppose for contradiction that $I \neq \L^0$ and $\eta_y = 1$ for all $y \in I_{-} \cup I_+$. 
Since $I \neq \L^0$ we have $\ell < L$. We will prove that 
\begin{align}\label{porta}
  \dyn{\eta}{\ell}{a+1}_y &= \dyn{\eta^z}{\ell}{a+1}_y \qquad \forall y \in \L_{L  },
\end{align}
which leads to a contradiction by property (P4) and the definition of $A_*$.  
Observe that $I \neq \L^0$ implies $(I_{-} \cup I_+) \neq \emptyset$ since $1\leq \ell < L$ and $z<L$.
\medskip

{\bf Claim: \eqref{porta} holds for all $y\in \L$ with $y <z$}. 
This follows immediately from property (P3) and the fact that $\eta$ and $\eta^z$ coincide on $[1, z)$. 

\medskip

{\bf Claim:  \eqref{porta} holds for all $y\in\L$ with $y \geq b+1$}. 
Our supposition, $\eta_y =1$ for all $y \in I_-\cup I_+$, implies $b+\ell < L$ since $\eta$ necessarily has a vacancy at $L$ for each $\eta \in A$. 
Since $z\leq b$, $\eta$ and $\eta^z$ coincide on $[b+1, L]$. 
Since $\eta_L = 0$ there is at least one vacancy on $[b+1,L]$, call $u$ the position of the leftmost vacancy of $\eta$ on $[b+1,L]$.
By hypothesis $\eta_y=1$ for all  $y \in [b+1, b+\ell]$ so the vacancy at $u$ has gap  $g_u(\eta)>\ell$  and $g_u(\eta^z)>\ell$.
By property (P1) the vacancy at $u$ is still present in $\dyn{\eta}{\ell}{a+1}$, $ \dyn{\eta^z}{\ell}{a+1}$. 
Since $\eta$ and $\eta^z$ coincide on the right of $u$, it is trivial to check that the deterministic dynamics starting from $\eta$ and $\eta^z$ coincides on $[u,L]$ until the vacancy at $u$ is removed. 
This proves \eqref{porta} for $y \geq u$. 
Since $\eta$ and $\eta^z$ have no vacancy in $[b+1, u)$, from property (P1) we derive \eqref{porta} for $ y \in [b+1,u)$. 
This concludes the proof of our claim.

\medskip

{\bf Claim: \eqref{porta} holds for all $y\in\L$ with $z\leq y \leq b $}. 
We define the gap of the frozen vacancy at the origin as infinite, since it remains under the dynamics for all times, $g_0(\eta) = g_0(\eta^z) := \infty$.
Let $u$ be the leftmost zero of $\eta$ and $\eta^z$ that is contained in $I$, $u$ is well defined since $\eta_{z-1}=\eta^z_{z-1} =0$ and $\eta$ and $\eta^z$ coincide outside of $z$.
By assumption $g_u(\eta) = g_u(\eta^z) \geq \ell$.  
Due to (P1) and (P2) we conclude that  $\dyn{\eta}{\ell}{u+1}$ and $\dyn{\eta^z}{\ell}{u+1}$ still have a vacancy at $u$. 
This fact implies that both $\dyn{\eta}{\ell}{u+1}$ and $\dyn{\eta^z}{\ell}{u+1}$ have no vacancy in $[u+1, b]$, otherwise they would have some vacancy of gap at most $b-u\leq\ell$ in contradiction with (P2).  
\end{proof}

The above Lemma \ref{gallo2} allows us to isolate a special subset of vacancies for a generic configuration $\eta \in \partial A_*$. 
This special subset will be defined iteratively. 
To this aim we first associate to $\eta \in \partial A_*$ an increasing  family of subsets 
$\D_1\subset \D_2 \subset \cdots \subset \D_K$ in $\L^0$, where $\D_i$ contains at least $i$ vacancies, by the  algorithm  described below.

For clarity we first describe the algorithm in words. 
Take $ \eta \in \partial A_*$ and define $z_0$ by choosing a site $z \in \L$ such that $c_z^\L(\eta)=1$ and $\eta^z \not \in A_*$. 
We know that $\eta_{z_0-1}=0$ since $c_z^\L(\eta)=1$, set $\D_1= [z_0-1,z_0]$. 
Let $a:= z_0-1$, $b:=z_0$, suppose $L>1$ so that $\D_1\neq \L^0$, we know by Lemma \ref{gallo2} that there is a vacancy of $\eta$ in $I_-\cup I_+\subset \L^0$  (where $I_-$ and $I_+$ are defined as in  Lemma \ref{gallo2}), and since $\ell = 1$ we have $I_-=\emptyset$ and $I_+ = \{z_0+1\}$.
Let $x_1 = z_0+1$ and make $\D_2$ the extension of $\D_1$ to include $x_1$, $\D_2:= [a,x_1]$.
We now proceed by induction, defining $I = \D_k$  then $\D_{k+1}$ is the extension of $\D_k$ having $x_k$ as extreme, where $x_k$ is a vacancy of $\eta$ in $I_-\cup I_+$, by applying Lemma \ref{gallo2}, until  $\D_K =  \L^0$ (if there is more than one vacancy, fix a rule to specify $x_k$ uniquely).
At a certain moment we will cover $\L^0$, that is we arrive at a set  $\D_K$ such that $\D_K = \L^0$, and the algorithm will stop.
In this way we show that $\eta$ contains at least a certain number of
vacancies which must also satisfy certain geometric constraints. 

We now make the algorithm described above precise. 
It is convenient to use the following notation: given the interval  $I=[a,b]\subset \bbN $  and a site  $x \in \bbN \setminus I$, we define $I \star x$ as the set $[a,x]$ if $x >b$ and as the set $[x,b]$ if $x<a$. We assume that $L>1$.
The input of the algorithm is given by the pair $(\eta,z_0)$ where $\h
\in \partial A_*$ and $z_0 \in \L$ is such that $c^\L_{z_0}(\eta)=1$ and $\eta^{z_0} \not \in A_*$.

\medskip

\centerline{
\emph{Algorithm to  determine $K, \D_1, \D_2, \dots, \D_K$ given  $(\h,z_0)$.}
}

\medskip

\begin{itemize}

\item STEP 0: Set $z_1= z_0-1$ and $\D_1:= [z_1,z_0]$.
 
\medskip

\item INDUCTIVE STEP.  Suppose we have defined $\D_1, \D_2, \dots , \D_k$, let $I = \D_k$. Define  $\ell$, $I_-$ and $I_+$ as in Lemma \ref{gallo2}.

\medskip

\begin{itemize}
\item \textbf{Case 1:}  If $\D_k = \L^0$ set $K=k$ and STOP.

\smallskip

\item \textbf{Case 2:} If $\D_k \neq \L^0$ then let $z_{k+1}$ be the position of the vacancy of $\eta$ in $I_- \cup I_+$  which is nearest to the border of $\D_k$ (take the leftmost one if two are of equal distance). Such a vacancy at $z_{k+1}$ exists due to Lemma \ref{gallo2}. Set $\D_{k+1}:= \D_k \star z_{k+1}$.
\end{itemize}
\end{itemize}

\medskip

Since $\D_{k+1}$ is obtained from  $\D_k$ by enlarging it, the above algorithm always stops.
Note that each interval $\D_{k+1}$ is obtained by extending $\D_k$ either on the left or on the right. 
Hence $\D_{k+1}$ has one extreme in common with $\D_k$ and one extreme not belonging to $\D_k$. 
The following observation  is fundamental and follows immediately from the definition of the algorithm (we omit its proof):

\begin{lemma}\label{rumore}
The vacancies of $\h \in \partial A_*$ in $\L^0 \setminus \{z_0\}$ are located at $\{z_1,z_2,\ldots,z_K \}$, moreover
$$  
\left|\L \cap \{z_1, z_2, \dots , z_K\}\right | = K-1\ .
 $$
In particular the number of vacancies of $\eta \in \partial A_*$ is $K-1$ or $K$, if $\eta_{z_0} = 1$ or $0$ respectively.
\end{lemma}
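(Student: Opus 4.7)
The plan is to prove by induction on $k$ that the vacancies of $\eta$ lying in $\Delta_k \setminus \{z_0\}$ are exactly $\{z_1, z_2, \ldots, z_k\}$. Once this is established at $k = K$ (when $\Delta_K = \Lambda^0$), the first claim of the lemma follows immediately.

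For the base case $k = 1$ the set $\Delta_1 = [z_0 - 1, z_0]$ contains only the site $z_1 = z_0-1$ besides $z_0$, and $\eta_{z_1} = 0$ because $c^\Lambda_{z_0}(\eta) = 1$. For the inductive step, assume the claim for $\Delta_k$ and consider $\Delta_{k+1} = \Delta_k \star z_{k+1}$. By construction $z_{k+1}$ is the nearest vacancy of $\eta$ to the border of $\Delta_k$ among those in $I_- \cup I_+$, whose existence is guaranteed by Lemma \ref{gallo2}. Assume without loss of generality that $z_{k+1}$ lies in $I_+$, so that $\Delta_{k+1} \setminus \Delta_k = (b_k, z_{k+1}]$, where $b_k$ denotes the right endpoint of $\Delta_k$. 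Since $z_{k+1}$ is by definition the closest vacancy to $b_k$ inside $I_+$, no other vacancy of $\eta$ can sit in $(b_k, z_{k+1})$, so the only new vacancy added when passing from $\Delta_k$ to $\Delta_{k+1}$ is $z_{k+1}$ itself, completing the induction. The case $z_{k+1} \in I_-$ is symmetric.

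For the second claim it suffices to observe that the frozen vacancy at the origin forces $0 \in \Lambda^0 \setminus \{z_0\}$ (recall that $z_0 \in \Lambda$ is strictly positive) to belong to $\{z_1, \ldots, z_K\}$ by the first claim. The $z_i$ for $i \geq 0$ are pairwise distinct: indeed $z_1 = z_0 - 1 \neq z_0$, while for $i \geq 2$ one has $z_i \notin \Delta_{i-1}$ by construction, so $z_i$ differs from all previous $z_j$. Hence exactly one of the $z_i$ equals $0$, leaving $K - 1$ of them inside $\Lambda$. The final statement on the total number of vacancies of $\eta$ in $\Lambda$ follows by adding back $z_0$: it contributes an extra vacancy precisely when $\eta_{z_0} = 0$, giving $K$ in that case and $K-1$ otherwise.

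The only genuinely delicate point is the inductive step, where one must crucially use that $z_{k+1}$ is selected as the vacancy \emph{closest} to the border of $\Delta_k$ (rather than an arbitrary vacancy in $I_- \cup I_+$). Without this greedy choice one could skip over intermediate vacancies when enlarging $\Delta_k$, and the identification of the full vacancy set via $\{z_1, \ldots, z_K\}$ would fail. Everything else amounts to unpacking the definitions of the algorithm, the star operation, and of $\Lambda^0$.
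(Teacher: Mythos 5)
Your proof is correct and is essentially the paper's intended argument: the paper omits the proof altogether, stating that the lemma ``follows immediately from the definition of the algorithm,'' and your induction along the algorithm's steps---using crucially that $z_{k+1}$ is the vacancy in $I_-\cup I_+$ \emph{nearest} to the border of $\D_k$, so that no vacancy of $\eta$ can lie strictly between $\D_k$ and $z_{k+1}$---is exactly the unpacking of that claim. (A trivial remark: when $z_0=1$ the vacancy at $z_1=0$ is the frozen one guaranteed by the convention $\eta_0=0$ rather than by the constraint $c^\L_{z_0}$, but this changes nothing.)
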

We recall that the set $\{z_1, z_2, \dots , z_K\}$ depends on $\eta \in \partial A_*$, although it certainly contains the origin, on which there is a frozen vacancy, and $L$ since  $\eta_L = 0$ for each $\eta \in \partial A_* \subset A_*$. 

We now isolate some geometric properties of $z_0,z_1,z_2, \dots z_K$.  
First note that given $z_0,z_1, \dots ,z_K$ we can recover $\D_1, \D_2, \dots, \D_K$.
Given $z_0$ the first two positions of vacancies $z_1$, $z_2 $ are determined by Lemma \ref{gallo2} ($z_1=z_0-1$ and $z_2=z_0+1$). 
To describe the points $z_0,z_1,z_2,z_3,\dots , z_K$ we can use the following formalism. 
For each $k: 2\leq k \leq K $ we set $\e_k=-1$ if $z_k$ is on the left of $\D_{k-1}$  and $\e_k=+1$ otherwise; while we define $d_k$ as the Euclidean distance of $z_k$ from $\D_{k-1}$.  
Hence, writing $\D_{k-1}=[a,b]$, we have $z_{k}= a-d_k$ if $\e_k=-1$ and $z_k= b+d_k$ if $\e_k=+1$.  
Writing $\ell_k$ for the length of $\D_k$ ($\ell_k = |\D_k| - 1$),  note that
\begin{equation}
\begin{cases}
\ell_1= 1\,, & \\
d_k \leq \ell_{k-1}    & \forall k: 2\leq k \leq  K \,,\\
\ell_k = \ell_{k-1}+d_k & \forall k: 2\leq k \leq  K \,.
\end{cases}
\end{equation}
This implies that  $\ell_k=  d_1+d_2+d_3+\cdots+d_k$ (where $d_1:=1$) and $ \ell_k = \ell_{k-1}+d_k \leq 2 \ell_{k-1}$ for $ 2\leq k \leq K$. 
In particular, 
\begin{equation}\label{palla}
\ell_k \leq 2^{k-1} \qquad \forall k \,:\,1\leq k \leq K\,.
\end{equation}
When the algorithm stops $\D_K = \L^0$ and $\ell_k = L$, so $L \leq 2^{K-1}$ which implies $K \geq n+1 $ where  $n = \lceil \log _2 L\rceil$.

Due to \eqref{rane} and \eqref{atmosfera} we have
\begin{align}
  \label{eq:D-new}
  \cD_\L(\mathds{1}_{A_*})=  \sum_{z_0 \in \L}\left[p\, \pi\left(\partial A_*^{z_0,0}\right) + q\, \pi\left(\partial A_*^{z_0,1}\right)\right]\ ,
\end{align}
where
\begin{align*}
  \partial A_*^{z_0,0}&:=\{\eta \in \partial A_* \sst c_{z_o}^\L(\eta) = 1,\ \eta_{z_0} = 0 \text{ and } \eta^{z_0}\notin A_*\}\,,\\
\partial A_*^{z_0,1}&:=\{\eta \in \partial A_* \sst c_{z_o}^\L(\eta) = 1,\ \eta_{z_0} = 1 \text{ and } \eta^{z_0}\notin A_*\}\,.
\end{align*}
Collecting all the above geometric considerations we have the following result:
\begin{lemma}
The boundary $\partial A_*$ satisfies $\partial A_* = \bigcup_{z_0\in\L}\bigcup_{i\in\{0,1\}}\partial A_*^{z_0,i} $ and
\begin{align}
  \label{eq:A-decomp}
\partial A_*^{z_0,i} \subset \left\{ \eta \in \O_\L \,:\, \eta_{z_0} = i \text{ and } \eta_z =0 \; \forall z \in W \cap \L \text{ for some } W \in \G_{z_0} \right\}
\end{align}
where the set $\G_{z_0} $ is  given by the families $ \{z_1, \dots, z_{n+1} \}$ of distinct points in $\L^0$ such that
\begin{itemize}
\item the position of $z_1=z_0-1$ is uniquely determined by  $z_0 \in \L$,
\item $| \{z_1, \dots, z_{n+1} \} \cap \L | \geq n$,
\item there exist positive integers $d_1, \dots, d_{n+1}$ such that
\begin{equation}\label{paride}
 \begin{cases}
 d_1 = 1\,, & \\
 d_k \leq d_1+d_2+ \cdots +d_{k-1} \qquad &
\forall k: 2\leq k \leq n+1\,.
\end{cases}
\end{equation}
\item there exist $\e_2, \cdots, \e_{n+1}  \in \{-1,+1\}$ such that, setting $\D_1:=[z_1,z_0]$, the following recursive identities are satisfied for $k=2,\dots, n+1$
\begin{equation}
\begin{cases}
 z_k= a- d_k & \text{ if } \e_k=-1\,,\\
 z_k= b+d_k & \text{ if } \e_k=1\,,
 \end{cases}
 \end{equation}
where $ \D_{k-1}=[a,b]$ and $\D_k := \D_{k-1} \star z_k$.
\end{itemize}
\end{lemma}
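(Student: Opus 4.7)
My plan is to unpack the algorithm stated just before the lemma and read off its output as an element of $\G_{z_0}$. The equality $\partial A_* = \bigcup_{z_0\in\L}\bigcup_{i\in\{0,1\}}\partial A_*^{z_0,i}$ is immediate from the definition of the internal boundary of $A_*$: $\eta\in\partial A_*$ iff some East-neighbor $\sigma$ of $\eta$ lies in $A_*^c$; two East-neighbors differ in exactly one coordinate $z_0$ with $c^\L_{z_0}(\eta)=1$, so $\sigma=\eta^{z_0}$, and partitioning by $i:=\eta_{z_0}$ produces the two classes $\partial A_*^{z_0,i}$.

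For the inclusion \eqref{eq:A-decomp} I would fix $\eta\in\partial A_*^{z_0,i}$ and feed the pair $(\eta,z_0)$ into the algorithm described above. Lemma \ref{gallo2} makes every inductive step legitimate: either the current interval $\D_k$ already equals $\L^0$ and the algorithm halts, or $\D_k\neq\L^0$ and Lemma \ref{gallo2} applied with $I=\D_k$ supplies a vacancy of $\eta$ in $I_-\cup I_+$, giving $z_{k+1}$. Termination follows from $\D_{k+1}\supsetneq\D_k$. The output is an integer $K$, distinct points $z_1,\dots,z_K\in\L^0$, signs $\epsilon_2,\dots,\epsilon_K\in\{-1,+1\}$ and distances $d_1:=1,d_2,\dots,d_K\ge 1$ such that $\D_1=[z_0-1,z_0]$ and $\D_k=\D_{k-1}\star z_k$. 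The requirement $z_k\in I_-\cup I_+$ in the algorithm is exactly $d_k\le \ell_{k-1}=d_1+\cdots+d_{k-1}$, where $\ell_k:=|\D_k|-1$ satisfies $\ell_k=\ell_{k-1}+d_k$. An easy induction then gives $\ell_k\le 2^{k-1}$, and since the algorithm stops only when $\ell_K=L$ we conclude $L\le 2^{K-1}$, hence $K\ge n+1$ with $n=\lceil\log_2 L\rceil$.

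Finally I would truncate to the first $n+1$ points and set $W:=\{z_1,\dots,z_{n+1}\}$. By Lemma \ref{rumore} each $z_k$ with $k\ge 1$ is a vacancy of $\eta$ in $\L^0$, and the $z_k$ are distinct because every $z_{k+1}$ is chosen outside $\D_k$, which contains $z_0,\dots,z_k$. Since at most one of $z_1,\dots,z_{n+1}$ can coincide with the origin $0\in\L^0\setminus\L$, we get $|W\cap\L|\ge n$. The remaining conditions defining $\G_{z_0}$, namely $z_1=z_0-1$, the estimate \eqref{paride} for $d_k$, and the recursion $z_k=a-d_k$ or $z_k=b+d_k$ in terms of $(\D_{k-1},\epsilon_k,d_k)$, are matched term by term by the algorithm's output. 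Hence $W\in\G_{z_0}$ and $\eta_z=0$ for every $z\in W\cap\L$, which establishes \eqref{eq:A-decomp}.

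The substantive work has already been done in Lemma \ref{gallo2} (which keeps the algorithm alive) and in the analysis of the deterministic dynamics; the lemma is essentially a bookkeeping packaging of those facts. The only subtle point is that $K$ may exceed $n+1$, so one must truncate, and one has to check that truncation preserves every $\G_{z_0}$ condition: this works precisely because each such condition is local in $k$, and the cardinality bound $|W\cap\L|\ge n$ uses only that $\L^0\setminus\L=\{0\}$ is a single point.
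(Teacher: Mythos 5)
Your proposal is correct and follows exactly the route the paper intends: the paper states this lemma as a direct packaging of the algorithm, Lemma \ref{gallo2}, Lemma \ref{rumore} and the bounds $\ell_k\le 2^{k-1}$, $\ell_K=L$ giving $K\ge n+1$, and your write-up supplies precisely that bookkeeping (boundary decomposition over $z_0$ and $i=\eta_{z_0}$, running the algorithm on $(\eta,z_0)$, truncating to the first $n+1$ points, and checking distinctness and $|W\cap\L|\ge n$ via $\L^0\setminus\L=\{0\}$). No gaps.
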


As immediate consequence of the above lemma we get, for $i=1$ or $0$
\begin{equation}
\label{chiave}
 q^{i}\p( \partial A_*^{z_0,i} ) \leq q^i \sum _{W \in \G_{z_0}} \p( \eta_{z_0} = i \text{ and } \eta_z =0 \; \forall z \in W \cap \L) \leq q^{n+1} |\G_{z_0}|\,.
 \end{equation}
To estimate $|\G_{z_0}|$ we use the following result:
\begin{lemma}
The numbers of strings $(d_2,d_3, \dots, d_{n+1})$ of positive integers satisfying \eqref{paride} is bounded from above by $\frac{2^{ \binom{n}{2}}}{n!}$, hence
\begin{align}
  \label{eq:G-bound}
  |\G_{z_0}| \leq \frac{2^{n} 2^{ \binom{n}{2}} }{n!}\,.
\end{align}
\end{lemma}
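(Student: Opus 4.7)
The plan is to exploit the doubling structure of \eqref{paride} via the partial sums $S_k := d_1 + \cdots + d_k$. Since $d_k \leq S_{k-1}$ is equivalent to $S_k \leq 2\,S_{k-1}$ and $S_1 = 1$, a trivial induction yields $S_k \leq 2^{k-1}$ for every $k \leq n+1$. This is what encodes, in terms of the $d_k$'s, the geometric fact already noticed in \eqref{palla}.

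With this in hand I would enumerate strings directly. For any prefix $(d_1=1, d_2, \ldots, d_{k-1})$, the admissible values of $d_k$ form the set $\{1, 2, \ldots, S_{k-1}\}$, of cardinality at most $2^{k-2}$. Multiplying over $k$, the number of strings $(d_2, \ldots, d_{n+1})$ satisfying \eqref{paride} is bounded by
\[
\prod_{k=2}^{n+1} 2^{k-2} \;=\; 2^{\binom{n}{2}}.
\]
Combined with the $2^n$ choices of signs $(\epsilon_2, \ldots, \epsilon_{n+1}) \in \{-1,+1\}^n$ which, together with $(d_2,\dots,d_{n+1})$, determine $W$ uniquely, this yields $|\Gamma_{z_0}| \leq 2^n \cdot 2^{\binom{n}{2}}$.

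The main obstacle is to produce the extra factor $1/n!$ stated in the lemma. This refinement does not come out of the crude product bound above; explicit enumeration for small $n$ suggests that the raw number of strings is of order $2^{\binom{n}{2}}$ rather than smaller by a factorial. The most natural way to recover $1/n!$ is to exploit a redundancy in the parametrization $(d,\epsilon)\mapsto W$: different orderings in which the algorithm performs the extensions $\Delta_{k-1}\to\Delta_k$ often yield the same unordered family $W=\{z_1,\ldots,z_{n+1}\}$. I would try to exhibit a free action of a group of order at least $n!$ on the fibres of this parametrization, for instance by showing that any permutation of the order in which the points $z_2,\ldots,z_{n+1}$ are adjoined to $\Delta_1$ produces a valid $(d',\epsilon')$-sequence leading to the same $W$. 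Making this ``commutativity'' of the extensions precise under the constraint $d_k \leq S_{k-1}$ appears to be where the real work sits.

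As a safety net, it is worth observing that the $1/n!$ factor is not essential for the application: since the hypothesis $L\leq d/q$ forces $n \leq \log_2(d/q)+O(1)$, one has $n! = q^{-o(1)}$, and the weaker bound $|\Gamma_{z_0}| \leq 2^n \cdot 2^{\binom{n}{2}}$ already yields the target Dirichlet form estimate $\cD_\L(\mathds{1}_{A_*}) \leq \frac{q^n\, 2^{\binom{n}{2}}}{n!}\,q^{-\alpha}$ of Proposition \ref{collefiorito}, after a mild enlargement of the constant $\alpha$.
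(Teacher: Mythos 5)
Your first step (the doubling bound $S_k\le 2^{k-1}$ and the resulting count $\prod_{k=2}^{n+1}S_{k-1}\le 2^{\binom{n}{2}}$, plus $2^n$ sign choices) is correct, but it is only the trivial part: the entire content of the lemma is the factor $1/n!$, which you do not prove, and your ``safety net'' for discarding it rests on a miscalculation. With $n=\lceil\log_2 L\rceil$ and $L$ of order $q^{-\g}$ one has $\log n!\asymp \log(1/q)\,\log\log(1/q)$, so $n!$ is \emph{super-polynomial} in $1/q$ — it is not $q^{-o(1)}$ — and a loss of $n!$ cannot be repaired by enlarging $\a$ in Proposition \ref{collefiorito}. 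Worse, the factorial is precisely the point of the lower bound in \eqref{piscina}: it is what distinguishes it from the cruder bound \eqref{stiminabis} (where $n!$ sits in the denominator), it is what makes the lower bound match the upper bound up to $q^{\pm O(1)}$, and it is used downstream: in the proof of Theorem \ref{bubu}(ii) the ratio estimate \eqref{eq:Tratio} relies on the cancellation of $n'!/n!$ between the lower bound for $T_{\rm rel}(L')$ and the upper bound for $T_{\rm rel}(L)$; without the factorial in the lower bound the ratio picks up an uncontrolled factor $1/n!=q^{+\Theta(\log\log(1/q))}$, and no choice of $\l$ independent of $q$ would give \eqref{fabio1}.

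Your proposed route to recover $1/n!$ (an $n!$-fold redundancy of the parametrization $(d,\e)\mapsto W$) is also not how the paper proceeds, and your own numerics already indicate it cannot work in the exact form you suggest: for $n=2,3,4$ there are $2,7,41$ admissible strings, all above $2^{\binom{n}{2}}/n!$, so no constant-free factorial saving (hence no free action of a group of order $n!$ on the strings) is available; the factorial gain is genuine only up to a factor $C^n$, and that is exactly the kind of factor which is harmless here, since $2^n\le 2d/q$ turns $C^n$ into a fixed power of $1/q$ absorbable into $q^{-\a}$ — whereas $n!$ is not. The paper obtains the factorial by a continuous relaxation of the counting problem rather than by a symmetry argument: the number of strings satisfying \eqref{paride} is compared with an iterated integral over the region $U(n+1)=\{0\le x_1\le 1,\ 0\le x_k\le x_1+\cdots+x_{k-1}\ \forall k\}$, and integrating out the last variable repeatedly produces, at each step, a factor $2^{m+1}/(m+1)$; the product of the $1/(m+1)$'s is the source of the $1/n!$, exactly as in the volume of a simplex. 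If you want to complete your proof, the viable path is to implement the discrete analogue of this iterated integration (or a unit-box comparison with the continuous region), tracking only exponential-in-$n$ error factors, rather than to look for an $n!$-to-one overcounting.
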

\begin{proof}
We give an iterative bound.  
Given an integer $j$ we define 
$$U(j):=\{ (x_1, x_2, \dots, x_j)  \,:\, 0 \leq x_1 \leq 1 \text{ and } 0 \leq  x_k \leq  x_1+ \dots+ x_{k-1}  \; \forall k: 2\leq k \leq j\}\ .$$
Setting $M_j:=x_1+ x_2 +x_3 + \cdots+ x_j$, integrating on the last variable we get
 \begin{equation}\label{elena1}
\int _{U(j) } dx_1 dx_2 \cdots d x_{j} = \int _{U(j-1) } dx_1 dx_2  \cdots dx_{j-1} M_{j-1}.
\end{equation}
We now prove by induction that 
\begin{equation}\label{elena2}
\int_{ U(j-m) }dx_1dx_2  \cdots d x_{j-m} M_{j-m} ^m \leq \frac{2^{m+1}}{m+1} \int_{ U(j-m-1) }dx_1 dx_2 \cdots d x_{j-m-1} M_{j-m-1} ^{m+1} \,,\end{equation}
for all $m\geq 1$ and $ j-m-1 \geq 1$. 
By integrating on the last variable we get 
\begin{align*}
\int_{ U(j-m) }dx_1 dx_2 &\cdots d x_{j-m} M_{j-m}^m =  \int_{ U(j-m) }dx_1 dx_2 \cdots d x_{j-m} \left( M_{j-m-1}+x_{j-m}\right) ^m\\
&=  \int_{ U(j-m-1) }dx_1 dx_2 \cdots d x_{j-m-1}\left[{\frac{ \left( M_{j-m-1}+x_{j-m}\right) ^{m+1} }{m+1}}\right]_{x_{j-m}=0}^{ x_{j-m} = M_{j-m-1} }\\
&\leq \frac{2^{m+1}}{m+1} \int_{ U(j-m-1) }dx_1 dx_2 \cdots d x_{j-m-1} M_{j-m-1} ^{m+1}\,.
\end{align*}
Combining \eqref{elena1} and \eqref{elena2} we conclude that
\begin{align*}
\int _{U(j) } dx_1 dx_2 \cdots d x_{j} &\leq \frac{2^2\cdot 2^3\cdots 2^{j-1} }{2\cdot 3\cdots (j-1)} \int _{U(1)}dx_1 M_1^{j-1}\\
&\leq  \frac{2^{ \binom{j-1}{2}} }{(j-1)!}\,.
\end{align*}
The result then follows from this bound by observing that the number of strings we want to estimate is bounded from above by
$\int _{U(n+1)} dx_1 d x_2 \dots d x_{n+1}$, and observing that there at most $2^{n}$ ways to choose $\e_2, \dots, \e_{n+1}$. 
\end{proof}

Combining \eqref{eq:D-new},~\eqref{eq:A-decomp},~\eqref{eq:G-bound}, and observing that there are $L$ choices for $z_0$ we find 
\begin{equation}
\cD_\L  \left( \mathds{1}_{A_*}\right)  \leq q^{n+1 }  \frac{2^{n+1}L\,2^{\binom{n}{2} }}{n!}\,.
\end{equation} 
thus implying Proposition \ref{collefiorito} (recall that $L \leq d/q$).
Since $\10 \in A$ and $\{\eta_L = 1\} \in A^c$ we have $\pi(A) \geq p^{L-1}q$ and $\pi(A^c) \geq p$, so \eqref{biancaneve} gives rise to the lower bound
\begin{align*}
  T_{\rm rel} (L)  \geq \frac{n!}{q^{n}2^{\binom{n}{2} }}\frac{p^L}{2^{n+1}L}\, \quad \textrm{where} \quad n = \lceil \log_2 L \rceil\,.
\end{align*}




%
%
%

\section{Time scale separation and dynamic heterogeneity: proofs}
\subsection{Proof of Theorem \ref{bubu} }
\label{proofbubu}
Point (i) of Theorem \ref{bubu} is an immediate consequence of the
bound  \eqref{sciroppo}. The time scale separation expressed by
\eqref{fabio1} is a corollary of \eqref{piscina}, as follows. 
Let $L=d/q^\gamma$ with $\g \in (0,1)$, $d>0$, and $L'=\l L$   with $ \lambda > 1$.
Then $L,L'\in [1,1/q]$ for $q$ sufficiently small,
so Theorem  \ref{paletti} implies there exists some universal constant $\bar\a>0$ such that,
\begin{align}
  \label{eq:Tratio}
  \frac{T_{\rm rel}(L')}{T_{\rm rel}(L)} \geq \frac{n'!2^{\binom{n}{2}}}{n!2^{\binom{n'}{2}}}q^{(n-n') + \bar\a}, \quad n = \lceil \log_2 L \rceil, \ \ n'=\lceil \log_2 L' \rceil\,.
\end{align}
Let $k := (n'{-}n) \geq \lfloor \log_2 \l \rfloor$ which is independent of $q,d$ and $\g$.
Since $ \frac{d}{q^\g} \leq 2^n \leq \frac{2d}{q^\g}$,
the above bound  \eqref{eq:Tratio}  and some straightforward algebra lead to,
\begin{align*}
  \frac{T_{\rm rel}(L')}{T_{\rm rel}(L)} \geq C n^k
  q^{\bar\a - \left(1 - \g\right)k}
\end{align*}
for some $C$ independent of $q$.
The result follows by choosing $\l$ large enough.

It remains to show that, for $\g<1/2$ and $L=
d/q^\g$, $T_{\rm rel}(2L)\succ T_{\rm rel}(L)$. 
For this purpose define $L':=2L$ and set $t= q^\b T_{\rm
  rel}(L)$ with $0<\b <1-\g$. 
A union bound shows that, for any integer $N$, 
\begin{gather}
\bbP^{\L'}_{\mathds{1}{0}}(\eta_x(s)= 1\ \forall x\neq 2L, \ \forall s=i t,\ i=1,\dots
N)\nonumber\\
\ge 1-\sum_{i=1}^N \sum_{x=1}^{2L-1}\frac{1}{\pi_{\L'\setminus \{2L\}}(\1)}\bbP^{\L'}_\pi(\eta_x(it)=0)
\nonumber\\
\ge 1-\frac{2dN}{(1-q)^{2L-1}}q^{1-\g}=1-2dN q^{1-\g}(1+o(1)),
\label{fabiodom}
\end{gather}
where $\L'=\{1,2,\dots, L'\}$.
Consider now the East model in $\L'$ starting from $\mathds{1}{0}\in\O_{\L'}$  and
let $\t_L$ be the first time that there is a legal ring at $x=L$ with
the corresponding coin toss equal to one. Clearly $\t_L$ has the same
law as the hitting time $\t_{\eta_L=1}$ under the measure
$\bbP^\L_{\mathds{1}{0}}$. Define the auxiliary
Markov time $\tilde \tau$ by 
\[
\tilde \tau =\inf\{s>\t_L: \eta(s) =\mathds{1}{0} \ \textrm{or}\ \ \eta_{2L}(s) = 1\}.
\] 
Writing 
\begin{align*}
  \bbP_{\mathds{1}{0}}^{\L'}(\tilde\tau-\tau_L\ge t)\le 
\bbP_{\mathds{1}{0}}^{\L'}\left(\t_L\ge q^{-\epsilon}\,T_{\rm rel}(L)\right)+ 
\bbP_{\mathds{1}{0}}^{\L'}\left(\tilde\tau-\tau_L\ge t\ ;\ \t_L\le q^{-\epsilon}\,T_{\rm rel}(L)\right)\,,\nonumber
\end{align*}
we can bound the first probability on the right hand side by $O(q^\epsilon)$  using Proposition \ref{dominare}.
We may bound the second probability by using \eqref{fabiodom} with $N=\lceil q^{-\epsilon}T_{\rm rel}(L)/t\rceil$ and $\epsilon= (1-\g-\b)/2$, 
so that
\begin{align}
  \label{fdom1}
\bbP_{\mathds{1}{0}}^{\L'}(\tilde\tau-\tau_L\ge t) = O(q^{(1-\g-\b)/2}).
\end{align}

\begin{claim} 
\label{claim6}
For any $\g< 1/2$,
\[
\bbP_{\mathds{1}{0}}^{\L'}(\eta(\tilde \tau)=\mathds{1}{0})= 1-O(q^\d)
\]  
for some $\d>0$.
\end{claim}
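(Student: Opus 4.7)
The plan is to show $\bbP_{\mathds{1}0}^{\L'}(\eta_{2L}(\tilde\tau)=1)=O(q^\delta)$, which is equivalent to the claim since $\{\eta(\tilde\tau)=\mathds{1}0\}$ and $\{\eta_{2L}(\tilde\tau)=1\}$ partition $\{\tilde\tau<\infty\}$. Setting $\sigma_\star:=\inf\{s>\tau_L:\eta_{2L}(s)=1\}$ and $t:=q^\beta T_{\rm rel}(L)$ for some $\beta\in(0,1-\gamma)$ to be optimized, a union bound gives
\[
\bbP(\eta_{2L}(\tilde\tau)=1)\;\le\;\bbP(\tilde\tau-\tau_L\ge t)+\bbP(\sigma_\star-\tau_L\le t),
\]
and the first term on the right is exactly the quantity bounded in the proof of \eqref{fdom1}, so it is $O(q^{(1-\gamma-\beta)/2})$.

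For the second term I would compare the sub-process $\eta|_{[L+1,2L]}$ with a \emph{free} East process on the same window. At time $\tau_L$ one has $\eta|_{[L+1,2L]}(\tau_L)=\mathds{1}_{[L+1,2L-1]}0$, and this sub-process interacts with $[1,L]$ only through the constraint at $L+1$, which requires $\eta_L=0$. Let $\hat\eta$ be the East process on $[L+1,2L]$ in which site $L+1$ is always unconstrained, started from $\mathds{1}_{[L+1,2L-1]}0$ at time $\tau_L$ and driven by the same Poisson rings and coin tosses as $\eta$ on $[L+1,2L]$. By translation $\hat\eta$ has the same law as the East process of length $L$ started from $\mathds{1}0$, so writing $\hat\tau:=\inf\{s\ge\tau_L:\hat\eta_{2L}(s)=1\}$, Proposition~\ref{dominare} together with Theorem~\ref{equo} yields
\[
\bbP(\hat\tau-\tau_L\le t)\;\le\;e\,t/T_{\rm hit}(L)\;=\;O(q^\beta).
\]
Since the legal rings at $L+1$ in $\L'$ form a strict subset of the (always legal) rings at $L+1$ in $\hat\eta$, I expect the stochastic domination $\sigma_\star-\tau_L\succeq_{\rm sd}\hat\tau-\tau_L$, giving $\bbP(\sigma_\star-\tau_L\le t)=O(q^\beta)$. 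Combining the two estimates,
\[
\bbP(\eta_{2L}(\tilde\tau)=1)\;=\;O(q^{(1-\gamma-\beta)/2})+O(q^\beta);
\]
choosing $\beta=(1-\gamma)/3$ gives $\delta=(1-\gamma)/3>0$, in particular $\delta>1/6$ under the standing assumption $\gamma<1/2$, which feeds into the subsequent renewal argument to deliver $T_{\rm rel}(2L)\succ T_{\rm rel}(L)$.

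The main obstacle is the stochastic-domination step $\sigma_\star-\tau_L\succeq_{\rm sd}\hat\tau-\tau_L$. East dynamics are not attractive in the component-wise sense, so the naive basic coupling of $\eta$ and $\hat\eta$ via common Poisson rings does not automatically preserve a pointwise order. I would circumvent this by tracking the leftmost \emph{propagation front} of vacancies inside $[L+1,2L]$: any cascade of legal rings that produces $\eta_{2L}(r)=1$ in the restricted process requires, at each step, the active site to have its left neighbour equal to $0$, and these same ring events drive a (no later) cascade in $\hat\eta$, since $\hat\eta$ never has to wait for $\eta_L=0$ in order to flip at $L+1$. A clean formal argument likely requires either a first-disagreement-plus-restart analysis at site $L+1$, or a Peres--Winkler-style censoring comparison adapted to this specific hitting-time question.
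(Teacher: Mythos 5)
Your opening decomposition (splitting on $\{\tilde\t-\t_L\ge t\}$ and reusing \eqref{fdom1}) is the same as the paper's, but the treatment of the remaining term has a genuine gap, in fact two. First, the assertion that $\eta_{[L+1,2L]}(\t_L)=\mathds{1}_{[L+1,2L-1]}0$ is false: $\t_L$ is the first legal ring at $L$ with coin toss \emph{one}, and before $\t_L$ there may have been legal rings at $L$ with coin toss zero, so $\eta_L$ can spend stretches of time at $0$ and vacancies can leak into $(L,2L)$ before $\t_L$. If a vacancy sits near $2L$ at time $\t_L$, the hitting of $\{\eta_{2L}=1\}$ can occur essentially instantly, so no comparison with a length-$L$ free East process started from $\10$ can hold without first controlling this leakage at the random time $\t_L$ and, more generally, the interaction between the two halves during the whole window $[\t_L,\t_L+t]$ --- and controlling what the right block sees at random times is exactly the hard part. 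The paper confronts this with the distinguished-zero machinery: it tags the zero at $L-1$, uses the Aldous--Diaconis property that, conditionally on the tagged-zero trajectory, the law to its right remains $\pi$, and pays $O(q^{1-\g})$ at each of the at most $L$ jump times to guarantee that the region to the right of the tagged zero is all ones, giving the error $O(q^{1-2\g})$; this is precisely where the hypothesis $\g<1/2$ enters. Your argument never uses $\g<1/2$ and would deliver $\d=(1-\g)/3$ for all $\g<1$, which should itself be a warning sign.

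Second, even granting the initial condition, the stochastic domination $\s_\star-\t_L\succeq\hat\t-\t_L$ is not established, and the heuristic you give for it is flawed. Blocking updates at $L+1$ when $\eta_L=1$ suppresses not only the creation of zeros at $L+1$ but also their \emph{destruction}: a zero at $L+1$ in the constrained process can be frozen in place while the free process $\hat\eta$ removes its zero at the same ring (coin toss one, legal only for $\hat\eta$), after which the constrained process can facilitate $L+2$ while $\hat\eta$ cannot. So ``the same rings drive a no-later cascade in $\hat\eta$'' fails pathwise under the basic coupling, and since East is not attractive there is no soft monotonicity to fall back on; you yourself flag this step as the main obstacle and only gesture at censoring or restart arguments without carrying one out. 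The paper needs no such domination: on the good events, between consecutive jumps of the distinguished zero the interval to its right has length at least $L$ and is in state $\10$ with a free left boundary, so each window contributes at most $c(s_{i+1}-s_i)/T_{\rm rel}(L)$ by Proposition \ref{dominare} and Theorem \ref{equo}, and summing over windows gives the $O(q^\b)$ term. As written, the key step of your proof is missing, and the comparison you propose would at best have to be proved by essentially reintroducing the distinguished-zero argument you are trying to bypass.
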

\begin{proof}[Proof of the Claim]
We write 
\begin{equation}
  \label{fdom2}
 \bbP_{\mathds{1}{0}}^{\L'}(\eta_{2L}(\tilde \tau)=1)   \le \bbP_{\mathds{1}{0}}^{\L'}(\eta_{2L}(\tilde \tau)=1;\ \tilde \tau{-}\tau_L\le t) +
\bbP_{\mathds{1}{0}}^{\L'}(\eta_{2L}(\tilde \tau)=1 ;\ \tilde \tau{-}\tau_L\ge t)
\end{equation}
 The last term in the r.h.s. of \eqref{fdom2} is $O(q^{(1-\g-\b)/2})$ because of \eqref{fdom1}.
Let us examine the first term. The strong Markov property gives
\begin{gather*}
\bbP_{\mathds{1}{0}}^{\L'}(\eta_{2L}(\tilde \tau)=1\ ;\ \tilde \tau-\tau_L\le t)
\le \bbP_{\mathds{1}{0}}^{\L'}(\tau_{\eta_{2L}=1}-\tau_L\le t)\\
\le \max_{\eta\in \O_{L-1,2L}}\bbP_\eta^{\L'}(\tau_{\eta_{2L}=1}\le t).
\end{gather*}
where $\O_{L-1,2L}=\{\eta\in \O_{\L'}:\  \ \eta_{L-1}=\eta_{2L}=0,\ \eta_x=1 \, \forall x\in [L,2L-1]\}$.

Choose $\eta\in \O_{L-1,2L}$ and declare the vacancy at $x=L-1$
to be the \emph{distinguished zero} at time zero (see
e.g. \cite{Aldous} or \cite{CMST}). At any later time $s>0$ the
position $\xi(s)$
  of the distinguished zero is determined according to the following iterative
  rule: \\
(i) If $\xi(s)>0$ then  $\xi(s')=\xi(s)$ for all times $s'>s$ which
are strictly smaller than the time $s_1$ of the first
  \emph{legal} ring  at $\xi(s)$;\\
(ii) at time $s_1$  the distinguished zero $\xi(s)$ jumps to $\xi(s)-1$; \\
(iii) if $\xi(s)=0$ then $\xi(s')=0$ for all $s'>s$.\\
Thus, with probability one, the path $\{\xi(s)\}_{s\le t}$ is
right-continuous, piecewise
constant, non increasing, with possibly $n\in \{0,1,\dots, L-1\}$
discontinuities at times $s_1<s_2 <\dots <s_n$ at which it decreases by one.  
In the sequel we will
adopt the standard notation $\xi_{s^{-}}:=\lim_{\delta\uparrow  0}\xi_{s-\d}$, and set $s_0:=0$. 
\begin{remark}
Because of the orientation of the East constraint, fixing the path $\{\xi(s)\}_{s\le t}$ has no influence whatsoever on
the Poisson rings and coin tosses to the right of the path itself. Thus the evolution to the right of the path
$\{\xi_s\}_{s\le t}$ is still an East evolution in a domain whose left
boundary jumps by one site to the left at the jumps of the path.
\end{remark}
The key property of the distinguished zero is the
following \cite{Aldous}*{Lemma 4}. Suppose that the configuration
$\eta$ to the right of
$L-1$ and to the left of $2L$ was chosen according to the reversible measure $\pi$ (instead of
being identically equal to $1$). Then, at any
given time $s>0$ and conditionally on the path
$\{\xi(s)\}_{s'\le s}$, the law of the restriction of $\eta(s)$ to the
interval $\{\xi(s)+1,\dots,2L-1\}$ is \emph{again} $\pi$.

Using the same argument leading to \eqref{fabiodom} together with the
above property, if  
\[
\O_s=\bigl\{\eta\in \O_{\L'}:\ \eta_x=1 \ \forall x\in
[\xi(s)+1,2L-1]\bigr\},
\] 
then
\[
\bbP^{\L'}_\eta\left(\exists\, i\le n: \ \eta(s_i)\notin \O_{s_i}\tc
  \{\xi_s\}_{s\le t}\right)=O(nq^{1-\g})=O(q^{1-2\g}).
\]
As a consequence 
\begin{align*}
  \bbP^{\L'}_\eta(\tau_{\eta_{2L}=1}\le t\tc \{\xi_s\}_{s\le t})
&\le 
\sum_{i=0}^n  \bbP^{\L'}_\eta\left(\tau_{\eta_{2L}=1}\in
  (s_i,s_{i+1})\tc \eta(s_i)\in \O_{s_i}\,;\,\{\xi_s\}_{s\le t}\right)
\\
&+ \bbP^{\L'}_\eta\left(\tau_{\eta_{2L}=1}\in (s_n,t]\tc \eta(s_n)\in \O_{s_n}\,;\,\{\xi_s\}_{s\le
  t}\right) 
+ O(q^{1-2\g}).
\end{align*}
Let us examine a generic term $\bbP^{\L'}_\eta\left(\tau_{\eta_{2L}=1}\in
  (s_i,s_{i+1})\tc \eta(s_i)\in \O_{s_i}\,;\,\{\xi_s\}_{s\le
    t}\right)$. Since (i) the distinguished zero does not move in the time
interval $[s_i,s_{i+1})$, (ii) the interval $\{\xi(s_i)+1,\dots, 2L\}$ has
length at least $L$ and (iii) $\eta(s_i)\in \O_{s_i}$, the above
probability is smaller than
$\bbP_{\mathds{1}{0}}^\L(\tau_{\eta_{L}=1}\le s_{i+1}-s_i)$. 
Due to Proposition \ref{dominare} together with Theorem \ref{equo}  we have
\[
\bbP_{\mathds{1}{0}}^\L(\tau_{\eta_{L}=1}\le s_{i+1}-s_i)\le
e(s_{i+1}-s_i)/T_{\rm hit}(L)\le c (s_{i+1}-s_i)/T_{\rm rel}(L) .
\]
Thus 
\begin{gather*}
\sum_{i=0}^n  \bbP^{\L'}_\eta\left(\tau_{\eta_{2L}=1}\in
  (s_i,s_{i+1})\tc \eta(s_i)\in \O_{s_i}\,;\,\{\xi_s\}_{s\le t}\right)\\
+ \bbP^{\L'}_\eta\left(\tau_{\eta_{2L}=1}\in (s_n,t]\tc \eta(s_n)\in \O_{s_i}\,;\,\{\xi_s\}_{s\le
  t}\right) \\
\le c t /T_{\rm rel}(L)=O(q^\beta)
  \end{gather*}
In conclusion 
\[
\max_{\eta\in \O_{L-1,L}}\bbP_\eta(\tau_{\eta_{2L}=1}\le t)=
O(q^\beta)+ O(q^{1-2\g})
\]
and the claim follows with $\d= \min\left(\b,1-2\g,(1-\g-\b)/2\right)$.
\end{proof}
Back to the proof of $T_{\rm rel}(2L)\succ T_{\rm rel}(L)$ we observe that, on the event
$\{\eta_{\tilde \tau}=\mathds{1}{0}\}$, the hitting time $\tau_{\eta_{2L}=1}$ is larger than
$\tau_L+\tau'$, where $\tau'$ is distributed as $\tau_{\eta_{2L}=1}$
and it is independent of $\tilde \tau$. Hence, using Claim
\ref{claim6} and  Proposition \ref{dominare},
\begin{align*}
T_{\rm hit}(2L)&=\bbE^{\L'}_{\mathds{1}{0}}(\tau_{\eta_{2L}=1})\ge 
\bbE^{\L'}_{\mathds{1}{0}}(\tau_{\eta_{2L}=1}\mathds{1}_{\eta(\tilde \tau)=\mathds{1}{0}})
\\
&\ge 
\bbE^{\L'}_{\mathds{1}{0}}(\tau_{L}\mathds{1}_{\eta(\tilde
    \tau)=\mathds{1}{0}})+
\bbE^{\L'}_{\mathds{1}{0}}(\tau'\mathds{1}_{\eta(\tilde
    \tau)=\mathds{1}{0}})
\\
&\ge T(L) \bbP^{\L'}_{\mathds{1}{0}}(\tau_{L}\ge T(L)\,;\,\eta(\tilde
    \tau)=\mathds{1}{0})                    + (1-O(q^\d)) T_{\rm hit}(2L)
\\
&\ge T(L)\left[1/4-    \bbP^{\L'}_{\mathds{1}{0}}(\eta(\tilde
    \tau)=\mathds{1}) \right] + (1-O(q^\d)) T_{\rm hit}(2L) \\
&\ge T(L)\left[1/4- O(q^\d)) \right] + (1-O(q^\d)) T_{\rm hit}(2L)
\end{align*}
which implies 
\[
T_{\rm hit}(2L)\ge c q^{-\d}T(L).
\]
Here $T(L)$ is such that $\bbP_{\10}^\L(\tau_L\ge T(L))=1/4$. 
Using that $T(L)\asymp T_{\rm hit}(L)\asymp T_{\rm rel}(L)$ we conclude
the proof.

%
%
%
%

\subsection{Proof of Theorem \ref{sorpresa}} 
We prove \eqref{salvia1} then \eqref{salvia2} is a trivial consequence. It
is enough to compare the scale $d/q$ with $1/q$ (recall that  we write
$d/q$ instead of $\lfloor d/q \rfloor$). In particular, if for any
$\d>0$ we can show that
\begin{align}
& T_{\rm rel}(1/q) \leq T_{\rm rel} (d/q) \leq C \,T_{\rm rel}(1/q) \,, \qquad \forall d \in [1,1/\d]\,, \label{pineta1}\\
& T_{\rm rel}(d/q) \leq T_{\rm rel} (1/q) \leq C' \,T_{\rm rel}(d/q) \,, \qquad \forall d \in [\d,1 ]\,.\label{pineta2}
\end{align}
for suitable constants $C,C'$ depending only $\d$, then we immediate
get \eqref{salvia1}. 
Notice that the first bound in \eqref{pineta1} and the second bound in
\eqref{pineta2} trivially follow from the monotonicity of the
relaxation time w.r.t. the interval length (see Lemma \ref{monotone}).

Let us prove that $T_{\rm rel} (d/q) \leq C \,T_{\rm rel}(1/q) $ for
all  $ d \in [1,1/\d]$. To this aim we use the block dynamics as in the proof of the upper bound in Theorem \ref{paletti}.   Given an integer length $\ell\in [1/q , 1/\d q]$ consider the block dynamics  on $[1,\ell]$ in which the left half $\L_1:=[1, \lfloor \ell/2 \rfloor ]$ goes to equilibrium with rate $1$, while the second half $\L_2:=[\lfloor \ell/2 \rfloor+1 , \ell ]$ does the same but only if there is a zero in $\D:=[\lfloor \ell/4\rfloor,
 \lfloor \ell/2 \rfloor ]$.  As proven in \cite{CMRT}[p. 480] this dynamics has spectral gap 
 $$ \l:= 1- \sqrt{ (1-q) ^{|\D|} }\sim 1- e^{- q \ell/8} \sim q \ell /8\geq  1/8\,. $$
 On the other hand, as proven in \cite{CMRT} (see also the proof of the upper bound in Theorem \ref{paletti}), we have
 $$ T_{\rm rel} (\ell) \leq \frac{2}{\l} \max\{ T_{\rm rel}  ( \L_1) ; T_{\rm rel} ( \D \cup \L_2) \}\leq16  T_{\rm rel} ( \lceil (3/4) \ell \rceil)\,. 
  $$
To conclude, one has to apply iteratively the above bound $T_{\rm rel} (\ell) \leq 16  T_{\rm rel} ( \lceil (3/4) \ell \rceil)$ starting from
$ \ell_1 :=  d/q  $ and going from $\ell_i$ to $ \ell_{i+1}:= \lceil
(3/4) \ell _i\rceil$. Clearly the number $m$ of steps necessary to get
to $\ell_m < 1/q$  is $O\left(\ln d / \ln (4/3) \right)$ and it can be
bounded from above by some constant $c(\delta)$. Hence, by the
monotonicity of the relaxation time in the length, we obtain 
$$ T_{\rm rel } (d/q) \leq 16^{m-1}  T_{\rm rel} ( \ell_m) \leq 16^{m-1} T_{\rm rel} (1/q) \,,$$
thus proving our claim \eqref{pineta1}. 
The proof of \eqref{pineta2} is analogous.
\subsection{Proof of Theorem \ref{th:dh}}
\begin{proof}[Proof of (i)]
Without loss of generality we take $d=1$. Let $t:= T_{\rm rel}(1/q^\g)$ and
let also $\epsilon \in (0,1)$ be a small constant to be fixed
later on. 
The same proof of \cite{FMRT-cmp}*{Lemma 4.2} shows that 
\begin{align}
  \label{fabio3bis}
\sup_\eta\bbP_\eta^\L\Bigl(\exists \,z:\
\eta_z(t)=0\textrm{ and }\exists\, s \leq t:\
\eta_z(s)=1\Bigr)=O(q L)=o(1).
\end{align}
Thus 
\begin{gather*}
\sup_\eta\bbP_\eta^\L\Bigl(\exists \,z,z':\ |z-z'|\le \epsilon/q^\g,\ 
\eta_z(t)=\eta_{z'}(t)=0\Bigr)\\=\sup_\eta \bbP_\eta^\L\Bigl(\exists \,z,z':\ |z-z'|\le \epsilon/q^\g,\ 
\eta_z(s)=\eta_{z'}(s)=0\ \forall s\le t\Bigr)+o(1).  
\end{gather*}
Moreover 
\begin{align}
\label{paul1}
\sup_\eta\bbP_\eta^\L&\Bigl(\exists \,z,z':\ |z-z'|\leq \epsilon/q^\g,\ 
\eta_z(s)=\eta_{z'}(s)=0\ \forall s\leq t\Bigr)
\nonumber\\&\le\sum_{\substack{z<z'\\ |z-z'|\le \epsilon/q^\g}}
\sup_\eta\bbP_\eta^\L\Bigl(\eta_{z'}(s)=0\ \forall s\leq t \tc
\eta_{z}(s)=0 \ \forall s\leq t\Bigr)\nonumber \\
&\le \sum_{\substack{z<z'\\ |z-z'|\leq \epsilon/q^\g}}\bbP_{\mathds{1}{0}}^{\L_{z,z'}}(\tau_{\{\sigma_{z'}=1\}}>t)
 \end{align}
where $\L_{z,z'}=[z+1,z']$. Due to Proposition  \ref{dominare}
\[
\bbP_{\mathds{1}{0}}^{\L_{z,z'}}(\tau_{\{\sigma_{z'}=1\}}>t)\le 
e^{-c t/T_{\rm hit}(z'-z)}
\]
for some constant $c$ independent of $z,z'$. If we now combine Theorem 
\ref{equo}, Lemma \ref{monotone} and \eqref{fabio1} we get 
\[
\min_{z'-z\le \epsilon/q^\g}t/T_{\rm hit}(z'-z) \ge c\, T_{\rm rel}(1/q^\g)/T_{\rm
  rel}(\epsilon/q^\g)\ge 1/q^\d
\]
for some $\d>0$ and for all $\epsilon$ small enough. Thus the r.h.s. of \eqref{paul1} is $o(1)$.
\end{proof}

\begin{proof}[Proof of (ii)]
Let
$t:=T_{\rm rel}(\epsilon/q^\g)$ and fix $\eta$ such that: (i)
$\eta_L=0$ and (ii) $L-z\ge 1/q^\g$ where $z:=\max \bigl\{y\in [1,L-1]:\ \eta_y=0\bigr\}$ if the set is non-empty and $z:=
0$ otherwise.  

If $z=0$ then $\eta =\mathds{1}{0}$ and
\begin{gather*}
\bbP_\eta^\L\bigl(\eta_L(t)=0\bigr)\ge\bbP^\L_{\mathds{1}0}(\tau_{\eta_L=1}>t)\\
\ge 1 - et/T_{\rm hit}(L)= 1-o(1)
\end{gather*}
for $\epsilon$ small enough. Above we used Proposition \ref{dominare}
and Theorem \ref{bubu} to bound from
above $t/T_{\rm hit}(L)$.

Assume now $z\neq 0$ and let $\L':=[z+1, L]$.  Let $A_*\subset \O_{\L'}$
be the set given in Definition \ref{def:Astar} with $\L$
replaced by
$\L'$ and $L$ replaced by the cardinality $L-z$ of $\L'$. With a small abuse of notation, from now on  we denote by $A_*$ the
subset of $\O_\L$ given by $\{\s \in\O_\L: \ \s_{\L'}\in A_*\}$. To $A_*$ we can associate two
inner boundaries, $\partial^\L A_*$ and $\partial^{\L'} A_*$, as follows:
\begin{align}
  \label{eq:2}
\partial^\L A_* &= \left\{\s \in A_*: \exists\ x \in \L' \text{ with } c^\L_x
    (\s)=1 \text{ and } \s^x \not \in A_*\right\}\\
\partial^{\L'} A_* &= \left\{\s\in A_*: \exists\ x \in \L' \text{ with } c^{\L'}_x
    (\s)=1 \text{ and } \s^x \not \in A_*\right\}
\end{align}
Clearly $\partial^{\L} A_* \subset \partial^{\L'} A_*$ because
$c_x^\L\le c_x^{\L'}$. Moreover (see Remark \ref{boundaryA}) $\eta\in A_*$ since $\h_x = 1$ for $x
\in [z+1,L-1]$. Thus, if  $\eta_L(t)=1$ then necessarily $\eta(s)\in \partial^\L
A_*$ at some intermediate time $s\le t$. In conclusion
\[
\bbP_\eta^\L\bigl(\eta_L(t)=0\bigr)\ge 1-\bbP_\eta^\L\bigl(\exists
s<t:\ \eta(s)\in \partial^\L A_*\bigr)
\]
We first bound from above $\bbP_\eta^\L\bigl(\eta(s)\in \partial^\L A_*\bigr)$ using the following observation. If the
restriction $\eta_{\L'}$ to $\L'$ was distributed according to
the stationary measure $\pi$ rather then being identically equal to $\mathds{1}{0}$, then this property would be preserved at
any later time. To prove it it is enough to observe that between any two updates of the site $z$ the dynamics in $\L'$ is reversible w.r.t. $\pi$ irrespectively of the actual value of the spin at $z$ and that the updates at $z$ do not depend on the configuration in $\L'$. Therefore
\begin{align*}
\bbP_\eta^\L\bigl(\eta(s)\in \partial^\L A_*\bigr) &\le 
\frac{1}{\pi(\eta_{\L'})}\sum_{\sigma: \ \s_{\L\setminus \L'}=\eta_{\L\setminus\L'}}\pi(\s_{\L'})
\bbP_\s^\L\bigl(\sigma(s)\in \partial^\L A_*\bigr)\\
&\le \frac{e}{q} \pi\bigl(\partial^\L A_*\bigr).
 \end{align*}
Corollary \ref{boundarymeasure} now implies that  
\[
\pi\bigl(\partial^\L A_*\bigr)\le \pi\bigl(\partial^{\L'} A_*\bigr)\le
\frac{q^n 2^{\binom{n}{2} }}{n!}  q^{-(1+\a)}
\]
where $n=\lceil \log_2 L\rceil$. Thus
\[
\bbP_\eta^\L\bigl(\eta(s)\in \partial^\L A_*\bigr)\le e \frac{q^n
  2^{\binom{n}{2} }}{n!}  q^{-(2+\a)}.
\]
In conclusion, a simple union bound over all possible rings in $\L$
within time $t$ (see e.g. \cite{East-Rassegna}*{after (5.12)}) gives
\begin{gather*}
  \bbP_\eta^\L\bigl(\exists s<t:\ \eta(s)\in \partial^\L A_*\bigr) \le
  e L t\frac{q^n 2^{\binom{n}{2} }}{n!}  q^{-(2+\a)}+e^{-L t}=o(1)
\end{gather*}
for all $\epsilon$ small enough. The last identity follows from Theorem \ref{bubu} (ii) and the fact that $t=T_{\rm rel}(\e / q^\g)$.
\end{proof}

\appendix 
\section{Capacity methods}
\label{app:capacity}

In this section we summarize some known results on potential theory
for reversible Markov process, which can be found for example in
\cites{G09,Bovier,BovierNew,ProbOnTreesNets}, in the context of the East
process. In Appendix \ref{capacitino}.1 we give a more detailed motivation for the construction
used to prove the lower bound of Theorem \ref{paletti} and we provide an alternative
proof of the upper bound in Appendix \ref{alt-upper}.2.

We recall the definition of the \emph{electrical
  network} associated to the interval  $\L=[1,L]$.
We consider the undirected graph $\cG_\L$  with vertex set $\O_\L:= \{0,1\}^\L$
and with edges given by $\{\s, \s^x\}$ with $\s \in \O_\L$, $x \in \L$ and
$c^\L_x(\s)=1$. 
That is, there is an edge between two states if and only if there exists a transition between them under the East dynamics.
We denote the edge set by $E_\L$.
Since the East process is reversible we may associate with each edge
$\{\s, \xi\} \in E_\L$ a conductance $c(\s,\xi) = c(\xi,\s)$, generating a weighted graph (or network) in the usual way,
\begin{align*}
  c(\s,\xi):=\p(\s) K(\s,\xi)=\p(\s)c^\L_x(\s)[ p(1-\s_x)+q\s_x ] \,, \quad \text{if }   \xi=\s^x\,, \; x \in \L.
\end{align*}
Equivalently, the resistance is defined as the reciprocal of the conductance $r(\s,\xi)= 1/c(\s,\xi)$.
Note that if $(\sigma,\xi)\notin E_\L$ then the conductance and
resistance are defined as zero and $+\infty$ respectively.
The definition is well posed since $c(\s,\xi)= c(\xi,\s) \geq 0$.

With the above notation the generator of the East process \eqref{thegenerator} can be expressed as
\begin{align*}
  \cL_\L f(\sigma)=\sum_{x\in \L} \frac{c(\sigma,\sigma^x)}{\p(\s)}\left[f(\sigma^x)-f(\sigma)\right]\,.
\end{align*}

Given $B \subset \O_\L$ we  denote by $\t_B$ the hitting time of the set $B$ for the East process $\eta(t)$:
\begin{align*}
\t_B=\inf\{ t > 0 \sst \eta(t) \in B\}\,,
\end{align*}
and denote by $\t_B^+$ the first return time to $B$:
\begin{align*}
\t_B^+=\inf\{ t > 0 \sst \eta(t) \in B,\ \h(s)\neq\h(0)\ \  \textrm{for some}\ \ 0<s<t\}\,.
\end{align*}
We denote by $C_{A,B}$ the \emph{capacity} between two disjoint subsets $A$, $B$ of $\O_\L$ given by (see for example \cite{Beltran} or (3.6) in \cite{BovierNew}):
\begin{align}
\label{eq:cap-def}
 C_{A,B} = \sum_{a\in A}\pi(a)\cR(a) \bbP^\L_a\left(\tau_A^+ > \tau_B\right)\,,
\end{align}
where $\cR(a) = \sum_{\s\neq a}K(a,\s)$ is the holding rate of state $a$.
With slight abuse of notation we write $C_{a,B}$ if $a\not\in B$ is a singleton.
The mean hitting time of $B$ for the East process starting from $a\in \O_\L$ can be expressed in the following way (see for example formula (3.22) in \cite{BovierNew}):
\begin{align}
\label{eq:Thit-Cap}
   \bbE^\L_{a}[\t_B]= \frac{1}{C_{a,B}}\sum_{\sigma \notin B} \p(\sigma)  \bbP^\L_\sigma(\tau_a < \tau_B)\,.
\end{align}
The capacity can also be characterized in terms of variational principles, which are useful for making estimates of $C_{A,B}$.
The following variation principle, useful for finding upper bounds on the capacity, is known as the \emph{Dirichlet principle} (see (3.12) in \cite{G09} or Theorem 3.2 in \cite{Bovier}):
\begin{align}
  \label{eq:dir-princ}
 C_{A,B}&= \inf \{ \cD_\L(f) \sst f: \O_\L\to \bbR, \; f|_A= 1\,,\;f|_{B}=0\}\ ,
\end{align}
where the Dirichlet form $\cD_\L(f)$ is given in \eqref{eq:DirForm}.

For an alternative proof of the upper bound in Theorem \ref{paletti} using a capacity argument we introduce the following definitions and results which can be found in \cite{Levin2008} and \cite{G09}.
We consider the same capacity network described above,
$\cG_\L=(\O_\L,E_\L)$, only now to each edge $\{\s,\h\}\in E_\L$ we
associate two oriented edges $(\s,\h)$ and $(\h,\s)$ (the set of oriented
edges will be written $\tilde E_\L$). For any real valued function $\theta$ on
\emph{oriented edges} we define the divergence at a point $\s\in\O_\L$ by
\begin{align*}
  \div \theta (\s) = \sum_{\h \,:\, \h \sim \s}\theta(\s,\h)\,,
\end{align*}
where $\h \sim \s$ if and only if there exists an edge between them in $E_\L$.
\begin{definition}[Flow from $A$ to $B$]
A flow from the set $A\subset \Omega_\L$ to a disjoint set $B\subset \Omega_\L$, is a real
valued function $\theta$ on $\tilde E_\L$ that is  antisymmetric (\ie
$\theta(\s,\h)=-\theta(\h,\s)$) and satisfies,
\begin{align*}
  \div \theta (\s) &= 0 \quad \textrm{if } \s\notin A\cup B\,,\\
  \div \theta (\s) &\geq 0 \quad \textrm{if } \s\in A\,,\\
  \div \theta (\s) &\leq 0 \quad \textrm{if } \s\in B\,.
\end{align*}
The strength of the flow is defined as  $|\theta| = \sum_{a\in A}\div \theta(a)$.  A flow of strength $1$ is called a \emph{unit flow}. 
\end{definition}
\begin{definition}[The energy of a flow] The energy associated with a
  flow $\theta$ is given by
   \begin{align}
     \label{eq:flowEnergy}
    \cE(\theta) = \sum_{e\in E_\L} r(e) \theta(e)^2\,.
  \end{align}
\end{definition}
\begin{remark}
  The sum in $\cE(\theta)$ is over unoriented edges, so each edge $\{\s,\h\}$ is
only considered once in the definition of energy. Although $\theta$ is defined on oriented
edges, it is antisymmetric and hence $\theta(e)^2$ with $e\in E_\L$ is unambiguous.
\end{remark}
With the above notation \emph{Thomson's Principle} holds, which gives a variational principle for the resistance, 
useful for finding lower bounds on the capacity:
\begin{align}
  \label{eq:thom}
  R(A,B):=\frac{1}{C_{A,B}} = \inf\{ \cE(\theta)\,:\, \theta \textrm{ a unit flow from } A\ \rm{to}\ B\}, 
\end{align}
and, for any finite connected graph, the above infimum is attained by a unique minimiser which we call the \emph{equilibrium flow}.
\subsection*{A.1 Motivation for the proof of the lower bound in Theorem \ref{paletti}}
\label{capacitino}
We now use the above tools to justify our choice of the test function
$\1_{A_*}$ in Section \ref{Astar}.
It turns out that on the mesoscopic scale, $L = d/q^\gamma$, the hitting time $T_\text{hit}(L)$ (see Equation \eqref{raggiungo}) is equivalent, up to constants, to $q$ times $R({\10,B})$ where $B = \{\eta_L = 1 \}$.
So to estimate $T_{\text{hit}}(L)$ it is sufficient to find bounds on
the capacity $C_{\10,B}$.
This is the content of the following lemma.
\begin{lemma}
\label{cicerchie}
Suppose that $L=1/q^{\gamma}$ with $ \gamma \in (0,1]$ and $q<1/2$.
Then there a universal constants $c>0$ such that
\begin{equation}
\frac{qc}{C_{\mathds{1}0,B} } \leq T_{\text{hit}}(L)\leq
\frac{q}{C_{\mathds{1}0,B} }\,, \quad \textrm{where} \quad B = \{\h_L
= 1\}\,.
\end{equation}
\end{lemma}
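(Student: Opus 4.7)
The plan is to plug $a=\mathds{1}0$ and $B=\{\eta_L=1\}$ directly into the capacity--hitting time identity \eqref{eq:Thit-Cap}, which gives
\[
T_{\rm hit}(L) \;=\; \frac{1}{C_{\mathds{1}0,B}}\sum_{\sigma\notin B}\pi(\sigma)\,h(\sigma),\qquad h(\sigma):=\bbP^\L_\sigma(\tau_{\mathds{1}0}<\tau_B)\in[0,1],
\]
where the convention $h(\mathds{1}0)=1$ is in force (i.e.\ one reads $\tau_a=\inf\{t\ge 0:\eta(t)=a\}$, as is implicit in \eqref{eq:Thit-Cap}). Both claimed inequalities will then follow simply by bounding the sum in the numerator above and below.

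For the upper bound I would just estimate $h(\sigma)\le 1$ and use the product form of $\pi$ to compute $\sum_{\sigma\notin B}\pi(\sigma)=\pi(\eta_L=0)=q$, which immediately yields $T_{\rm hit}(L)\le q/C_{\mathds{1}0,B}$.

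For the lower bound the idea is to retain only the single contribution from $\sigma=\mathds{1}0$:
\[
\sum_{\sigma\notin B}\pi(\sigma)\,h(\sigma)\;\ge\;\pi(\mathds{1}0)\cdot 1\;=\;q(1-q)^{L-1}.
\]
Since $\gamma\le 1$ gives $L=1/q^{\gamma}\le 1/q$, one has $(1-q)^{L-1}\ge (1-q)^{1/q}$. A short calculus check, based on the expansion $\log(1-q)/q=-1-q/2-q^2/3-\cdots$, shows that $q\mapsto (1-q)^{1/q}$ is strictly decreasing on $(0,1)$; hence for $q<1/2$ it is bounded below by its value at $q=1/2$, namely $1/4$. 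Therefore $\pi(\mathds{1}0)\ge q/4$, which produces the lower bound $T_{\rm hit}(L)\ge c\,q/C_{\mathds{1}0,B}$ with $c=1/4$.

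There is no real obstacle here: the statement is essentially a one-line consequence of \eqref{eq:Thit-Cap} combined with the product structure of $\pi$. The only point requiring a moment of care is the convention on the hitting time appearing in \eqref{eq:Thit-Cap}, which fixes $h(\mathds{1}0)=1$ and makes the $\sigma=\mathds{1}0$ contribution usable in the lower bound.
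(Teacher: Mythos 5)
Your proposal is correct and follows essentially the same route as the paper: both plug $a=\mathds{1}0$, $B=\{\eta_L=1\}$ into \eqref{eq:Thit-Cap}, bound the sum above by $\pi(B^c)=q$ and below by the single term $\pi(\mathds{1}0)=q(1-q)^{L-1}\ge c\,q$ using $L\le 1/q$ and $q<1/2$. The only difference is the cosmetic choice of the explicit constant (your $c=1/4$ versus the paper's expression), so nothing further is needed.
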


\begin{proof} 
If $B = \{\eta_L = 1 \}$ then $\bbE_{\1 0}[\tau_B]=T_\text{hit}(L) $.
Since $q < 1/2$ there exists a positive $c := (1/2)^{1/2^\gamma} \leq (1-q)^{1/q^\gamma}$.
We observe that
\begin{align*}
c q \leq q(1-q)^{L-1} = \p(\mathds{1}0) \leq  \sum _{\s \not \in B} \pi(\s) \bbP^\L_\s (\t_{\mathds{1}0}<\t_B) \leq  \pi(B^c)=q\ ,
\end{align*}
the result follows from (\ref{eq:Thit-Cap}).
\end{proof}

We can use Lemma \ref{cicerchie} and the Dirichlet principle \eqref{eq:dir-princ} to get lower bounds on $T_{\text{hit}}(L)$. 
Indeed, for each $f: \O_\L\to
\bbR$ such that  $ f(\mathds{1}0 )= 1$ and $f|_{B}=0$ we have
\begin{equation}\label{fluido}
T_{\text{hit}}(L) \geq \frac{cq}{C_{\mathds{1}0,B} } \geq
\frac{cq}{ \cD_\L(f) }\,.
\end{equation}
It is known that the function $f$ that realizes the minimum in \eqref{eq:dir-princ} is
\begin{equation}\label{pietra}
 f(\eta):=\bbP_\eta^\L ( \tau_{\mathds{1}0}< \tau _B) \, ,
\end{equation}
and so we shall choose a test function for which it is possible to
bound from above the Dirichlet form, and is in someway `close' to $\bbP_\eta^\L( \tau_{\mathds{1}0 }< \tau _B)$.
This motivates the choice of the deterministic dynamics in Section \ref{lwb}.

\subsection*{A.2 An alternative proof of the upper bound in Theorem \ref{paletti}}
\label{alt-upper}
We now give an alternative proof of the upper bound in Theorem \ref{paletti}
using a recursive argument, with the same inductive scheme as used for
the block dynamics proof (see Section \ref{sec:upper-bound}), applied
to flows on the electrical network.

\begin{figure}[t]
  \centering
  \mbox{\hbox{\includegraphics[width=0.75\textwidth]{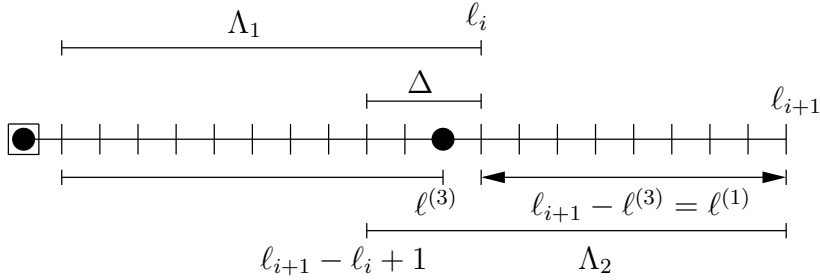}}}
  \caption{\label{fig:lattice} A sketch of the lattice division for one step in the 
    inductive scheme used in the proof of
    Proposition \ref{prop:res-it}. We consider a sequence of
    increasing lengths  $\{ \ell_i\}_{i=1}^r$ defined
    in \eqref{fiesole}. For each $i < r$, $[1,\ell_{i+1}]$ is divided into
    two overlapping intervals $\L_1$ and $\L_2$ of size $\ell_i$, as shown above, with
    intersection $\D$ and $N_{i+1} = |\D|$. The sites in $\D$ are
    parameterised by $\{\ell^{(j)}\}_{j=1}^{N_{i+1}}$.}
\end{figure}

Firstly we recall some notation from the inductive scheme used in
Section \ref{sec:upper-bound}. We consider a sequence $\{ \ell_i\}_{i=1}^r$ of
increasing lengths satisfying \eqref{fiesole}, and let $N_{i+1} =
\lceil \ell_{i} /r \rceil$ (see Fig. (\ref{fig:lattice})).

\begin{proposition}
  \label{prop:res-it}
  Let $L'=\ell_r$, consider the electrical network associated
  with $[1,L']$ and let $R_i$ be the resistance $R(\1,B_{\ell_i})$ where
  $B_{\ell_i}:=\{\eta : \eta_{\ell_i}=0,\ \eta_x = 1\ \textrm{for}\ x
  > \ell_i\}$.
  Then
  \begin{align*}
    R_{i+1} \leq 4R_{i} + \frac{6}{qN_{i+1}}R_{i}\,, \quad \quad \forall\ i < r\,.
  \end{align*}
\end{proposition}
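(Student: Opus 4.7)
The plan is to use Thomson's principle \eqref{eq:thom}: in order to prove $R_{i+1}\leq 4R_i+\frac{6}{qN_{i+1}}R_i$ it is enough to exhibit a unit flow $\theta$ from $\mathbf{1}$ to $B_{\ell_{i+1}}$ whose energy $\cE(\theta)$ is bounded by the right-hand side. The construction mimics, at the level of flows, the overlapping two-block decomposition used in the spectral-gap proof of Section \ref{sec:upper-bound}. Set $\Lambda_1=[1,\ell_i]$, $\Lambda_2=[\ell_{i+1}-\ell_i+1,\ell_{i+1}]$, $\Delta=\Lambda_1\cap\Lambda_2$ and label the $N_{i+1}$ sites of $\Delta$ as $\ell^{(j)}:=\ell_{i+1}-\ell_i+j$, $j=1,\dots,N_{i+1}$ (see Figure \ref{fig:lattice}).

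For each $j$, I would build a unit sub-flow $\theta_j$ from $\mathbf{1}$ to $B_{\ell_{i+1}}$ by concatenating two pieces supported on disjoint sets of sites. First, a unit flow $\phi_j$ from $\mathbf{1}$ to $V_j:=\{\eta:\eta_{\ell^{(j)}}=0,\ \eta_x=1\ \forall x>\ell^{(j)}\}=B_{\ell^{(j)}}$ which is supported on edges that flip sites in $\Lambda_1$. Since $\ell^{(j)}\leq \ell_i$ one gets $\cE(\phi_j)\leq R_i$ by the induction hypothesis combined with a target-monotonicity step $R(\mathbf{1},B_y)\leq R(\mathbf{1},B_{\ell_i})$ for $y\leq \ell_i$ (which can be obtained either by an explicit embedding of flows or by restriction to the sub-network $[1,\ell^{(j)}]$ and a scaling argument for the conductances). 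Second, a unit flow $\psi_j$ from $V_j$ to $B_{\ell_{i+1}}$ supported on edges flipping sites in $(\ell^{(j)},\ell_{i+1}]$; on this right block the East dynamics with $\ell^{(j)}$ acting as a frozen zero is isomorphic to East on the chain of length $\ell_{i+1}-\ell^{(j)}\leq \ell_i-1<\ell_i$, so $\psi_j$ can be taken with energy at most $R_i$ (again by the induction hypothesis). Because the two pieces act on disjoint sets of flip sites, their sum $\theta_j:=\phi_j+\psi_j$ is a genuine antisymmetric unit flow.

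Finally one averages: $\theta:=\tfrac{1}{N_{i+1}}\sum_{j=1}^{N_{i+1}}\theta_j$ is still a unit flow from $\mathbf{1}$ to $B_{\ell_{i+1}}$, and edgewise Cauchy--Schwarz $(\sum_j a_j)^2\leq N_{i+1}\sum_j a_j^2$ together with $(a+b)^2\leq 2(a^2+b^2)$ applied to separate the $\phi$ and $\psi$ contributions gives
\[
\cE(\theta)\leq \frac{2}{N_{i+1}}\Bigl(\sum_j \cE(\phi_j)+\sum_j \cE(\psi_j)\Bigr)\leq 4R_i.
\]
The extra term $\frac{6}{qN_{i+1}}R_i$ I expect to arise from a boundary correction needed to make the concatenation $\phi_j\ast\psi_j$ rigorous: the hitting distribution of $\phi_j$ on $V_j$ does not in general sit on the canonical singleton $\xi^{(j)}$ with a lone vacancy at $\ell^{(j)}$, so matching divergences forces an auxiliary flow whose energetic cost scales with $1/\pi\bigl(\text{some vacancy in }\Delta\bigr)\asymp 1/(qN_{i+1})$.

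The main obstacle is precisely this matching step: ensuring $\phi_j$ and $\psi_j$ glue into a valid antisymmetric flow, and tracking the precise constants $4$ and $6$. In the regime $qN_{i+1}\gg 1$ the overlap typically carries a vacancy at equilibrium and the bound collapses to $\cE(\theta)\lesssim 4R_i$, while for $qN_{i+1}\ll 1$ the cost of creating a vacancy somewhere in $\Delta$ dominates and produces the announced $6R_i/(qN_{i+1})$ correction. A secondary technical point is the target monotonicity $R(\mathbf{1},B_y)\leq R_i$ for $y\leq \ell_i$; this can be sidestepped by working throughout on the restricted sub-network $[1,\ell^{(j)}]$ and absorbing the resulting $(1-q)^{-(\ell_r-\ell^{(j)})}$ conductance factor into the constants.
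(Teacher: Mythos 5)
Your skeleton is the paper's: the two overlapping blocks $\L_1,\L_2$, one sub-flow per overlap site $\ell^{(j)}$, and an average over the $N_{i+1}$ sites controlled by Cauchy--Schwarz. But the step you defer is precisely the heart of the proof, and the accounting around it is not correct. First, $\phi_j+\psi_j$ is not a flow merely because the two pieces flip disjoint sets of sites: $\phi_j$ deposits its unit of divergence spread over the whole set $B_{\ell^{(j)}}$ according to its hitting distribution, whereas a shifted copy of an equilibrium flow on $[\ell^{(j)}+1,\ell_{i+1}]$ extracts divergence only at configurations that are all ones to the left of $\ell^{(j)}$; the divergences do not cancel configuration by configuration, so antisymmetry of the sum is irrelevant. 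The paper repairs this by inserting a third, explicitly constructed flow $\widehat\phi_j$, supported on edges inside $B_{\ell^{(j)}}$ and obtained by reversing $\phi_j$ under the projection that forgets the vacancy at $\ell^{(j)}$ (cf. \eqref{eq:phihat}); this funnels all incoming flow onto the single configuration $\10_j\1$ with a lone vacancy at $\ell^{(j)}$, after which the shifted flow $\widetilde\phi_j$ can be attached, and the divergence cancellations are checked by hand in Claim~\ref{claim:unitFlow}. The detour through $\10_j\1$ is also what makes the corrections for different $j$ have pairwise disjoint supports (the sets $B_{\ell^{(j)}}$, resp. $C_{\ell^{(j)}}$, are disjoint in $j$), and it is this disjointness, via $\cE(\widehat\Phi)=N_{i+1}^{-2}\sum_j\cE(\widehat\phi_j)$, that produces the factor $1/N_{i+1}$.

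Second, your estimate $\cE(\psi_j)\le R_i$ is false in the network on $[1,\ell_r]$: every configuration in the support of the shifted flow carries the extra vacancy at $\ell^{(j)}$, so each edge resistance is larger by a factor $p/q$ than that of the corresponding edge of $\phi_{N-j}$, giving $\cE(\widetilde\phi_j)\le p R_i/q$, and the same holds for the gluing flow, $\cE(\widehat\phi_j)\le p R_i/q$. Hence the correct bookkeeping is not ``$4R_i$ plus a boundary correction'': only the $\phi$-part is cheap, $4\,\cE(\Phi)\le 4R_i$, while both the gluing and the shifted parts cost $O(R_i/q)$ per $j$ and become $O\bigl(R_i/(qN_{i+1})\bigr)$ only after the disjoint-support averaging, yielding $4R_i+\tfrac{4}{qN_{i+1}}R_i+\tfrac{2}{qN_{i+1}}R_i$. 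In particular there is no dichotomy in $qN_{i+1}$ as you suggest, and the term $\tfrac{6}{qN_{i+1}}R_i$ does not arise as $1/\pi(\text{a vacancy in }\D)$; it comes from the $p/q$ resistance comparison combined with the $1/N_{i+1}$ gain. So the proposal identifies the right decomposition but leaves unproven the essential construction (the gluing flow and the divergence verification) and rests on an energy estimate that is off by a factor $1/q$.
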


The proof of the upper bound now follows as a corollary from this
proposition together with the results on the inductive scheme
contained in Section \ref{sec:upper-bound}. 
We fix $L\leq d/q$ and 
choose $r = r_0$ as defined at the end of Section
\ref{sec:upper-bound}. Since (recall $\ell_{r_0}\le 2^{r_0+1}$)
$$q N_{i+1} =
q\lceil \ell_i / r_0 \rceil \leq q\ell_i + 1\leq  q \ell_{r_0}+1\leq  8d/c_0+1$$
we have $4R_i \leq (c(d) - 6) R_i / (q
N_{i+1})$ for some positive constant $c(d)$ depending only on $d$. 
The proposition above therefore implies that
\begin{align*}
  R_{i+1} \leq  \frac{c(d)}{qN_{i+1}}R_{i}\,, \quad \quad \forall i < r_0 \,,
\end{align*}
hence
\begin{align*}
  R_r \leq R_1 \frac{c(d)^rr^r}{q^r\prod_{i=1}^{r-1}\ell_i}\,, \ \quad
  \textrm{where } r = r_0\,.
\end{align*}
Comparing with \eqref{ringhiera1} and using the arguments at the end
of Section \ref{sec:upper-bound}  this gives rise
to
\begin{align}
\label{eq:7}
 R(\1,B_{\ell_{r_0}}) = R_{r_0} \leq q^{-c'(d)} \frac{n!}{q^n 2^{\binom{n}{2}}}\,.
\end{align}
Recall that $\ell_{r_0} \geq L$. We observe from \eqref{eq:cap-def}  that
\begin{align}
  \label{eq:6}
  R(\1,B_L) \leq R(\1,B_{\ell_{r_0}})\,,
\end{align}
since starting from $\1$ the East dynamics must cross $B_L$ to reach
$B_{\ell_{r_0}}$.
The same proof as for Lemma \ref{cicerchie} shows that
$\bbE_{\1}^\L[\t_{\h_L=0}] \leq R(\1,B_L)$, and by Proposition
\ref{dominare} we have $T_{\rm hit}(L) \leq
\bbE_{\1}^\L[\t_{\h_L=0}]$, so the upper
bound on the relaxation time in Theorem \ref{paletti} follows as a
consequence of the equivalence of the characteristic times in Theorem
\ref{equo} together with \eqref{eq:7} and \eqref{eq:6}.

\begin{figure}[t]
  \centering
  \mbox{\subfloat{\vtop{%
  \vskip0pt
  \hbox{\includegraphics[width=0.48\textwidth]{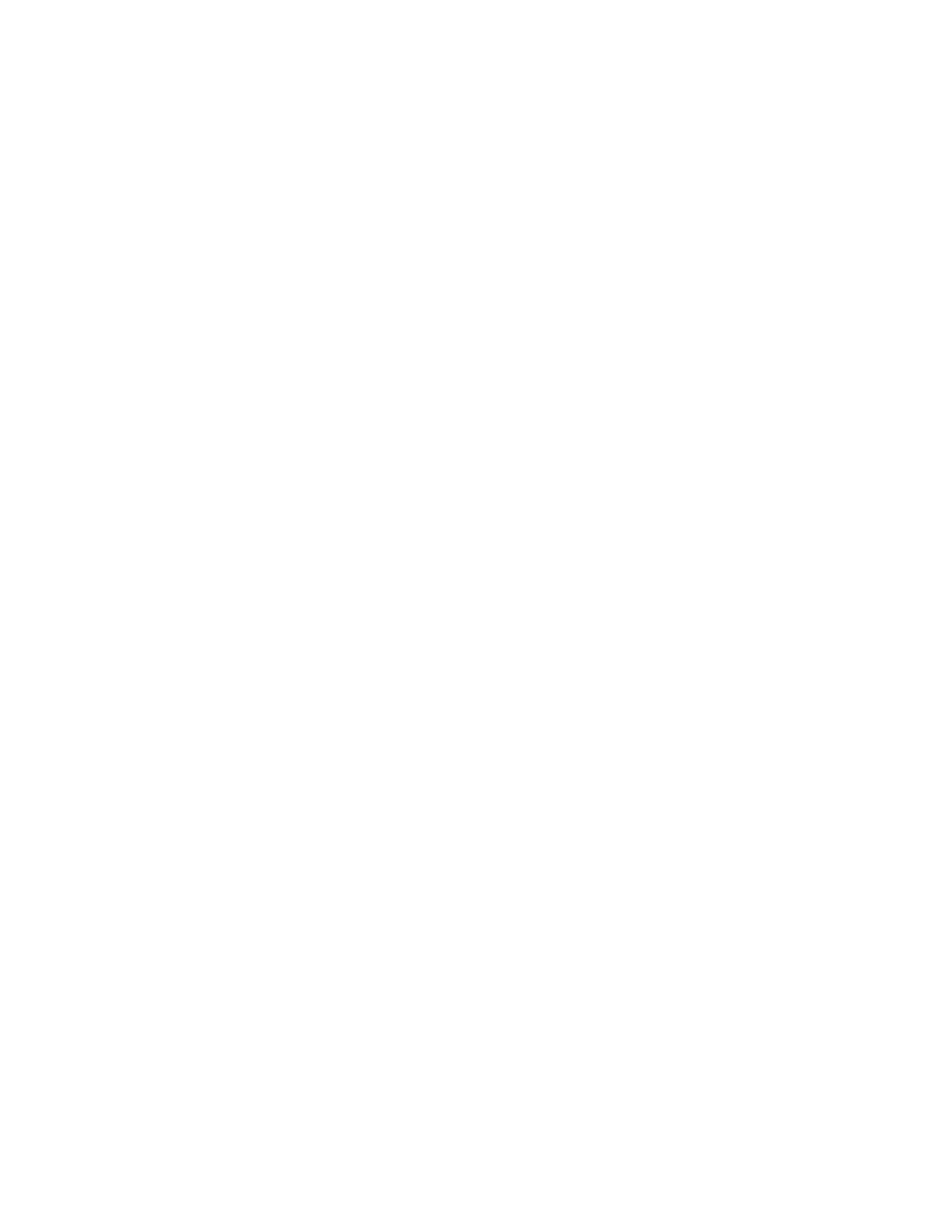}}}}\quad \quad 
    \subfloat{\vtop{%
  \vskip0pt
  \hbox{\includegraphics[width=0.48\textwidth]{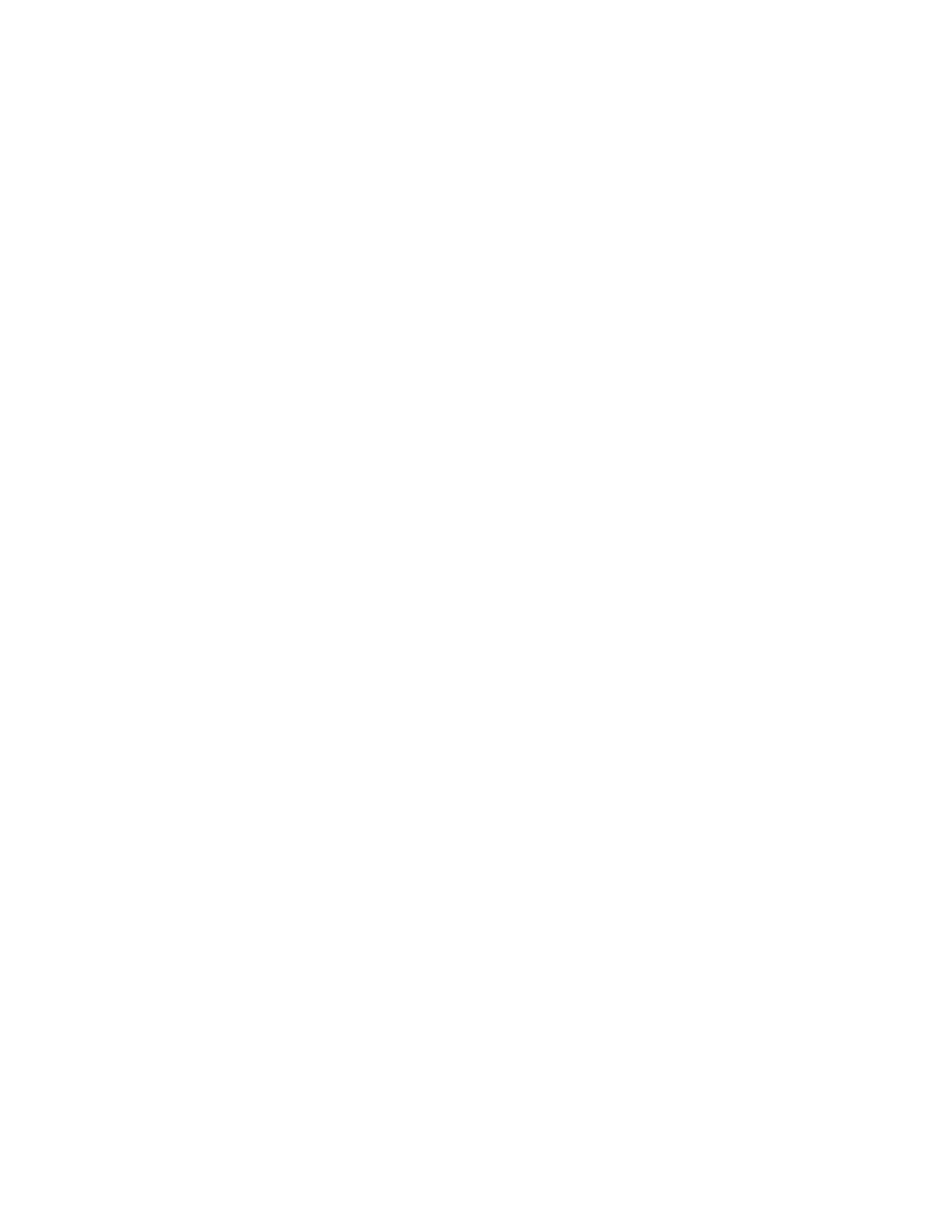}}}}}
  \caption{\label{fig:cube} The 
    construction of the flow $\phi_1+\widehat\phi_1+\widetilde\phi_1$ in the first step ($i = 1$) of the inductive scheme used in the proof of
    Prop. \ref{prop:res-it}. Note $\ell_1 = 3$, $\ell_2 = 5$, $\D =
    \{3\}$  and $N=1$, for all $r>2$ (see \eqref{fiesole}). 
    $\phi_1$, $\widehat\phi_1$ and $\widetilde\phi_1$ have disjoint
    support.
     Arrows show the direction of the flow and the labels indicate
     which flow is non-zero on each edge.
     The flow $\widehat\phi_1$ is obtained from $\phi_1$ by a projection
     and inversion, the ticks  indicate edges with equal flow
     strength. 
     $\widetilde\phi_1$ is obtained via association with $\phi_0$
     under a shift. The left and right images join on the common
     vertex shown by the `$\circ$'.
}
\end{figure}

\begin{proof}[Proof of Proposition \ref{prop:res-it}]
Fix $i < r$, similarly to Section \ref{sec:upper-bound} we consider the interval
$\L^{(i+1)}=[1,\ell_{i+1}]$ dived into two overlapping intervals
\begin{align*}
  \L_1 := [1,\ell_i] \quad \L_2 := [\ell_{i+1} - \ell_i + 1, \ell_{i+1}].
\end{align*}
The intersection $\D := \L_1 \cap
\L_2 = [\ell_{i+1} - \ell_i +1, \ell_{i}]$ contains $N_{i+1}\geq 1$
sites by \eqref{fiesole} (see Fig. \ref{fig:lattice}). 
To reduce notation throughout the proof we fix $N:=N_{i+1}$.

Let $\ell^{(j)} = (\ell_i - N + j) \in \D $ for $j\in\{1,\ldots,N\}$ and
define $\ell^{(0)} := \ell_i - N$.
For $0\leq j \leq N$ let $\phi_j$ be the equilibrium unit flow from
$\1$ to $B_{\ell^{(j)}}$.
By Thomson's Principle and the same argument leading to \eqref{eq:6} we have
\begin{align*}
  \max_{j\in\{0,1,\ldots,N\}} \cE(\phi_j) = \cE(\phi_N) = R_{i}\,.
\end{align*}

Fix $j\in\{1,\ldots,N\}$. 
We now define $\widehat\phi_j$, an antisymmetric function on oriented
edges, with support on edges between elements of $B_{\ell^{(j)}}$,
such that $\phi_j+\widehat\phi_j$ is a unit flow from $\1$
to the configuration $\10_j\1:=\{\h: \eta_{\ell^{(j)}} = 0,\ \eta_{x} =
1,\textrm{ for } x \neq \ell^{(j)}  \}$.
The East dynamics on the first $\ell^{(j)}-1$ sites is not influenced
by the spin on site $\ell^{(j)}$, so the structure of the electrical
network between configurations in $B_{\ell^{(j)}}$ is identical to
that on $\{\h : \h_x = 1 \textrm{ for } x \geq \ell^{(j)}\}$
 up to a factor of $p/q$ in the edge resistance (see Fig. \ref{fig:cube}).
We therefore define $\widehat\phi_j$ by `reversing'  $\phi_j$ on edges
which are equivalent under a projection onto the first $\ell^{(j)}-1$ sites,
\begin{align}
  \label{eq:phihat}
  \widehat \phi_j(\s,\h) :=
  \begin{cases}
    \phi_j(\h^{\ell^{(j)}},\s^{\ell^{(j)}})& \textrm{ if } \s,\h\in B_{\ell^{(j)}}\,,
    \\
    0 & \textrm{otherwise}\,.
  \end{cases}
\end{align}
Recall from \eqref{eq:flip} that $\h^{\ell^{(j)}}$ is the configuration $\h$  with the
spin at site $\ell^{(j)}$ flipped (in this case flipped to $1$). 
It is straightforward to check that $\phi_j+\widehat \phi_j$ defines a unit flow from $\1$ to the point $\10_j\1$, 
we postpone the proof until the end.

We now define $\widetilde\phi_j$ as a unit flow from $\10_j\1$ to $B_{\ell_{i+1}}$. 
Observe that $\ell_{i+1}-\ell^{(j)} = \ell^{(N-j)} < \ell_i$, see Fig. \ref{fig:lattice}. 
So we define $\widetilde\phi_j$ by keeping the vacancy at $\ell^{(j)}$
fixed and `shifting' the equilibrium flow from $\1$ to
$B_{\ell^{(N-j)}}$ (given by $\phi_{N-j}$) onto the lattice
$[\ell^{(j)}+1,\ell_{i+1}]$, where the constraint on the site $\ell^{(j)}+1$ is always satisfied because of the fixed vacancy.
Define $C_{\ell^{(j)}} :=\{\h : \h_{\ell^{(j)}} = 0 \textrm{ and }
\eta_x=1 \textrm{ for } x < \ell^{(j)} \}$ then
\begin{align*}
  \widetilde\phi_j(\s,\h) =
  \begin{cases}
    \phi_{N-j}(\widetilde{\s},\widetilde{\h}) & \textrm{if } \s,\h \in C_{\ell^{(j)}}\,, \\
    0 & \textrm{otherwise,}
  \end{cases}
\end{align*}
where the shift is given by
\begin{align*}
\widetilde{\h}_x =
  \begin{cases}
    \h_{x+\ell^{(j)}} & \textrm{if} \ x \leq \ell^{(N-j)}\,,\\
    1 & \textrm{otherwise.} \
  \end{cases}\\
\end{align*}

\begin{claim}
  \label{claim:unitFlow}
  For each $j\in \{1,\ldots,N\}$,  $\phi_j +
  \widehat\phi_j + \widetilde\phi_j$ is a unit flow from $\1$ to $B_{\ell_{i+1}}$.
\end{claim}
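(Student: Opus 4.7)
The plan is to reduce the statement to two intermediate facts: (a) $\phi_j+\widehat\phi_j$ is a unit flow from $\1$ to the singleton $\{\10_j\1\}$ (this is the verification already deferred in the text immediately after the definition of $\widehat\phi_j$), and (b) $\widetilde\phi_j$ is a unit flow from $\10_j\1$ to $B_{\ell_{i+1}}\cap C_{\ell^{(j)}}\subseteq B_{\ell_{i+1}}$. Antisymmetry of each of the three summands is straightforward: for $\phi_j$ and $\phi_{N-j}$ it is part of their definition as equilibrium flows; for $\widehat\phi_j$ it follows from \eqref{eq:phihat} together with the antisymmetry of $\phi_j$ and the fact that $\eta\mapsto\eta^{\ell^{(j)}}$ is an involution; and for $\widetilde\phi_j$ it follows because the shift $\sigma\mapsto\widetilde\sigma$ is a bijection sending East-neighbours in $C_{\ell^{(j)}}$ to East-neighbours in the smaller shifted chain.

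Granting (a) and (b), I would compute the divergence of $\phi_j+\widehat\phi_j+\widetilde\phi_j$ vertex by vertex. At $\1$ only $\phi_j$ contributes, giving $+1$, since $\1$ lies in neither $B_{\ell^{(j)}}$ nor $C_{\ell^{(j)}}$. At $\10_j\1$ the $-1$ coming from (a) is exactly cancelled by the $+1$ coming from (b). For $\sigma\in B_{\ell^{(j)}}\setminus\{\10_j\1\}$, fact (a) gives $0$ and $\widetilde\phi_j$ also contributes $0$, because any such $\sigma$ is not in $C_{\ell^{(j)}}$ (since $B_{\ell^{(j)}}\cap C_{\ell^{(j)}}=\{\10_j\1\}$). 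For $\sigma\in(B_{\ell_{i+1}}\cap C_{\ell^{(j)}})\setminus\{\10_j\1\}$, both $\phi_j$ and $\widehat\phi_j$ have vanishing divergence at $\sigma$ (because $\sigma\notin B_{\ell^{(j)}}$: indeed $\sigma_{\ell_{i+1}}=0$ and $\ell_{i+1}>\ell^{(j)}$), so the divergence of the sum equals that of $\widetilde\phi_j$, which is non-positive and sums to $-1$ over $B_{\ell_{i+1}}\cap C_{\ell^{(j)}}$. At every other vertex all three summands have vanishing divergence, yielding exactly the source-sink profile of a unit flow from $\1$ to $B_{\ell_{i+1}}$.

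To establish (a) I would exploit the fact that $\eta\mapsto\eta^{\ell^{(j)}}$ is a network isomorphism between the subnetwork induced by $B_{\ell^{(j)}}$ and the subnetwork $S:=\{\eta:\eta_x=1\;\forall x\geq\ell^{(j)}\}$, so that $\widehat\phi_j$ is by construction the image of the reversal of the portion of $\phi_j$ supported on edges corresponding to $S$; a vertex-by-vertex check then shows that the resulting divergences cancel those of $\phi_j$ on $B_{\ell^{(j)}}\setminus\{\10_j\1\}$ and add to $-1$ at $\10_j\1$. For (b), the map $\sigma\mapsto\widetilde\sigma$ is a bijection between $C_{\ell^{(j)}}$ and a subset of the state space that is isomorphic, as an East-electrical network, to the chain on $[1,\ell^{(N-j)}]$; the key observation is that the frozen vacancy at $\ell^{(j)}$ in any $\sigma\in C_{\ell^{(j)}}$ automatically satisfies the East constraint at site $\ell^{(j)}+1$, which thereby plays the role of the leftmost unconstrained site in the smaller chain, and under the bijection $\10_j\1\leftrightarrow\1$ while $B_{\ell_{i+1}}\cap C_{\ell^{(j)}}\leftrightarrow B_{\ell^{(N-j)}}$. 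The main obstacle is the careful bookkeeping for these two network-isomorphism statements, in particular checking that every East move internal to $C_{\ell^{(j)}}$ corresponds to a legal move in the shifted chain and vice versa, and that no legal flip in the support of $\widetilde\phi_j$ can remove the vacancy at $\ell^{(j)}$ (which would force the support to leave $C_{\ell^{(j)}}$).
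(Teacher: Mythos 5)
Your argument is correct and is essentially the paper's own proof in a different packaging: the paper likewise verifies the claim by a vertex-by-vertex divergence check based on the disjoint supports of $\phi_j$, $\widehat\phi_j$, $\widetilde\phi_j$, the reflection $\eta\mapsto\eta^{\ell^{(j)}}$ and the shift $\sigma\mapsto\widetilde\sigma$, and in particular it computes the cancellation at $\10_j\1$ exactly as in your gluing of the two unit flows (a) and (b). The one caveat, shared with the paper's write-up, is that when the ambient chain is strictly longer than $\ell_{i+1}$ the shift is not injective on $C_{\ell^{(j)}}$ as defined, so your fact (b) requires reading $C_{\ell^{(j)}}$ as also demanding $\sigma_x=1$ for all $x>\ell_{i+1}$ (equivalently, carrying out the construction on the chain $[1,\ell_{i+1}]$), together with the observation, implicit in the paper's disjoint-support assertion, that the equilibrium flow $\phi_{N-j}$ carries no current on edges leaving $\{\eta\,:\,\eta_x=1\ \forall x>\ell^{(N-j)}\}$, which holds because the equilibrium potential vanishes both on $B_{\ell^{(N-j)}}$ and on every configuration having a vacancy to the right of $\ell^{(N-j)}$.
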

In light of this claim, now define $\Theta$ as the normalised sum of the unit flows from $\1$ to
$B_{\ell_{i+1}}$ over $j\in\{1,\ldots,N\}$;
\begin{align*}
  \Theta = \frac{1}{N}\sum_{j=1}^N& \left(\phi_j +
  \widehat\phi_j + \widetilde\phi_j\right)\ \quad \textrm{and} \\
\Phi = \frac{1}{N}\sum_{j=1}^N \phi_j \,,\quad \widehat\Phi&= \frac{1}{N}\sum_{j=1}^N \widehat\phi_j \,, \quad\widetilde\Phi =\frac{1}{N}\sum_{j=1}^N \widetilde\phi_j\,.
\end{align*}
Since $B_{\ell^{(i)}}\cap B_{\ell^{(j)}} = \emptyset$, for $i\neq j$, 
$\{\widehat \phi_j\}_{j=1}^N$ have disjoint support, also
$C_{\ell^{(i)}}\cap C_{\ell^{(j)}}=\emptyset$ for $i\neq j$, so the
same holds for $\{\widetilde\phi_j\}_{j=1}^N$, therefore by iterating Lemma
\ref{lemma:1} (3) we have $\cE(\widehat\Phi) = \frac{1}{N^2}
\sum_{j}\cE(\widehat \phi_j) $ (and similarly for $\widetilde\phi_j$).
It follows, again from Lemma \ref{lemma:1}, that
\begin{align}
  \label{eq:total-flow}
  \cE(\Theta)&\leq 4\left( \cE(\Phi) + \cE(\widehat\Phi)\right) +2\cE(\widetilde\Phi) \nonumber\\
  &\leq 4\left(  \max_{j} \cE(\phi_j) + \frac{1}{N} \max_{j} \cE(\widehat \phi_j) \right) + \frac{2}{N} \max_j \cE(\widetilde\phi_j)\,.
\end{align}
Also for each $(\s,\h)$ with $\widehat \phi_j(\s,\h)>0$ there exists a unique edge $(\s^{\ell^{(j)}},\h^{\ell^{(j)}})$ such that $\phi_j(\s^{\ell^{(j)}},\h^{\ell^{(j)}})>0$ and $r(\s,\h)= p\, r(\s^{\ell^{(j)}},\h^{\ell^{(j)}})/q$ (similarly for $\widetilde\phi_j$), so
\begin{align*}
  \max_j\cE(\widehat\phi_j) &\leq \frac{p}{q}\max_j \cE(\phi_j)  \leq \frac{\cE(\phi_N)}{q} = \frac{R_i}{q}\,,\\
  \max_j\cE(\widetilde\phi_j) &\leq \frac{p}{q}\max_j \cE(\phi_j) \leq \frac{\cE(\phi_N)}{q}= \frac{R_i}{q}\,.
\end{align*}
The result now follows by combining the above bounds with \eqref{eq:total-flow} and
applying Thomson's Principle \eqref{eq:thom}, since $\Theta$ is a unit
flow from $\1$ to $B_{\ell_{i+1}}$ (combine Claim \ref{claim:unitFlow}
with Lemma \ref{lemma:1} part (1)).
\end{proof}

\begin{proof}[Proof of Claim \ref{claim:unitFlow}]
 Fix $j\in\{1,\ldots,N\}$, we show that $\theta = \phi_j + \widehat\phi_j + \widetilde\phi_j$ is a unit flow from $\1$ to $B_{\ell_{i+1}}$.
 Firstly observe that  $\phi_j$, $\widehat\phi_j$ and
 $\widetilde\phi_j$ have support on three 
 disjoint edge sets;
 \begin{align*}
   \begin{cases}
     \phi_j(\s,\h) > 0 &  \Rightarrow \quad (\s,\h) \in  E_1\,,\\
     \widehat\phi_j(\s,\h) > 0 & \Rightarrow \quad (\s,\h) \in E_2\,,\\ 
     \widetilde\phi_j(\s,\h) > 0 & \Rightarrow \quad (\s,\h) \in E_3\,,
   \end{cases}
 \end{align*}
 where, setting $\L=[1,\ell_r]$,
 \begin{align*}
      E_1 &= \{(\s,\h)\in \tilde E_\L : \s_x{=}\h_x{=}1\, \forall x>\ell^{(j)},\textrm{ and at most one of } \s,\h \textrm{ have a vacancy at } \ell^{(j)} \}\,,\\
      E_2 &= \{(\s,\h)\in \tilde E_\L : \s_x{=}\h_x{=}1\, \forall x>\ell^{(j)},\ \s_{\ell^{(j)}}{=}\h_{\ell^{(j)}} {=} 0\} \  \textrm{and finally,}\\
      E_3 &= \{(\s,\h)\in \tilde E_\L : \s_x{=}\h_x{=}1\, \forall
      x<\ell^{(j)},\ \s_{\ell^{(j)}}{=}\h_{\ell^{(j)}} {=} 0\}\,.
 \end{align*}
 $\div \theta(\1) = 1$ since there exists only a single edge connected
 to the state $\1$, and this edge belongs to $E_1$,
 then since $\phi_j$ is a unit flow from $\{\1\}$ we must have $\div \phi_j(\1) = 1$.
 We now check that $\div \theta(\s) = 0$ for $\s \in B_{\ell_{i+1}}^c\setminus \{\1\}$.
 If $\s \in \{\h : \h_x = 1 \,\forall x \geq \ell^{(j)}\}$, then $\div \theta(\s) = \div \phi_j(\s) = 0$.
 Now fix $\s \in B_{\ell^{(j)}}\setminus\{\10_j\1\}$, so that
 $\s_{\ell^{(j)}}=0$, $\s_x =1$ for $x>\ell^{(j)}$ and $\s_y=0$ for some $y<\ell^{(j)}$, then
 $\theta(\s,\s^x)>0$ implies $(\s,\s^x)\in E_1\cup E_2$.
 In particular $\theta(\s,\s^x)>0$ implies $x \leq \ell^{(j)}$, so
 \begin{align*}
   \div \theta(\s) &=  \sum_{\substack{x\leq\ell^{(j)} : \\\,
       \s_{x-1}=0}}\theta(\s,\s^x) =
   \phi_j(\s,\s^{\ell^{(j)}})\1_{\{\s_{\ell^{(j)}-1} = 0\}}(\s) +
   \sum_{\substack{x <\ell^{(j)} :\\\s_{x-1}=0}}\widehat\phi_j(\s,\s^x)  \\
   &= 
   -\phi_j(\s^{\ell^{(j)}},\s)\1_{\{\s_{\ell^{(j)}-1} = 0\}}(\s) -
   \sum_{\substack{x <\ell^{(j)} : \\\s_{x-1}=0}}\phi_j(\s^{\ell^{(j)}},(\s^x)^{\ell^{(j)}})\\
   &= -\div \phi_j\left(\s^{\ell^{(j)}}\right) = 0\,.
 \end{align*}
  Finally it is simple to check directly that the divergence on the
 configuration $\10_j\1$ is zero since there are only two
 configurations which are reachable from here under the East dynamics,
 by flipping the spin on site $1$ or on site $\ell^{(j)}+1$ 
 \begin{align*}
   \div \Theta(\10_j\1) &= \widetilde\phi_j(\10_j\1,\10_j\1^{\ell^{(j)}+1}) -
   \widehat\phi_j(\10_j\1^1,\10_j\1) \\
   &= \phi_{N-j}(\1,\1^1)- \phi_j(\1,\1^1)=1-1=0\,.
 \end{align*}
 The remaining relevant configurations are given by $\{\h : \h_x = 1 \textrm{
   for } x < \ell^{(j)},\ \h_{\ell^{(j)}} = 0\}$, but on this set the
 flow is simply given by the unit flow from $\1$ to $\{\h : \eta_{\ell_{i+1}} =
 0\}$ on the lattice $[\ell^{(j)}+1,\ell_{i+1}]$ with a zero boundary
 condition at $\ell^{(j)}$ and therefore zero divergence is inherited
 from $\phi_{N-j}$. Non-positive divergence on $B_{\ell_{i+1}}$ is
 also inherited from $\phi_{N-j}$.
\end{proof}

\begin{lemma}
  The following three results are used in the proof of the upper bound using flows:
  \label{lemma:1}
  \begin{enumerate}
  \item If $\{\theta_i\}_{i=1}^N$ are unit flows from $A$ to $B_i$ then $\frac{1}{N}\sum_{i=1}^N \theta_i$ is a unit flow from $A$ to $\bigcup_i B_i$ and
  \begin{align}
    \label{eq:EnergyCS}
    \cE\left(\frac{1}{N}\sum_{i=1}^N \theta_i \right) \leq  \max_{i\in\{1,\ldots,N\}} \cE(\theta_i)\ .
  \end{align}
  \item   For two flows $\theta_1$ and $\theta_2$,
    \begin{align}
      \label{eq:CS2}
      \cE(\theta_1 + \theta_2) \leq 2\left(\cE(\theta_1) + \cE(\theta_2) \right)\ .
    \end{align}
  \item   Suppose $\Theta = \theta_1 + \theta_2$ is a flow from $A$ to $B$ and $\theta_1(e) \neq 0$ implies $\theta_2(e) = 0$. Then
  \begin{align}
    \label{eq:indepFlow}
    \cE(\Theta) = \cE(\theta_1) + \cE(\theta_2)\,.
  \end{align}
  \end{enumerate}
\end{lemma}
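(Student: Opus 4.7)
The plan is to prove each of the three parts using elementary inequalities for sums of squares together with the fact that divergence is a linear operation on real-valued antisymmetric functions on oriented edges. The main observation throughout is that the energy $\cE(\theta) = \sum_{e\in E_\L} r(e)\theta(e)^2$ is a weighted $\ell^2$-norm on edges, so standard convexity inequalities apply directly.

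For part (1), linearity of divergence immediately gives that $\Theta_N := \frac{1}{N}\sum_{i=1}^N \theta_i$ is antisymmetric with $\div \Theta_N(\s) = 0$ for $\s \notin A \cup (\bigcup_i B_i)$, $\div \Theta_N(\s) \geq 0$ for $\s \in A$, and $\div \Theta_N(\s) \leq 0$ for $\s \in \bigcup_i B_i$ (using that each $\theta_i$ is a unit flow from $A$ to $B_i \subseteq \bigcup_i B_i$, and that $B_i \cap A = \emptyset$). The strength is $|\Theta_N| = \frac{1}{N}\sum_i |\theta_i| = 1$. For the energy bound I would apply Jensen's (or Cauchy--Schwarz) inequality edgewise:
\begin{equation*}
\Theta_N(e)^2 = \Bigl(\frac{1}{N}\sum_{i=1}^N \theta_i(e)\Bigr)^2 \leq \frac{1}{N}\sum_{i=1}^N \theta_i(e)^2,
\end{equation*}
then multiply by $r(e)$ and sum over $e\in E_\L$ to obtain $\cE(\Theta_N) \leq \frac{1}{N}\sum_i \cE(\theta_i) \leq \max_i \cE(\theta_i)$.

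For part (2), I would simply use $(a+b)^2 \leq 2(a^2+b^2)$ applied edgewise:
\begin{equation*}
\cE(\theta_1+\theta_2) = \sum_{e\in E_\L} r(e)\bigl(\theta_1(e)+\theta_2(e)\bigr)^2 \leq 2\sum_{e\in E_\L} r(e)\bigl(\theta_1(e)^2+\theta_2(e)^2\bigr) = 2\bigl(\cE(\theta_1)+\cE(\theta_2)\bigr).
\end{equation*}

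For part (3), the disjoint support hypothesis means that on each edge $e$ at most one of $\theta_1(e), \theta_2(e)$ is non-zero, hence the cross term in the expansion $(\theta_1(e)+\theta_2(e))^2 = \theta_1(e)^2 + 2\theta_1(e)\theta_2(e) + \theta_2(e)^2$ vanishes identically. Summing against $r(e)$ yields $\cE(\Theta) = \cE(\theta_1)+\cE(\theta_2)$. All three statements are routine and there is no genuine obstacle; the only mild care is to check divergence conditions in (1), which follow directly from linearity.
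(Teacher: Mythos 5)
Your proposal is correct and follows essentially the same route as the paper: linearity of the divergence for the flow properties in (1), Cauchy--Schwarz (convexity) applied edgewise for the energy bounds \eqref{eq:EnergyCS} and \eqref{eq:CS2}, and the vanishing of cross terms on edges (equivalently, splitting the sum in \eqref{eq:flowEnergy} over the two disjoint supports) for \eqref{eq:indepFlow}. The only difference is that you spell out the edgewise inequalities explicitly where the paper merely cites Cauchy--Schwarz, which is fine.
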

\begin{proof}
  Let $\Theta(x,y) := \frac{1}{N}\sum_{i=1}^N \theta_i(x,y)$,
  since each $\theta_i$ is antisymmetric and a linear combination of antisymmetric functions is antisymmetric so is $\Theta$.
  Zero divergence on $(A\cup B)^c$, non-negative divergence on $A$ and
  non-positive on $B$, and unit strength all follow from linearity of the divergence.
  So $\Theta$ is a unit flow from $A$ to $\bigcup_i B_i$.
  Inequality (\ref{eq:EnergyCS}) and \eqref{eq:CS2} both follow from
  simple applications of the Cauchy-Schwarz inequality.

  Part (3) is immediate from the definition of the energy, by decomposing the sum in \eqref{eq:flowEnergy} over two non intersecting sets, 
  one on which $\theta_1$ is non-zero and another on which $\theta_2$ is non-zero.
\end{proof}


\begin{bibdiv}
\begin{biblist}

\bib{Aldous-Fill}{book}{
author={Aldous, D.},
author={Fill, J.},
title={Reversible Markov chains and random walks on graphs},
publisher={Available online at http://www.stat.berkeley.edu/~aldous/RWG/book.html},
}

\bib{Aldous}{article}{
      author={Aldous, D.},
      author={Diaconis, P.},
       title={The asymmetric one-dimensional constrained {I}sing model:
  rigorous results},
        date={2002},
     journal={J. Stat. Phys.},
      volume={107},
      number={5-6},
       pages={945\ndash 975},
}

\bib{Beltran}{article}{
  author={Beltr{\'a}n, J.},
  author={Landim, C.},
  title={A Martingale approach to metastability},
  date={2013}
  eprint={arXiv:1305.597 [math.PR]}
}

\bib{Berthier}{article}{
      author={Berthier, Ludovic},
       title={Dynamic heterogeneity in amorphous materials},
        date={2011},
     journal={Physics},
      volume={4},
       pages={42},
}

\bib{Bhatnagar:2007tr}{article}{
      author={Bhatnagar, N.},
      author={Caputo, P.},
      author={Tetali, P.},
      author={Vigoda, E.},
      title={{Analysis of top-swap shuffling for genome rearrangements}},
      date={2007},
      volume={17},
      number={4},
      pages={1424\ndash 1445},
      journal={Ann. Appl. Probab.},
}

\bib{Bovier}{article}{
      author={Bovier, Anton},
       title={Metastability},
    subtitle={Methods of contemporary mathematical statistical physics},
        date={2009},
     journal={Lecture Notes in Mathematics},
      volume={1970},
       pages={1\ndash 45},
         url={http://dx.doi.org/10.1007/978-3-540-92796-9_4},
}

\bib{BovierNew}{article}{
      author={Bovier, Anton},
       title={Metastability: a potential theoretic approach},
     subtitle={In: Proceedings of the ICM, Madrid},
        date={2006},
       pages={499-518},
       journal={European Mathematical Society},
}

\bib{Cancrini:2006uu}{article}{
      author={Cancrini, N.},
      author={Martinelli, F.},
      author={Roberto, C.},
      author={Toninelli, C.},
       title={{Relaxation times of kinetically constrained spin models with
  glassy dynamics}},
        date={2007},
     journal={J. Stat. Mech-Theory E.},
      volume={2007},
      number={03},
       pages={L03001},
}

\bib{CMRT}{article}{
      author={Cancrini, N.},
      author={Martinelli, F.},
      author={Roberto, C.},
      author={Toninelli, C.},
       title={Kinetically constrained spin models},
        date={2008},
     journal={Probab. Theory Rel.},
      volume={140},
      number={3-4},
       pages={459\ndash 504},
  url={http://www.ams.org/mathscinet/search/publications.html?pg1=MR&s1=MR2365481},
}

\bib{CMST}{article}{
      author={Cancrini, N.},
      author={Martinelli, F.},
      author={Schonmann, R.},
      author={Toninelli, C.},
       title={Facilitated oriented spin models: some non equilibrium results},
        date={2010},
        ISSN={0022-4715},
     journal={J. Stat. Phys.},
      volume={138},
      number={6},
       pages={1109\ndash 1123},
         url={http://dx.doi.org/10.1007/s10955-010-9923-x},
}

\bib{PietroCaputo:2012vl}{article}{
      author={Caputo, Pietro},
      author={Lubetzky, Eyal},
      author={Martinelli, Fabio},
      author={Sly, Allan},
      author={Toninelli, Fabio~Lucio},
       title={{Dynamics of 2+1 dimensional SOS surfaces above a wall: slow
  mixing induced by entropic repulsion}},
        date={2012-05},
      eprint ={arXiv:1205.6884 [math.PR]},
}

\bib{CDG}{article}{
      author={Chung, F.},
      author={Diaconis, P.},
      author={Graham, R.},
       title={Combinatorics for the east model},
        date={2001},
     journal={Adv. in Appl. Math.},
      volume={27},
      number={1},
       pages={192\ndash 206},
  url={http://www.ams.org/mathscinet/search/publications.html?pg1=MR&s1=MR1835679},
}

\bib{Derrida}{book}{
      author={Derrida, B.},
       title={Coarsening phenomena in one dimension},
      series={Lecture Notes in Physics},
   publisher={Springer},
     address={Berlin},
        date={1995},
      volume={461},
}

\bib{Durrett}{article}{
      author={Durrett, R.},
       title={Lecture notes on particle systems and percolation},
        date={1995},
     journal={Lecture Notes in Mathematics},
      number={1608},
}

\bib{FMRT-cmp}{article}{
      author={Faggionato, A.},
      author={Martinelli, F.},
      author={Roberto, C.},
      author={Toninelli, C.},
       title={Aging through hierarchical coalescence in the east model},
        date={2012},
        ISSN={0010-3616},
     journal={Commun. Math. Phys.},
      volume={309},
       pages={459\ndash 495},
         url={http://dx.doi.org/10.1007/s00220-011-1376-9},
}

\bib{East-Rassegna}{article}{
      author={Faggionato, A.},
      author={Martinelli, F.},
      author={Roberto, C.},
      author={Toninelli, C.},
       title={The east model: recent results and new progresses},
        date={2012},
     eprint={	arXiv:1205.1607 [math.PR]},
}

\bib{FMRT}{article}{
      author={Faggionato, A.},
      author={Martinelli, F.},
      author={Roberto, C.},
      author={Toninelli, C.},
       title={Universality in one dimensional hierarchical coalescence
  processes},
        date={2012},
     journal={Ann. Probab.},
      volume={40},
      number={4},
       pages={1377\ndash 1435},
}

\bib{GarrahanSollichToninelli}{article}{
      author={Garrahan, J.P.},
      author={Sollich, P.},
      author={Toninelli, C.},
       title={Kinetically constrained models},
        date={2011},
     journal={in "Dynamical heterogeneities in glasses, colloids, and granular
  media", Oxford Univ.Press, Eds.: L. Berthier, G. Biroli, J-P Bouchaud, L.
  Cipelletti and W. van Saarloos. Preprint arXiv:1009.6113},
}

\bib{G09}{article}{
      author={Gaudilliere, A.},
       title={Condenser physics applied to {M}arkov chains - a brief
  introduction to potential theory},
        date={2009},
      eprint={arXiv:0901.3053 [math.PR]},
}

\bib{JACKLE}{article}{
      author={J\"{a}ckle, J.},
      author={Eisinger, S.},
       title={A hierarchically constrained kinetic {I}sing model},
        date={1991},
     journal={Z. Phys. B: Condens. Matter},
      volume={84},
      number={1},
       pages={115\ndash 124},
}

\bib{Levin2008}{book}{
      author={Levin, D.~A.},
      author={Peres, Y.},
      author={Wilmer, E.~L.},
       title={{M}arkov chains and mixing times},
   publisher={American Mathematical Society},
        date={2008},
}

\bib{Liggett1}{book}{
      author={Liggett, T.M.},
       title={Interacting particle systems},
      series={Grundlehren der Mathematischen Wissenschaften [Fundamental
  Principles of Mathematical Sciences]},
   publisher={Springer-Verlag},
     address={New York},
        date={1985},
      volume={276},
        ISBN={0-387-96069-4},
}

\bib{Liggett2}{book}{
      author={Liggett, T.M.},
       title={Stochastic interacting systems: contact, voter and exclusion
  processes},
      series={Grundlehren der Mathematischen Wissenschaften [Fundamental
  Principles of Mathematical Sciences]},
   publisher={Springer-Verlag},
     address={Berlin},
        date={1999},
      volume={324},
        ISBN={3-540-65995-1},
}

\bib{ProbOnTreesNets}{book}{
      author={Lyons, R.},
      author={Peres, Y.},
       title={Probability on trees and networks},
   publisher={Cambridge University Press},
        date={2012},
        note={In preparation. Current version available at {\tt
  http://mypage.iu.edu/\string~rdlyons/}.},
}

\bib{O}{article}{
      author={Oliveira, Roberto},
       title={Mixing and hitting times for finite {M}arkov chains},
        date={2012},
     journal={Electron. J. Probab.},
      volume={17},
}

\bib{Olivieri-Vares}{book}{
      author={Olivieri, E.},
      author={Vares, M.E.},
       title={Large deviations and metastability},
      series={Encyclopedia of Mathematics and its Applications},
   publisher={Cambridge University Press},
     address={Cambridge},
        date={2005},
      volume={100},
        ISBN={0-521-59163-5},
         url={http://dx.doi.org/10.1017/CBO9780511543272},
}

\bib{Peres-Sly}{article}{
      author={Peres, Yuval},
      author={Sly, Allan},
       title={{Mixing of the upper triangular matrix walk}},
        date={2012},
          journal={Probab. Theory Rel.},
}

\bib{PS}{article}{
      author={Peres, Yuval},
      author={Sousi, Perla},
       title={Mixing times are hitting times of large sets},
        date={2011},
      eprint={arXiv:1108.0133 [math.PR]},
}

\bib{Ritort}{article}{
      author={Ritort, F.},
      author={Sollich, P.},
       title={Glassy dynamics of kinetically constrained models},
        date={2003},
     journal={Advances in Physics},
      volume={52},
      number={4},
       pages={219\ndash 342},
}

\bib{Saloff}{book}{
      author={Saloff-Coste, Laurent},
      editor={Bernard, Pierre},
       title={Lectures on finite {M}arkov chains},
      series={Lecture Notes in Mathematics},
   publisher={Springer Berlin Heidelberg},
        date={1997},
      volume={1665},
        ISBN={978-3-540-63190-3},
         url={http://dx.doi.org/10.1007/BFb0092621},
}

\bib{SE2}{article}{
      author={Sollich, P.},
      author={Evans, M.R.},
       title={Glassy time-scale divergence and anomalous coarsening in a
  kinetically constrained spin chain},
        date={1999},
     journal={Phys. Rev. Lett},
      volume={83},
       pages={3238\ndash 3241},
}

\bib{SE1}{article}{
      author={Sollich, P.},
      author={Evans, M.R.},
       title={Glassy dynamics in the asymmetrically constrained kinetic {I}sing
  chain},
        date={2003},
     journal={Phys. Rev. E},
       pages={031504},
}

\bib{Spiral}{article}{
      author={Toninelli, Cristina},
      author={Biroli, Giulio},
       title={{A new class of cellular automata with a discontinuous glass
  transition}},
        date={2008},
     journal={J. Stat. Phys.},
      volume={130},
      number={1},
       pages={83\ndash 112},
}

\bib{Valiant:2004cb}{article}{
      author={Valiant, P.},
       title={{Linear bounds on the North-East model and higher-dimensional
  analogs}},
        date={2004-07},
     journal={Advances in Applied Mathematics},
      volume={33},
      number={1},
       pages={40\ndash 50},
}

\end{biblist}
\end{bibdiv}


\end{document}